\numberwithin{equation}{section}
\theoremstyle{plain}
\newtheorem{thm}{\protect\theoremname}[section]
\theoremstyle{plain}
\newtheorem{prop}{\protect\propositionname}[section]
\theoremstyle{plain}
\newtheorem{cor}{\protect\corollaryname}[section]
\theoremstyle{definition}
\newtheorem{defn}{\protect\definitionname}[section]
\theoremstyle{remark}
\newtheorem*{rem*}{\protect\remarkname}
\providecommand{\corollaryname}{Corollary}
\providecommand{\definitionname}{Definition}
\providecommand{\propositionname}{Proposition}
\providecommand{\remarkname}{Remark}
\providecommand{\theoremname}{Theorem}
\begin{document}
\title{Junction conditions in a general field theory}
\author{Bence Racskó\thanks{racsko@titan.physx.u-szeged.hu}}
\maketitle
\begin{abstract}
It is well-known in the modified gravity scene that the calculation
of junction conditions in certain complicated theories leads to ambiguities
and conflicts between the various formulations. This paper introduces
a general framework to compute junction conditions in any reasonable
classical field theory and analyzes their properties. We prove that
in any variational field theory, it is possible to define unambiguous
and mathematically well-defined junction conditions either by interpreting
the Euler-Lagrange differential equation as a distribution or as the
extremals of a variational functional and these two coincide. We provide
an example calculation which highlights why ambiguities in the existing
formalisms have arisen, essentially due to incorrect usage of distributions.
Relations between junction conditions, the boundary value problem
of variational principles and Gibbons--Hawking--York-like surface
terms are examined. The methods presented herein relies on the use
of coordinates adapted to represent the junction surface as a leaf
in a foliation and a technique for reducing the order of Lagrangians
to the lowest possible in the foliation parameter. We expect that
the reduction theorem can generate independent interest from the rest
of the topics considered in the paper.
\end{abstract}

\section{Introduction}

An important problem in classical field theory is the determination
of junction conditions, which are relations that specify when two
solutions of the field equations can be glued together along a common
boundary surface. In electromagnetism, these involve the usual jump
conditions on the electric and magnetic fields at an interface and
is textbook material \cite{J}. For general relativity (GR) there
is a rich literature for the junction conditions. The definitive formulation
has been given by Israel \cite{I} although there are preceding works
on the matter by eg. Lanczos \cite{L}, Darmois \cite{Dar}, Synge
and O'Brien \cite{OS} and Lichnerowicz \cite{Lic}. Israel's formulation
is sensitive to the metric signature of the surface and is valid only
for the timelike and spacelike case. His work has been generalized
to null signature by Barrabes and Israel \cite{BI} and to completely
arbitrary (including non-constant) signature by Mars and Senovilla
\cite{MS} among many other papers on the subject (see also \cite{CD,Poi,Mar,Sen}).

A rich family of increasingly complicated classical field theories
is provided by the modified gravity scene. Junction conditions of
various generalities have been calculated for a number of modified
theories such as Brans--Dicke theory \cite{BB}, Gauss-{}-Bonnet
gravity \cite{Dav}, higher derivative theories \cite{RSV}, and Horndeski
theory \cite{PS}. These are only a selection of examples with no
claim of completeness. There is a considerable interest in the computation
of junction conditions for a large number of macroscopic phenomena
can be described this way such as gravitational collapse \cite{OpSn},
stellar boundaries \cite{PoiBook}, cosmological phase transitions
\cite{PoiBook}, impulsive electromagnetic and gravitational signals
\cite{BH}.

Despite the interest in junction conditions, calculations in modified
theories of gravity have been performed in an ad-hoc manner with no
general theory behind junction conditions. Accordingly, a number of
ambiguities have been observed when junction conditions are computed
in different theories and surprisingly the origins and resolutions
of these ambiguities have not been investigated so far to any significant
detail. While for a given theory there are a number of specific ways
to determine the junction conditions, there are two methods that do
not rely at least conceptionally on the specific properties of the
given field theory.
\begin{enumerate}
\item One method is to assume some reasonable regularity conditions on the
field variables (such as smooth everywhere except at the junction
surface, and $C^{k}$ with bounded left and right $k+1$th derivatives
at the surface, for some $k\in\mathbb{N}$), then interpret any derivative
that would not exist classically as being distributional. Then it
is attempted to make sense of the field equations as a relation between
distributions. For more complicated theories the ambiguities manifest
in the appearance of invalid operations such as the product of two
singular distributions or the product of a singular distribution with
a discontinuous function. The Schwartz impossibility theorem \cite{Sch}
states that within the usual theory of distributions such products
cannot be defined in any consistent manner. Products of Dirac deltas
have a tendency to cancel\footnote{In the formalism we present in this paper this can be seen as a consequence
of all Euler-Lagrange equations of even order being affine functions
of the highest ``evolutionary derivative'' of the field variables.}, but since the products themselves are ill-defined, it is not even
clear whether such cancellations lead to valid results even if the
end result is free of invalid expressions.\\
\\
More problematic are products of singular distributions with discontinuous
functions since these have a general tendency of appearing and is
not immediately clear how to deal with them. A common approach is
to interpret the value of a discontinuous function at the junction
surface as the mean value of its left and right limits. This ad-hoc
interpretation is however enforced by hand and cannot be obtained
from valid mathematical manipulations. In Section \ref{subsec:A-cautionary-example}
we give an explicit example where this approach leads to wrong results.
This difficulity has been recognized in eg. \cite{Dav} where the
author speaks of regularizing Delta functions in appropriate ways
to ensure the correct result. In the formalism we develop here, no
illegitimate operations between distributions ever appear and there
is not need for any regularization of Delta functions.
\item The second method is variational in nature. Once again after some
reasonable regularity conditions are imposed on the field variables,
the variational principle for the field is interpreted in this regularity
class. This usually implies that the class of fields we vary in are
less regular at the junction surface than the Euler--Lagrange equations
would require, therefore during the variation process there are boundary
terms appearing at the interior boundary surface which provide the
junction conditions. In the calculus of variations such solutions
are often called \emph{broken extremals}. The problem here is that
the junction conditions obtained via this approach seem to depend
on the specific properties of the Lagrangian from which the field
equations are derived. If the Lagrangian has ``too high order''
when compared to the field equations (the second order Einstein-Hilbert
Lagrangian for GR is an example of that), one obtains the correct
junction conditions only if the action functional is extended with
some boundary terms at the junction surface (for GR with timelike
or spacelike junction surface, the well-known Gibbons--Hawking--York
(GHY) term \cite{Y,GH} provides an appropriate surface term).\\
\\
The reasons why such surface terms are needed can be understood by
analogy to be related to the well-posedness \cite{DH} of the variational
principle at the outer (rather than inner) boundaries, but there is
no general existence theorem for such boundary terms for a generic
Lagrangian field theory and the precise relationship between these
boundary terms and the junction conditions are poorly understood.
Nonetheless, for most modified theories of gravity, appropriate surface
terms can be found by direct calculation and the variational approach
to the junction conditions does avoid the ambiguous aspects of the
distributional formalism. This method has been used by Davis \cite{Dav}
to derive the junction conditions for Gauss--Bonnet gravity and by
Padilla and Sivanesan for the Horndeski theory \cite{PS}.
\end{enumerate}
There are two main motivations behind this paper. First of all, it
is intrinsically useful to give a general analysis of junction conditions
where the field theory is as unspecified as it is possible to be.
This way authors investigating novel classical field theories (primarily,
modified theories of gravity) are given an algorithmic prescription
for calculating the junction conditions. Secondly, we resolve the
ambiguities of the two main methods described above. Our method relies
in interpreting the junction surface as a leaf in a foliation of the
parameter space of the field theory. This way we obtain a splitting
of the variables where one of them, the \emph{evolutionary parameter}
describes the evolution of the fields transversal to the junction
surface. If the field equations are variational and involve at most
$s$th derivatives with respect to the evolutionary parameter, the
Lagrangian with the lowest theoretically possible order of dependence
on the evolutionary derivatives of the field involve $\lceil s/2\rceil$th
derivatives. Such Lagrangians are said to have \emph{minimal evolutionary
order}. Although it is known  that Lagrangians of minimal \emph{total}
order do not always exist, we prove that every variational field equation
has a Lagrangian of minimal evolutionary order. Such Lagrangians will
play a central role in our analysis.

The crucial assumptions are that the field theory be variational and
that the evolutionary order of the field equations be even. The odd
order equations are inherently degenerate. In Section \ref{subsec:Systems-of-odd}
we also extend the formalism to odd order equations at least formally,
but we can say much less about this case than that of the even order
equations. The particular form of the Euler--Lagrange equations when
calculated from a minimal evolutionary order Lagrangian can be used
to deduce that there are nontrivial\footnote{Here nontrivial meaning that under these regularity conditions the
Euler--Lagrange equations do not exist in the usual sense, but they
do as singular distributions.} regularity conditions under which the Euler--Lagrange expressions
are well-defined distributions and the singular part of the Euler-Lagrange
equations represents the junction conditions at the surface. Analogously,
the equation of broken extremals for the variational problem determined
by \emph{any} Lagrangian of minimal evolutionary order leads to the
same junction conditions we obtain distributionally.

This solves essentially all ambiguities presented before as we prove
the well-posedness of the distributional formalism, that the variational
and the distributional methods give the same result and also that
appropriate boundary terms for the variational principle always exist
at least locally (in a suitable sense of locality) with the junction
conditions being independent of the choice of ``appropriate'' boundary
term. The latter follows from the fact that the minimal evolutionary
order Lagrangian necessarily differs from a non-minimal Lagrangian
by a total derivative term which appears as a surface term when integrated.
This surface term is precisely a GHY-like boundary term. The use of
reduced order Lagrangians over surface terms increases direct computability
of the junction conditions. As we will see, one only has to determine
all canonical momenta of the minimal evolutionary order Lagrangian
and the junction conditions can be expressed in terms of these momenta.
This is generally simpler to do in practice than to calculate the
total variation of both the non-minimal order Lagrangian and the surface
term and perform the necessary integration by parts at the boundary
to reduce the total boundary term to a form from which the junction
conditions can be read off.

In order to prove the theorem on the existence of minimal evolutionary
order Lagrangians, we need a number of advanced results from the formal
calculus of variations. These are not in general widely known and
they are usually considered as part of some abstract and highly technical
framework such as the variational bicomplex \cite{Ta,Tul,An} or the
Vinogradov $\mathcal{C}$-spectral sequence \cite{Vin,Vin2}. For
this reason a review of these operators and results including ``classical''
proofs is given in the Appendices at the end of the paper.

\section{Junction conditions in a field theory\label{sec:Junction-conditions-in}}

\subsection{Setup and overview\label{subsec:Setup-and-overview}}

We first define what we mean under a \emph{classical field theory}.
It consists of a coordinate manifold $M$ of dimension $n+1$ with
standard coordinates $(x^{\mu})_{\mu=1}^{n+1}=(x^{1},\dots,x^{n+1})$,
a dynamical field $(q^{i})_{i=1}^{m}=(q^{1},\dots,q^{m})$ of $m$
components, and a (possibly nonlinear) differential operator
\begin{equation}
E_{i}[q]=E_{i}(x,q,q_{(1)},\dots,q_{(s)})
\end{equation}
of order $s$ (here $q_{(k)}=(q_{,\mu_{1}...\mu_{k}}^{i})$ stands
for all $k$th order partial derivatives of the field). Then the \emph{field
equations} are given by $E_{i}[q]=0$. We also allow for the presence
of sources in that if the $\rho_{i}(x)$ are $m$ arbitrary functions
of the coordinates, the field equations may be modified to
\begin{equation}
E_{i}[q]=\rho_{i}.
\end{equation}
For some theories, the sources may couple in a more complicated manner,
but we shall make the above assumption on the form of the coupling,
which is the most common in physics. It should also be noted that
the sources $\rho_{i}$ may arise from the dynamics of some other
field, but they can also be specified ``externally''. The origin
of the sources are largely irrelevant assuming that we focus only
on the dynamics of the field $q$. If the field theory is Lagrangian
and has gauge symmetries (as in GR or Maxwell/Yang-Mills theory),
then the operators $E_{i}[q]$ satisfy certain off-shell differential
relations (Noether identities) \cite{HT}, symbolically written as
$\mathcal{D}[E]=0$, then the sources $\rho_{i}$ are admissible if
and only if they also satisfy the relations $\mathcal{D}[\rho]=0$.
If the sources arise from some coupled matter action that shares the
gauge symmetries of the field, the resulting sources will automatically
satisfy the Noether identities. See also Section \ref{subsec:Initial-value-constraints}
for consequences for the junction formalism.

A remark on the notation is that we shall use $q$ to denote ``generic''
dynamical variables. When specific field theories are considered,
then the appropriate common notation will be used in place of $q$
(eg. $g_{\mu\nu}$ for a metric tensor or $\phi$ for a scalar field).

The above definition of a classical field theory is very restrictive.
For most field theories in physics, especially when coupled to gravity,
$M$ is a more general differentiable manifold and the $q$ fields
are sections of some fibered manifold (almost always a locally trivial
fibre bundle with structure group) over $M$. However if fibered coordinates
are introduced in the field bundle, then \emph{locally} (in the domain
of those coordinates), the field theory can be cast into the above
described coordinate form. The method we employ here for defining
and calculating junction conditions in a classical field theory depends
intrinsically on the use of adapted coordinates, therefore we dispense
with the pretense of using global geometry from the beginning and
utilize a coordinate-based formulation. Hence as long as we understand
that the formalism presented here is \emph{local}, we lose no generality
by employing the above definition of a classical field theory.

Our goal is to derive \emph{junction conditions} and \emph{thin shell
equations} for the field theory. We thus proceed by defining what
we mean under these two terms. Suppose that the parameter space $M$
is cut into two disjoint pieces by a surface\footnote{We use the term \emph{surface} for a codimension $1$ submanifold.
The term ``hypersurface'' is usually more common for this. Dropping
the prefix ``hyper-'' is mainly for brevity, but it is also appropriate
in the following sense. In three dimensional space, a surface has
a local separation property, i.e. for any (interior) point on the
suface, the point has a neighborhood (in the embedding space) which
is cut into two disjoint pieces by the surface. For any pair of points
in the two pieces, it is not possible to connect them with a continuous
curve without intersecting the surface or leaving the neighborhood.
In a general space of dimension eg. $n+1$, the codimension $1$ submanifolds
have this property. Hence they are the submanifolds analogous to surfaces
in $3$-space.} $\Sigma$ with the two disjoint regions being $M_{+}$ and $M_{-}$.
Suppose that $q_{+}$ is a field defined in $M_{+}$, $q_{-}$ a field
defined in $M_{-}$ with sources $\rho^{+}$ and $\rho^{-}$ respectively
and obeying the field equations
\begin{equation}
E_{i}[q_{\pm}]=\rho_{i}^{\pm}.
\end{equation}
Since the $E_{i}$ are local operators, it makes sense to let them
act on the $q_{\pm}$ which are defined only in $M_{\pm}$ respectively.
The $q_{\pm}$ are assumed smooth\footnote{Smooth usually means class $C^{\infty}$. We will allow this term
to encompass cases where the functions are $C^{k}$ for some $k$
that is sufficiently large such that we ``never run out of derivatives''.} in $M_{\pm}$ respectively. The \emph{soldering} of the field $q$
is defined by
\begin{equation}
\overline{q^{i}}(x)=\begin{cases}
q_{+}^{i}(x) & x\in M_{+}\\
q_{-}^{i}(x) & x\in M_{-}\\
\text{undefined} & x\in\Sigma
\end{cases},
\end{equation}
which can be also written as
\begin{equation}
\overline{q^{i}}(x)=q_{+}^{i}(x)\Theta_{\Sigma}(x)+q_{-}^{i}\left(1-\Theta_{\Sigma}(x)\right),
\end{equation}
where $\Theta_{\Sigma}(x)$ is the Heaviside step function whose value
is $1$ in $M_{+}$, $0$ in $M_{-}$ and undefined (though often
taken to be $1/2$) on $\Sigma$.

\emph{Junction conditions} for the field theory are then relations
which have to be imposed on the fields $q_{+}$ and $q_{-}$ at the
boundary surface $\Sigma$ such that the soldered field $\overline{q}$
remains a solution of the field equations $E[q]=0$ or $E[q]=\rho$
in some appropriate \emph{weak sense}. We also have the possibility
of introducing \emph{singular sources} concentrated on the surface
$\Sigma$ and solve the problem in the presence of these singular
sources. The surface $\Sigma$ is then called a \emph{thin shell}.

In analytical form, the problem of determining junction conditions
thus reduces to the question: \emph{what are the least restrictive
regularity assumptions we can put on the field $q$ at $\Sigma$ such
that the differential operator $E[q]$ remains well-defined on $q$?}

A differential operator $E_{i}[q]$ is said to be \emph{variational}
if there is a Lagrange function $L[q]=L(x,q,\dots,q_{(r)})$ of some
order $r$ such that
\begin{equation}
E_{i}[q]=\frac{\delta L}{\delta q^{i}}[q]
\end{equation}
are the Euler-Lagrange expressions of the Lagrangian. The Euler-Lagrange
or variational derivative $\delta/\delta q^{i}$ stands for
\begin{align}
\frac{\delta L}{\delta q^{i}} & =\frac{\partial L}{\partial q^{i}}-d_{\mu}\frac{\partial L}{\partial q_{,\mu}^{i}}+\dots+\left(-1\right)^{r}d_{\mu_{1}}\dots d_{\mu_{r}}\frac{\partial L}{\partial q_{,\mu_{1}...\mu_{r}}^{i}}\nonumber \\
 & =\sum_{k=0}^{r}\left(-1\right)^{k}d_{\mu_{1}}\dots d_{\mu_{k}}\frac{\partial L}{\partial q_{,\mu_{1}...\mu_{k}}^{i}},
\end{align}
where the $d_{\mu}=d/dx^{\mu}$ are \emph{total} derivatives with
respect to $x^{\mu}$. It is well-known (and clear from the analytic
expression) that if $L[q]$ is an order $r$ differential operator
in $q$, then $\delta L/\delta q^{i}$ is \emph{at most} order $2r$.
However if the Lagrangian function has some particular degeneracies
in its functional structure, then it is possible for the Euler-Lagrange
expressions to have order strictly less than $2r$.

Note that as the variables $q_{,\mu_{1}...\mu_{k}}^{i}$ are symmetric
in the lower indices, they are independent only for eg. $\mu_{1}\le\mu_{2}\le\dots\le\mu_{k}$.
It would be inconvenient to order the indices in sums, therefore we
observe the convention that we treat the $q_{,\mu_{1}...\mu_{k}}^{i}$
as if they were independent for all orders of the indices and define
derivatives with respect to the $q_{,\mu_{1}...\mu_{k}}^{i}$ symmetrically
such that they obey the chain rule as well as the symmetrized independence
condition
\begin{equation}
\frac{\partial q_{,\nu_{1}...\nu_{k}}^{i}}{\partial q_{,\mu_{1}...\mu_{l}}^{j}}=\delta_{k}^{l}\delta_{j}^{i}\delta_{\nu_{1}}^{(\mu_{1}}\dots\delta_{\nu_{k}}^{\mu_{k})}.
\end{equation}

Given a surface $\Sigma$ in $M$ along which we are to construct
junction conditions, let us introduce adapted coordinates as follows.
The last ($n+1$th) coordinate $x^{n+1}$ is named $z$ and it is
defined such that the equation of $\Sigma$ is $z=0$. This is always
possible, for example by constructing a local foliation with $\Sigma$
as an initial leaf along the flow of a vector field transversal to
$\Sigma$. The rest of the coordinates $(y^{a})_{a=1}^{n}$ are then
parameters on the surfaces $z=\mathrm{const}$, Lie transported to
a neighborhood of $\Sigma$ via the foliation vector field. The coordinate
$z$ is then called the \emph{evolutionary parameter} or \emph{variable}
and the rest of the coordinates $y^{a}$ are the \emph{instantaneous
variables}. Derivatives with respect to $z$ are called \emph{evolutionary
derivatives} and derivatives with respect to $y^{a}$ are \emph{instantaneous
derivatives}.

A useful notational and terminological device is the following. We
write for $k\ge0$
\begin{equation}
q_{[k]}^{i}:=\frac{d^{k}q^{i}}{dz^{k}}
\end{equation}
for the $k$th evolutionary derivative and also (for $l\ge0$)
\begin{equation}
q_{[k],a_{1}...a_{l}}^{i}:=d_{a_{1}}\dots d_{a_{l}}\frac{d^{k}q^{i}}{dz^{k}}.
\end{equation}
Then we formally consider $q_{[0]}^{i}$ to be different from $q^{i}$
in the sense that $q^{i}$ is a function of all the variables $y^{1},\dots,y^{n},z$
on equal footing, whereas $q_{[0]}^{i}$ , while still depends on
all the variables, is considered to be a field variable only in the
instantaneous variables $y^{1},\dots,y^{n}$ \emph{with $z$ being
a background parameter}. Hence the fields $q_{[0]}^{i}$, $q_{[1]}^{i}$,
etc. are all independent fields as functions of the $y^{a}$. A Lagrangian
$L[q]$ can then be written in the form
\begin{equation}
L[q]=L[q_{[0]},\dots,q_{[r]}],
\end{equation}
where this notation indicates that $L$ can depend on $q_{[0]}^{i}$
as well as a finite number of its instantaneous derivatives $q_{[0],a_{1}...a_{l}}^{i}$,
on $q_{[1]}^{i}$ and on a finite number of its instantaneous derivatives
$q_{[1],a_{1}...a_{l}}^{i}$ and so on with the highest evolutionary
derivative appearing in $L$ being the $r$th. However $L$ may still
depend on $q_{[r],a_{1}...a_{l}}^{i}$ for some finite $l$. If $L$
has the above functional form, then it is said to have \emph{evolutionary
order $r$}. This terminology is extended to all objects of similar
type as $L$, such as the differential operators
\begin{equation}
E_{i}[q]=E_{i}[q_{[0]},\dots,q_{[s]}]
\end{equation}
that specify the field equations.

\subsection{Minimal order Lagrangians}

We first make some remarks on the covariant (non-split) case. If $L[q]$
is a Lagrangian of order $r$, then the Euler--Lagrange expressions
$E_{i}=\delta L/\delta q^{i}$ are order $s\le2r$. Therefore, if
we have a system $E_{i}[q]$ of differential operators of order $s$
which we know to be variational, the lowest theoretically possible
order of any Lagrangian for the $E_{i}[q]$ is $r=\lceil s/2\rceil$,
i.e. $r=s/2$ when $s$ is even and $r=(s+1)/2$ when $s$ is odd.
Any Lagrangian of order $\lceil s/2\rceil$ whose Euler--Lagrange
expressions are the $E_{i}$ is known as a \emph{minimal order Lagrangian}
for the $E_{i}$.

Do minimal order Lagrangians always exist? Unfortunately no. A positive
example is the Einstein field equations for GR, where although the
usual Einstein-Hilbert Lagrangian is second order, there are (see
\cite{Rac} for examples) minimal order (first order) Lagrangians
for the Einstein field equations as well. The status of the Einstein-Hilbert
Lagrangian as the ``default'' is because it is diffeomorphism-invariant.
A first order Lagrangian for a metric tensor is well-known to be \emph{never}
diffeomorphism-invariant\footnote{There are apparent exceptions provided as global first order Lagrangians
for a single metric as a dynamical variable, for example by using
a background connection to construct an analogue of the non-covariant
$\Gamma\Gamma$ action \cite{LKB,Har,FC}. However if the auxiliary
fields are nondynamical, then these Lagrangians are not diffeomorphism-invariant
in the sense that symmetries always act on the dynamical variables
only, and if the auxiliary fields are considered dynamical, then the
Lagrangian is for the coupled metric-connection or metric-metric system,
not for a single metric. }. A negative example would be Gauss--Bonnet gravity \cite{Dav} or
the Horndeski theory \cite{PS}, which do not admit first order Lagrangians.

The (local) existence of first order Lagrangians for second order
field theories has been completely characterized by the difficult
reduction theorem of Anderson and Duchamp \cite{AD} (see also \cite{Ro}
for a computationally simpler, but more technically intensive proof).
The Euler--Lagrange equations of a first order Lagrangian $L(x,q,q_{(1)})$
always has the functional form
\begin{equation}
E_{i}(x,q,q_{(1)},q_{(2)})=A_{ij}^{\mu\nu}(x,q,q_{(1)})q_{,\mu\nu}^{j}+B_{i}(x,q,q_{(1)}),
\end{equation}
i.e. it is an \emph{affine} function of the second derivatives. It
has been proven by Anderson and Duchamp in \cite{AD} that (locally,
in the sense of having to assume the space of admissible field values
has a simple topology) if the $E_{i}$ is variational (i.e. any Lagrangian
of any order exists for it), then the above affine form is also a
\emph{sufficient} condition for the existence of a first order Lagrangian.
Higher order generalizations of this result have been provided by
Anderson in \cite{An} through his scheme of minimal weight forms.
Altogether, the conclusion is that in general minimal order Lagrangians
do not exist.

On the other hand consider now a splitting of the variables $(x^{\mu})=(y^{a},z)$
as described in Section \ref{subsec:Setup-and-overview} and a Lagrangian
$L[q_{[0]},\dots,q_{[r]}]$ of evolutionary order $r$. We only care
about the evolutionary order, the total order may be higher (for example
by also depending on $q_{[r],a_{1}...a_{l}}^{i}$ for some $l>0$).
The Euler--Lagrange expressions can be written as
\begin{equation}
E_{i}=\frac{\delta L}{\delta q_{[0]}^{i}}-\frac{d}{dz}\frac{\delta L}{\delta q_{[1]}^{i}}+\dots+(-1)^{r}\frac{d^{r}}{dz^{r}}\frac{\delta L}{\delta q_{[r]}^{i}},\label{eq:EL_split}
\end{equation}
where as usual
\begin{equation}
\frac{\delta L}{\delta q_{[k]}^{i}}=\sum_{l=0}^{p_{k}}(-1)^{l}d_{a_{1}}\dots d_{a_{l}}\frac{\partial L}{\partial q_{[k],a_{1}...a_{l}}^{i}}
\end{equation}
with $p_{k}$ being a nonnegative integer characterizing the order
of dependence of $L$ on the instantaneous derivatives of $q_{[k]}^{i}$.
From (\ref{eq:EL_split}), we see that if $L$ has evolutionary order
$r$ then $E_{i}$ has evolutionary order $s\le2r$. Analogously to
the non-split case, if $E_{i}$ has evolutionary order $s$, then
the theoretically possible lowest evolutionary order for any Lagrangian
for $E_{i}$ is $r=\lceil s/2\rceil$ and such Lagrangians are called
\emph{minimal evolutionary order Lagrangians}. Clearly, a minimal
evolutionary order Lagrangian needs not be a minimal order Lagrangian
in all variables.

One may then ask then whether a variational system of differential
operators $E_{i}$ of evolutionary order $s$ always has a Lagrangian
of minimal evolutionary order. In the following, we prove that this
is indeed the case. It should be noted that it is known \cite{An}
that variational systems of \emph{ordinary} differential equations
always admit minimal order Lagrangians. The proof for field theory
systems and minimal evolutionary order Lagrangians is similar in spirit,
but is more complicated, since the ordinary derivatives need to be
replaced with various variational derivatives (Appendix \ref{sec:Higher-Euler-operators})
and the Poincaré lemma needs to be replaced with the homotopy formulae
presented in Appendix \ref{sec:Homotopy-operators-and}. The contents
of the appendices should thus be reviewed here as they are necessary
to understand the following proof.
\begin{thm}
\label{Thm:Ev_Ord_Red}Let $E_{i}[q]$ be a variational differential
operator of evolutionary order $s$. Then there is a Lagrangian of
minimal evolutonary order $\lceil s/2\rceil$ for $E_{i}[q]$.
\end{thm}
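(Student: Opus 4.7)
The plan is to argue by induction on the evolutionary order of an available Lagrangian. By hypothesis $E_i$ is variational, so some Lagrangian $L$ with $\delta L/\delta q^i = E_i$ exists; let $r$ denote its evolutionary order. If $r = \lceil s/2 \rceil$ there is nothing to prove, so I assume $r > \lceil s/2 \rceil$ and aim to construct a new Lagrangian $L'$ of evolutionary order $r-1$ with the same Euler--Lagrange expressions. Because $\delta(d_\mu F^\mu)/\delta q^i \equiv 0$, it suffices to produce local functions $F^\mu$ such that $L - d_\mu F^\mu$ has evolutionary order at most $r-1$; iterating this reduction drops the evolutionary order one unit at a time until it stabilizes at $\lceil s/2 \rceil$.

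To produce such $F^\mu$ in a single reduction step, I would work from the split formula (\ref{eq:EL_split}). Among the terms on the right-hand side only $(-d/dz)^{r}(\delta L/\delta q_{[r]}^{i})$ can carry evolutionary derivatives of $q$ of orders $2r$ or $2r-1$. Since $r > \lceil s/2\rceil$ forces $2r-2 \ge s$, the coefficients of $q_{[2r]}^{j}$ and their instantaneous derivatives in $E_i$ must cancel identically. Unwinding the chain rule, this translates into the vanishing of certain higher Euler operators of $L$ built from $\partial/\partial q_{[r],a_1\dots a_l}^{i}$ and symmetrized appropriately in the upper-field indices $i,j$; these are the Helmholtz-type obstructions for the evolutionary direction, and they are the PDE analogue of the statement ``$L$ is at most affine in $q^{(r)}$'' that drives Anderson's ODE reduction.

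The key step is then to integrate these vanishing conditions by applying the homotopy formulas of Appendix \ref{sec:Homotopy-operators-and} in combination with the higher Euler operator identities of Appendix \ref{sec:Higher-Euler-operators}. When the obstruction operators above annihilate $L$, these tools allow one to decompose $L = L_0 + d_\mu F^\mu$, where $L_0$ has no dependence on the evolutionary-order-$r$ variables and $F^\mu$ has evolutionary order at most $r-1$. Setting $L' = L - d_\mu F^\mu = L_0$ completes one step of the induction, and iterating from any starting Lagrangian yields a Lagrangian of evolutionary order exactly $\lceil s/2 \rceil$.

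The main obstacle, compared with the ODE argument of \cite{An}, is that the top variables $q_{[r],a_1\dots a_l}^{i}$ form an infinite tower indexed by arbitrary instantaneous multi-indices $(a_1,\dots,a_l)$, so the reduction cannot be phrased as a pointwise polynomial statement in a single top coordinate. One must instead work at the level of the full instantaneous variational derivative $\delta/\delta q_{[r]}^{i}$ and its higher-order analogues, which is precisely what the apparatus of the appendices is designed to handle. Once the homotopy identities are in place, the only remaining bookkeeping is to verify that $d_z F^z$ does not accidentally reintroduce $q_{[r]}$-dependence of higher evolutionary order than advertised, and this follows from the way the homotopy kernel is constructed in the evolutionary direction.
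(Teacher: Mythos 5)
Your overall strategy coincides with the paper's: reduce the evolutionary order one unit at a time by subtracting a total derivative, using the fact that the order-$s$ equations force the highest-order coefficients of $E_i$ to vanish, and integrating those conditions with the homotopy formulas of the appendices. However, the critical middle of the argument is asserted rather than carried out, and two of your supporting claims are off. First, it is not true that only $(-d/dz)^r(\delta L/\delta q_{[r]}^i)$ can carry derivatives of order $2r-1$: the term $(-d/dz)^{r-1}(\delta L/\delta q_{[r-1]}^i)$ generically depends on $q_{[r]}$ and therefore also contributes at order $2r-1$, and the correct cancellation at that order involves both terms. Second, the order-$2r$ obstruction is not a Helmholtz-type condition at all; the coefficient of $q_{[2r],a_1\dots a_l}^j$ in $E_i$ is simply $(-1)^r\,\partial E_i^{(r)}/\partial q_{[r],a_1\dots a_l}^j$, and its vanishing says only that $E_i^{(r)}=\delta L/\delta q_{[r]}^i$ is independent of $q_{[r]}$.

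The missing idea is the two-stage structure of each reduction step. Stage one: from the independence of $E_i^{(r)}$ on $q_{[r]}$, apply the first homotopy formula (\ref{eq:1st_hom_multi}) in the variables $q_{[r]}$ to replace $L$, up to an instantaneous divergence, by $L'=A_i[q_{[0]},\dots,q_{[r-1]}]\,q_{[r]}^i+B[q_{[0]},\dots,q_{[r-1]}]$, affine in $q_{[r]}$ with no instantaneous derivatives of $q_{[r]}$. Stage two: only now, recomputing $E_i$ from $L'$ and isolating the order-$(2r-1)$ terms, do genuine Helmholtz expressions appear --- and they are the Helmholtz conditions of $A_i$ \emph{with respect to $q_{[r-1]}$}, not $q_{[r]}$. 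Their vanishing, together with the second homotopy formula (\ref{eq:2nd_hom_abstract}) (Vainberg--Tonti), yields $F$ of evolutionary order at most $r-1$ with $A_i=\delta F/\delta q_{[r-1]}^i$, and then $L''=L'-dF/dz+d_aK^a$ has evolutionary order $r-1$ because the only $q_{[r]}$-dependent piece of $dF/dz$ is exactly $(\delta F/\delta q_{[r-1]}^i)q_{[r]}^i=A_iq_{[r]}^i$, which cancels the affine term. Without passing through the affine normal form first, the vanishing conditions you extract from the raw Lagrangian do not integrate directly to a decomposition $L=L_0+d_\mu F^\mu$, so as written the key step of your induction does not go through.
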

\begin{proof}
As $E_{i}[q]$ is asssumed variational, there is a Lagrangian $L$
of order $r\ge\lceil s/2\rceil$ such that
\begin{equation}
E_{i}=E_{i}^{(0)}-\frac{d}{dz}E_{i}^{(1)}+\dots+(-1)^{r-1}\frac{d^{r-1}}{dz^{r-1}}E_{i}^{(r-1)}+(-1)^{r}\frac{d^{r}}{dz^{r}}E_{i}^{(r)},
\end{equation}
where
\begin{equation}
E_{i}^{(k)}:=\frac{\delta L}{\delta q_{[k]}^{i}}.
\end{equation}
If $r=\lceil s/2\rceil$, we are done, therefore let us suppose that
$r>\lceil s/2\rceil$, i.e. $2r>s$ when $s$ is even and $2r-1>s$
when $s$ is odd. We first compute the term with the highest evolutionary
order in $E_{i}$, which is order $2r$. Note that if $f=f[q_{[0]},\dots,q_{[r]}]$,
then
\begin{equation}
\frac{df}{dz}=\frac{\partial f}{\partial z}+\sum_{l=0}^{p_{0}}\frac{\partial f}{\partial q_{[0],a_{1}...a_{l}}^{i}}q_{[1],a_{1}...a_{l}}^{i}+\dots+\sum_{l=0}^{p_{r}}\frac{\partial f}{\partial q_{[r],a_{1}...a_{l}}^{i}}q_{[r+1],a_{1}...a_{l}}^{i},
\end{equation}
where $p_{0},p_{1},\dots,p_{r}$ are integers characterizing the dependence
of $f$ on the instantaneous derivatives of $q_{[0]}^{i},q_{[1]}^{i},\dots,q_{[r]}^{i}$.
Then the evolutionary order $2r$ part of $E_{i}$ is
\begin{equation}
E_{i}=(-1)^{r}\sum_{l=0}^{p_{r}}\frac{\partial E_{i}^{(r)}}{\partial q_{[r],a_{1}...a_{l}}^{j}}q_{[2r],a_{1}...a_{l}}^{j}+\text{"lower order terms"}.
\end{equation}
As $E_{i}$ is order $s<2r$, it does not depend on $q_{[2r]}^{i}$
in any way, and hence the coefficients of the $q_{[2r],a_{1}...a_{l}}^{j}$
must vanish:
\[
\frac{\partial E_{i}^{(r)}}{\partial q_{[r],a_{1}...a_{l}}^{j}}=0,\quad l=0,1,\dots,p_{r},
\]
which means that $E_{i}^{(r)}=E_{i}^{(r)}[q_{[0]},\dots,q_{[r-1]}]$,
i.e. it does not depend functionally on $q_{[r]}$. We then apply
the multi-field version (\ref{eq:1st_hom_multi}) of the homotopy
formula (\ref{eq:1st_hom_abstract}) to $L$ with respect to the variables
$q_{[r]}^{i}$:
\begin{equation}
L[q_{[0]},\dots,q_{[r-1]},q_{[r]}]=\int_{0}^{1}E_{i}^{(r)}[q_{[0]},\dots,q_{[r-1]},tq_{[r]}]q_{[r]}^{i}\,dt+L[q_{[0]},\dots,q_{[r-1]},0]+d_{a}K^{a},
\end{equation}
where the form of the current $K^{a}$ could be computed explicitly,
but is irrelevant. As $E_{i}^{(r)}$ does not depend on $q_{[r]}$,
this reduces to
\begin{align}
L^{\prime}[q_{[0]},\dots,q_{[r-1]},q_{[r]}] & =E_{i}^{(r)}[q_{[0]},\dots,q_{[r-1]}]q_{[r]}^{i}+L[q_{[0]},\dots,q_{[r-1]},0]\nonumber \\
 & =A_{i}[q_{[0]},\dots,q_{[r-1]}]q_{[r]}^{i}+B[q_{[0]},\dots,q_{[r-1]}],
\end{align}
where $L^{\prime}$ differs from $L$ in terms of a total instantaneous
divergence $d_{a}K^{a}$ (hence have the same Euler-Lagrange expressions),
$A_{i}:=E_{i}^{(r)}$ and $B:=\left.L\right|_{q_{[r]}=0}$. We then
plug this formula for $L^{\prime}$ back into the Euler-Lagrange expressions
and isolate the term whose evolutionary order is $2r-1$:
\begin{equation}
E_{i}=\left(-1\right)^{r}\left[\frac{d^{r-1}}{dz^{r-1}}\frac{dA_{i}}{dz}-\frac{d^{r-1}}{dz^{r-1}}\frac{\delta}{\delta q_{[r-1]}^{i}}\left(A_{j}q_{[r]}^{j}\right)\right]+\dots,
\end{equation}
where the undisplayed terms are evolutionary order $2r-2$ and lower.
We calculate the variational derivative in the second term:
\begin{equation}
\frac{\delta}{\delta q_{[r-1]}^{i}}\left(A_{j}q_{[r]}^{j}\right)=\sum_{k=0}^{p_{r-1}}\left(-1\right)^{k}d_{a_{1}}\dots d_{a_{k}}\left(\frac{\partial A_{j}}{\partial q_{[r-1],a_{1}...a_{k}}^{i}}q_{[r]}^{j}\right),
\end{equation}
and here we use the higher order product formulae to get
\begin{equation}
\frac{\delta}{\delta q_{[r-1]}^{i}}\left(A_{j}q_{[r]}^{j}\right)=\sum_{k=0}^{p_{r-1}}\left(-1\right)^{k}\frac{\delta A_{j}}{\delta q_{[r-1],a_{1}...a_{k}}^{i}}q_{[r],a_{1}...a_{k}}^{j}
\end{equation}
where the higher variational derivatives have appeared. From the first
term we get
\begin{equation}
\frac{dA_{i}}{dz}=\dots+\sum_{k=0}^{p_{r-1}}\frac{\partial A_{i}}{\partial q_{[r-1],a_{1}...a_{k}}^{j}}q_{[r],a_{1}...a_{k}}^{j},
\end{equation}
hence the highest order part is
\begin{align}
E_{i} & =\left(-1\right)^{r}\sum_{k=0}^{p_{r-1}}\left(\frac{\partial A_{i}}{\partial q_{[r-1],a_{1}...a_{k}}^{j}}-\left(-1\right)^{k}\frac{\delta A_{j}}{\delta q_{[r-1],a_{1}...a_{k}}^{i}}\right)q_{[2r-1],a_{1}...a_{k}}^{j}+\text{"lower order terms"}\nonumber \\
 & =\left(-1\right)^{r}\sum_{k=0}^{p_{r-1}}H_{(r-1),ij}^{a_{1}...a_{k}}(A)q_{[2r-1],a_{1}...a_{k}}^{j}+\text{"lower order terms"},
\end{align}
where the $H_{(r-1),ij}^{a_{1}...a_{k}}(A)$ are the Helmholtz expressions
computed on $A_{i}$ with respect to the variables $q_{[r-1]}^{i}$.
The assumption $r>\lceil s/2\rceil$ then again implies that $E_{i}$
does not depend on $q_{[2r-1]}$ and hence the coefficients must again
vanish and we thus obtain
\begin{equation}
H_{(r-1),ij}^{a_{1}...a_{k}}(A)=0,\quad k=0,1,\dots,p_{r-1}.
\end{equation}
We can thus apply the second homotopy formula (\ref{eq:2nd_hom_abstract})
or (\ref{eq:2nd_hom_conc}) to $A_{i}$ to write
\begin{equation}
A_{i}=\frac{\delta F}{\delta q_{[r-1]}^{i}},
\end{equation}
where such an $F$ can be constructed explicitly via the Vainberg-Tonti
formula as
\begin{equation}
F[q_{[0]},\dots,q_{[r-1]}]=\int_{0}^{1}A_{i}[q_{[0]},\dots,q_{[r-2]},tq_{[r-1]}]q_{[r-1]}^{i}\,dt.
\end{equation}
Hence, the Lagrangian takes the form
\begin{equation}
L^{\prime}[q_{[0]},\dots,q_{[r-1]},q_{[r]}]=\frac{\delta F}{\delta q_{[r-1]}^{i}}[q_{[0]},\dots,q_{[r-1]}]q_{[r]}^{i}+B[q_{[0]},\dots,q_{[r-1]}].
\end{equation}
Total differentiation of the function $F$ produces
\begin{align}
\frac{dF}{dz} & =\frac{\partial F}{\partial z}+\sum_{k=0}^{r-2}\sum_{l=0}^{p_{k}}\frac{\partial F}{\partial q_{[k],a_{1}...a_{l}}^{i}}q_{[k+1],a_{1}...a_{l}}^{i}+\sum_{l=0}^{p_{r-1}}\frac{\partial F}{\partial q_{[r-1],a_{1}...a_{l}}^{i}}q_{[r],a_{1}...a_{l}}^{i}\nonumber \\
 & =\left(\frac{dF}{dz}\right)_{\mathrm{cut}}+\frac{\delta F}{\delta q_{[r-1]}^{i}}q_{[r]}^{i}+d_{a}K^{a},
\end{align}
where we have integrated by parts at the last block of terms and introduced
the ``cut total derivative'' $\left(dF/dz\right)_{\mathrm{cut}}$,
which is calculated as $dF/dz$ while pretending that $F$ does not
depend on the highest evolutionary derivative variable $q_{[r-1]}$.
Consequently, we may define the Lagrangian
\begin{equation}
L^{\prime\prime}:=L^{\prime}-\frac{dF}{dz}+d_{a}K^{a},
\end{equation}
which differs from $L^{\prime}$ in total derivative terms only, and
as
\begin{equation}
L^{\prime\prime}[q_{[0]},\dots,q_{[r-1]}]=B[q_{[0]},\dots,q_{[r-1]}]-\left(\frac{dF}{dz}\right)_{\mathrm{cut}}[q_{[0]},\dots,q_{[r-1]}],
\end{equation}
it is order $r-1$ in the evolutionary variable. If $r-1=\lceil s/2\rceil$,
we are done. If $r-1>\lceil s/2\rceil$, then we may iterate this
process unil we obtain a Lagrangian with minimal evolutionary order.
\end{proof}
For clarity, we restate the steps needed to reduce the evolutionary
order of a Lagrangian $L[q_{[0]},\dots,q_{[r]}]$ by one in an algorithmic
fashion, assuming that $L$ is not of minimal evolutionary order.
\begin{enumerate}
\item First define $A_{i}:=\delta L/\delta q_{[r]}^{i}$ and $B[q_{[0]},\dots,q_{[r-1]}]:=L[q_{[0]},\dots,q_{[r-1]},0]$.
The expressions $A_{i}$ will necessarily have evolutionary order
at most $r-1$.
\item The $A_{i}$ are necessarily variational with respect to the $q_{[r-1]}^{i}$,
define $F[q_{[0]},\dots,q_{[r-2]},q_{[r-1]}]:=\int_{0}^{1}A_{i}[q_{[0]},\dots,q_{[r-2]},tq_{[r-1]}]q_{[r-1]}^{i}\,dt$,
or determine a function $F$ of evolutionary order at most $r-1$
satisfying $A_{i}=\delta F/\delta q_{[r-1]}^{i}$ by any other means.
\item Write $L_{\mathrm{reduced}}:=B-\left(dF/dz\right)_{\mathrm{cut}}$,
which is a reduced Lagrangian. Here the cut total derivative of $F[q_{[0]},\dots,q_{[r-1]}]$
is
\begin{equation}
\left(\frac{dF}{dz}\right)_{\mathrm{cut}}:=\frac{\partial F}{\partial z}+\sum_{k=0}^{r-2}\sum_{l=0}^{p_{k}}\frac{\partial F}{\partial q_{[k],a_{1}...a_{l}}^{i}}q_{[k+1],a_{1}...a_{l}}^{i},
\end{equation}
i.e. the total derivative but with the dependence of $F$ on $q_{[r-1]}$
ignored.
\end{enumerate}
Assume now that the field equations $E_{i}[q_{[1]},\dots,q_{[2r-1]}]$
are of odd order and that $L[q_{[0]},\dots,q_{[r]}]$ is a minimal
order Lagrangian. Then the first part of the above proof still goes
through and we obtain that
\begin{equation}
L^{\prime}[q_{[0]},\dots,q_{[r]}]:=A_{i}[q_{[0]},\dots,q_{[r-1]}]q_{[r]}^{i}+B[q_{[0]},\dots,q_{[r-1]}]\label{eq:min_ord_lag_odd}
\end{equation}
is an equivalent of $L$ (here $A_{i}$ and $B$ are defined exactly
as above). Hence for odd order systems one may always choose a minimal
order Lagrangian which is an \emph{affine} function of the highest
derivative variables $q_{[r]}$ and depends on these only algebraically
(i.e. without any instantaneous derivatives).

\subsection{Junction conditions\label{subsec:Junction-conditions}}

As before, let $E_{i}[q]$ be a variational differential operator,
$\Sigma$ a surface in the $n+1$-dimensional parameter space $M$
partitioning it into two subdomains $M_{\pm}$, let $(x^{\mu})=(y^{a},z)$
be adapted coordinates to the surface (such that $\Sigma=\left\{ (y,z)\in M:\ z=0\right\} $)
and at this point we will also suppose that the evolutionary order
of $E_{i}$ (with respect to this splitting) is $s=2r$ even.

We are now in the position of defining junction conditions for the
field theory. Our claim is that if the field variables $q^{i}(y,z)$
are $C^{\infty}$ (or $C^{2r}$) in $M_{+}$ and $M_{-}$ respectively,
and are $C^{r-1}$ on $\Sigma$ with \emph{bounded} left and right
$r$th derivatives, then the Euler-Lagrange expressions $E_{i}[q]$
are well-defined on $q^{i}$ in the distributional sense. We first
note that as on $M_{\pm}$ separately, the field variables are smooth,
only evolutionary derivatives may cause discontinuities. Stated differently,
$q_{[k]}^{i}$ for $0\le k\le r-1$ are continuous throughout $M$
(including $\Sigma$) and smooth in the subdomains, hence $q_{[k],a_{1}...a_{l}}^{i}$
for any $l\ge0$ are also continuous at $\Sigma$ and smooth in $M_{\pm}$,
while the $r$th evolutionary derivatives $q_{[r]}^{i}$ suffer only
a jump-discontinuity at $\Sigma$.

By Theorem \ref{Thm:Ev_Ord_Red}, there is a Lagrangian $L[q_{[0]},\dots,q_{[r]}]$
of evolutionary order $r$ for this system. In terms of this Lagrangian,
the field equations are written as
\begin{equation}
E_{i}=E_{i}^{(0)}-\frac{dE^{(1)}}{dz}+\frac{d^{2}E^{(2)}}{dz^{2}}+\dots+(-1)^{r}\frac{d^{r}E_{i}^{(r)}}{dz^{r}},
\end{equation}
where $E_{i}^{(k)}=\delta L/\delta q_{[k]}^{i}$, and as such each
$E_{i}^{(k)}$ has evolutionary order $r$ at most. Therefore, if
the field variables $q^{i}$ belong the above stated regularity class,
$E_{i}^{(k)}[q_{[0]},\dots,q_{[r-1]},q_{[r]}]$ is a well-defined
function throughout $M\setminus\Sigma$ with at most a jump discontinuity
at $\Sigma$ stemming from $q_{[r]}$ having possibly a jump discontinuity
there. The Euler-Lagrange expressions are computed from the $E_{i}^{(k)}$
by letting total derivatives act on these, which are linear operators,
hence can be interpreted distributionally. For any $m$-tuple $\varphi^{i}=\varphi^{i}(y,z)$
of smooth test functions whose supports are compact and do not intersect
any possible outer boundary of $M$, we have
\begin{equation}
\left\langle E_{i}[q],\varphi^{i}\right\rangle =\left\langle \sum_{k=0}^{r}(-1)^{k}\frac{d^{k}E_{i}^{(k)}}{dz^{k}}[q],\varphi^{i}\right\rangle =\sum_{k=0}^{r}\left\langle E_{i}^{(k)}[q],\frac{d^{k}\varphi^{i}}{dz^{k}}\right\rangle ,\label{eq:saas}
\end{equation}
where for a singular distribution $T$, the bracket $\left\langle T,\varphi\right\rangle $
means the action of $T$ on the test function $\varphi$, while if
$T$ is regular and represented by the locally integrable function
$f$, this action is computed through integration as $\left\langle T,\varphi\right\rangle =\left\langle f,\varphi\right\rangle =\int f(x)\varphi(x)\,dx$.
In (\ref{eq:saas}) the last term involves only regular distributions
with the bracket being an integral. We will need
\begin{equation}
\sum_{k=0}^{r}E_{i}^{(k)}\frac{d^{k}\varphi^{i}}{dz^{k}}=E_{i}\varphi^{i}+\frac{d}{dz}\left(\sum_{k=0}^{r-1}\pi_{i}^{(k+1)}\frac{d^{k}\varphi^{i}}{dz^{k}}\right),
\end{equation}
where the coefficients $\pi_{i}^{(k)}$ are computed by evaluating
the total derivative to get
\begin{align}
\sum_{k=0}^{r}E_{i}^{(k)}\frac{d^{k}\varphi^{i}}{dz^{k}} & =E_{i}\varphi^{i}+\sum_{k=0}^{r-1}\frac{d\pi_{i}^{(k+1)}}{dz}\frac{d^{k}\varphi^{i}}{dz^{k}}+\sum_{k=0}^{r-1}\pi_{i}^{(k+1)}\frac{d^{k+1}\varphi^{i}}{dz^{k+1}}\nonumber \\
 & =\sum_{k=0}^{r-1}\left[\pi_{i}^{(k)}+\frac{d\pi_{i}^{(k+1)}}{dz}\right]\frac{d^{k}\varphi^{i}}{dz^{k}}+\pi_{i}^{(r)}\frac{d^{r}\varphi^{i}}{dz^{r}},
\end{align}
where we have temporarily defined $\pi_{i}^{(0)}:=E_{i}$. Comparing
the coefficients, we get the recursion formulae
\begin{align}
\pi_{i}^{(r)} & =E_{i}^{(r)},\nonumber \\
\pi_{i}^{(k)} & =E_{i}^{(k)}-\frac{d\pi_{i}^{(k+1)}}{dz},\quad0\le k\le r-1,
\end{align}
which can be solved by backwards induction to obtain
\begin{equation}
\pi_{i}^{(k)}=\sum_{l=0}^{r-k}(-1)^{l}\frac{d^{l}E_{i}^{(k+l)}}{dz^{l}}.
\end{equation}
The $\pi_{i}^{(k)}$ for $1\le k\le r$ are then the (higher order,
Ostrogradsky-) canonical momenta \cite{Wo} of the field variables
with respect to the given splitting. The bracket at the end of (\ref{eq:saas})
may then be computed as (we write $\Sigma(z)$ for the surfaces $z=\mathrm{const}$
with $\Sigma=\Sigma(0)$)
\begin{align}
\left\langle E_{i}[q],\varphi^{i}\right\rangle  & =\int_{-\infty}^{+\infty}dz\int_{\Sigma(z)}d^{n}y\,\sum_{k=0}^{r}E_{i}^{(k)}[q]\frac{d^{k}\varphi^{i}}{dz^{k}}\nonumber \\
 & =\int_{0}^{\infty}dz\int_{\Sigma(z)}d^{n}y\,\sum_{k=0}^{r}E_{i}^{(k)}[q]\frac{d^{k}\varphi^{i}}{dz^{k}}+\int_{-\infty}^{0}dz\int_{\Sigma(z)}d^{n}y\,\sum_{k=0}^{r}E_{i}^{(k)}[q]\frac{d^{k}\varphi^{i}}{dz^{k}}\nonumber \\
 & =\int_{0}^{\infty}dz\int_{\Sigma(z)}d^{n}y\,\sum_{k=0}^{r}(-1)^{k}\frac{d^{k}E_{i}^{(k)}}{dz^{k}}[q]\varphi^{i}+\int_{-\infty}^{0}dz\int_{\Sigma(z)}d^{n}y\,\sum_{k=0}^{r}(-1)^{k}\frac{d^{k}E_{i}^{(k)}}{dz^{k}}[q]\varphi^{i}\nonumber \\
 & -\int_{\Sigma}d^{n}y\sum_{k=0}^{r-1}\left[\pi_{i}^{(k+1)}\right]_{-}^{+}\frac{d^{k}\varphi^{i}}{dz^{k}},\label{eq:fgfdg}
\end{align}
where we have defined
\begin{equation}
\left[F\right]_{-}^{+}(y):=F(y,0_{+})-F(y,0_{-})
\end{equation}
for the jump-discontinuity of a quantity. Write $\delta^{\Sigma}$
for the Dirac delta associated with the surface $\Sigma$, which satisfies
\begin{equation}
\left\langle \delta^{\Sigma},\varphi\right\rangle =\int_{\Sigma}d^{n}y\,\varphi(y,0)
\end{equation}
for any test function $\varphi$. The evolutionary derivatives of
the Dirac delta are denoted $\delta_{[k]}^{\Sigma}:=d^{k}\delta^{\Sigma}/dz^{k}$.
Then clearly
\begin{equation}
\left\langle \delta_{[k]}^{\Sigma},\varphi\right\rangle =\left(-1\right)^{k}\left\langle \delta^{\Sigma},\frac{d^{k}\varphi}{dz^{k}}\right\rangle =\left(-1\right)^{k}\int_{\Sigma}d^{n}y\,\frac{d^{k}\varphi}{dz^{k}}(y,0).
\end{equation}
Plugging this back into (\ref{eq:fgfdg}), we get
\begin{align}
\left\langle E_{i}[q],\varphi^{i}\right\rangle  & =\int_{0}^{\infty}dz\int_{\Sigma(z)}d^{n}y\,E_{i}[q]\varphi^{i}+\int_{-\infty}^{0}dz\int_{\Sigma(z)}d^{n}y\,E_{i}[q]\varphi^{i}\nonumber \\
 & -\sum_{k=0}^{r-1}(-1)^{k}\left\langle \left[\pi_{i}^{(k+1)}\right]_{-}^{+}\delta_{[k]}^{\Sigma},\varphi^{i}\right\rangle ,
\end{align}
and if the ``soldering'' $\overline{F}:=F_{+}\theta_{\Sigma}+F_{-}\left(1-\theta_{\Sigma}\right)$
is introduced where $F_{\pm}$ is a smooth extension of $\left.F\right|_{M_{\pm}}$
to $M$ and
\begin{equation}
\theta_{\Sigma}(x)=\begin{cases}
1 & x\in M_{+}\\
0 & x\in M_{-}
\end{cases}
\end{equation}
is the Heaviside step function, we get
\begin{equation}
\left\langle E_{i}[q],\varphi^{i}\right\rangle =\left\langle \overline{E_{i}[q]},\varphi^{i}\right\rangle +\sum_{k=1}^{r}(-1)^{k}\left\langle \left[\pi_{i}^{(k)}\right]_{-}^{+}\delta_{[k-1]}^{\Sigma},\varphi^{i}\right\rangle ,
\end{equation}
which then shows that the Euler-Lagrange expressions $E_{i}[q]$ evaluated
on $q$ are well-defined as a distribution with
\begin{equation}
E_{i}[q]=\overline{E_{i}[q]}+\sum_{k=1}^{r}(-1)^{k}\left[\pi_{i}^{(k)}\right]_{-}^{+}\delta_{[k-1]}^{\Sigma}.
\end{equation}
This relation is then the junction condition for the field theory,
since if we are given a source
\begin{equation}
\rho_{i}=\overline{\rho_{i}}+\sum_{k=1}^{r}(-1)^{k}\sigma_{i}^{(k)}\delta_{[k-1]}^{\Sigma},
\end{equation}
where $\overline{\rho_{i}}=\rho_{i}^{+}\theta_{\Sigma}+\rho_{i}^{-}\left(1-\theta_{\Sigma}\right)$
is the regular part and the expressions $\sigma_{i}^{(k)}$ are (smooth)
functions on $\Sigma$, the field equations $E_{i}[q]=\rho_{i}$ are
equivalent to the pair of bulk equations
\begin{equation}
E_{i}[q_{+}]=\rho_{i}^{+},\quad E_{i}[q_{-}]=\rho_{i}^{-}
\end{equation}
and the boundary equations
\begin{equation}
\left[\pi_{i}^{(k)}\right]_{-}^{+}=\sigma_{i}^{(k)},\quad k=1,2,\dots,r\label{eq:JC}
\end{equation}
which constitute the junction conditions.

We briefly remark on why $L$ had to be a minimal evolutionary order
Lagrangian. Let the situation be given as before but with an alternative
Lagrangian $L^{\prime}[q_{[0]},\dots,q_{[r^{\prime}]}]$ of evolutionary
order $r^{\prime}>r$ which is non-minimal. The Euler-Lagrange expressions
are then
\begin{align}
E_{i}[q_{[0]},\dots,q_{[2r]}] & =E_{i}^{\prime(0)}[q_{[0]},\dots,q_{[r]},\dots,q_{[r^{\prime}]}]-\frac{d}{dz}\left(E_{i}^{\prime(1)}[q_{[0]},\dots,q_{[r]},\dots,q_{[r^{\prime}]}]\right)\nonumber \\
 & +\dots+(-1)^{r^{\prime}}\frac{d^{r^{\prime}}}{dz^{r^{\prime}}}\left(E_{i}^{\prime(r^{\prime})}[q_{[0]},\dots,q_{[r]},\dots,q_{[r^{\prime}]}]\right),
\end{align}
where $E_{i}^{\prime(k)}=\delta L^{\prime}/\delta q_{[k]}^{i}$. Since
the $E_{i}$ are the same as before, we expect that the Euler-Lagrange
expressions are well-defined distributionally if the $q^{i}$ are
$C^{r-1}$ at $\Sigma$ with bounded left and right derivatives there
and smooth everywhere else. However the $E_{i}^{\prime(k)}[q_{[0]},\dots,q_{[r]},\dots,q_{[r^{\prime}]}]$
are in general non-linear differential operators in $q$, and therefore
the $E_{i}^{\prime(k)}$ would need to be evaluated on distributions,
which is not in general well-defined. Of course as the existence of
minimal evolutionary order Lagrangians show, it is always possible
to eg. rearrange the various evolutionary derivatives that appear
in the expressions $E_{i}^{\prime(k)}$ to obtain a form which can
be evaluated on test functions correctly. But this process is impossible
to discover if we only have the symbolic expressions $E_{i}^{\prime(k)}$
and is even difficult when we have explicit expressions. The passage
to a minimal evolutionary order Lagrangian serves precisely this purpose
in that it produces a symbolic form for the Euler-Lagrange expressions
that can be correctly evaluated on test functions.

\section{Analysis and discussion}

\subsection{On the minimal regularity conditions\label{subsec:On-the-minimal}}

What we have shown in Section \ref{subsec:Junction-conditions} is
that if the appropriate conditions are satisfied (the theory is variational
and the evolutionary order is even), then the Euler-Lagrange operator
is well-defined on fields that are smooth away from $\Sigma$, and
$C^{r-1}$ on $\Sigma$ with bounded left and right $r$th derivatives
there and this is true for \emph{any} such theory irrespective of
the precise analytic form of either $E_{i}[q]$ or the associated
Lagrangian $L[q]$ of minimal evolutionary order.

In special cases it is possible that solutions of the field equations
exist with even less restrictive regularity conditions at $\Sigma$.
This cannot however be determined by such a general process, and a
case-by-case analysis is necessary. The most obvious examples arise
when the differential operator determining the field theory is \emph{linear},
i.e. of the form
\begin{equation}
D_{i}[q]=\sum_{k=0}^{s}D_{ij}^{\mu_{1}...\mu_{k}}q_{,\mu_{1}...\mu_{k}}^{j},
\end{equation}
where the coefficients $D_{ij}^{\mu_{1}...\mu_{k}}$ may depend on
the coordinates $x^{\mu}$ but not on the field variables $q^{i}$
or their derivatives. This case is significant because the linear
operator $D$ has a formal adjoint $D^{\dagger}$ (see Appendix \ref{sec:Formal-adjoints})
and therefore it can act on any distribution by duality, i.e. as
\begin{equation}
\left\langle D_{i}[q],\varphi^{i}\right\rangle =\left\langle q^{i},D_{i}^{\dagger}[\varphi]\right\rangle .
\end{equation}
The right hand side defines the meaning of $D_{i}[q]$ whenever the
$q^{i}$ are distributions. In consequence, the field variables $q^{i}$
can be interpreted as essentially \emph{any} distribution and the
field equations $D_{i}[q]=0$ or $D_{i}[q]=\rho_{i}$ remain well-defined.

The junction conditions derived in Section \ref{subsec:Junction-conditions}
should thus be viewed as defining the lowest possible regularity properties
of the field variables applicable to \emph{any} field theory that
obeys the few conditions outlined during the derivation. They are
then always valid junction conditions, but in exceptional cases, even
less restrictive regularity conditions are allowed.

In permissive cases such as this, physical considerations should be
used to determine the appropriate junction conditions. For example
it might happen that singular sources of a certain kind (eg. involving
many derivatives of the Dirac delta) are deemed to be void of physical
meaning and thus one does not consider such sources. Then the source
term - or the lack thereof - in the field equations themselves will
force certain regularity conditions upon the field that would otherwise
not be imposed by merely mathematical reasoning.

\subsection{Junction conditions are broken extremals\label{subsec:Junction-conditions-are-broken}}

The junction conditions (\ref{eq:JC}) are also extremals of a variation
problem. The setup and notation is the same as in Section \ref{sec:Junction-conditions-in},
$E_{i}[q]=E_{i}[q_{[0]},\dots,q_{[2r]}]$ is an $m$-tuple of variational
differential operators of evolutionary order $2r$, $L[q_{[0]},\dots,q_{[r]}]$
is a minimal evolutionary order Lagrangian for it. The coordinate
manifold $M$ is of the form $M=[z_{0},z_{1}]\times\Sigma$ (here
$z_{0}<0<z_{1}$) with action functional
\begin{equation}
S_{0}[q]=\int_{z_{0}}^{z_{1}}dz\int_{\Sigma(z)}d^{n}y\,L[q_{[0]},\dots,q_{[r]}].
\end{equation}
We look for extremals of this functional in the class of functions
$q^{i}(y,z)$ with $q^{i}$ smooth (or $C^{2r}$) in $M_{\pm}$ and
$C^{r-1}$ on $\Sigma=\Sigma(0)$ with bounded left and right $r$th
derivatives there. For simplicity we assume that the field variations
$\delta q^{i}$ are compactly supported with the supports not intersecting
the outer boundary pieces of $M$ (this is to ensure there are no
surface terms there). The extremals of $S_{0}$ will satisfy the junction
conditions (\ref{eq:JC}) only with vanishing sources. For simplicity,
we first consider only sources concentrated on the junction surface
$\Sigma$, therefore we modify the action by adding a contribution
\begin{equation}
S[q]=S_{0}[0]+\int_{\Sigma}d^{n}y\,K[q_{[0]},\dots,q_{[r-1]}],
\end{equation}
where the surface Lagrangian $K$ involves at most $r-1$ evolutionary
derivatives to ensure that its value is unambiguous. When $S_{0}$
is varied, we split the $z$ integral as $\int_{z_{0}}^{z_{1}}dz=\int_{0}^{z_{1}}dz+\int_{z_{0}}^{0}dz$
and collect the interior boundary terms at $z=0$ (corresponding to
$\Sigma$). The derivation is extremely similar to the one in Section
\ref{subsec:Junction-conditions}, hence we omit most of it. In $K$,
the fields $q_{[0]},\dots,q_{[r-1]}$ are evaluated at $z=0$ since
the integral is over the zero set of $z$, and we name the variational
derivatives as
\begin{equation}
\sigma_{i}^{(k)}:=\frac{\delta K}{\delta q_{[k-1]}^{i}},
\end{equation}
which gives
\begin{align}
\delta S[q] & =\int_{0}^{z_{1}}dz\int_{\Sigma(z)}d^{n}y\,E_{i}[q]\delta q^{i}+\int_{z_{0}}^{0}dz\int_{\Sigma(z)}d^{n}y\,E_{i}[q]\delta q^{i}\nonumber \\
 & +\int_{\Sigma}d^{n}y\,\left\{ \sigma_{i}^{(k+1)}-\left[\pi_{i}^{(k+1)}\right]_{-}^{+}\right\} \delta q_{[k]}^{i}.
\end{align}
The extremals condition $\delta S[q]=0$ for any field variation (with
the required properties on the supports) implies the field equations
in the bulk as well as the boundary equations
\begin{equation}
\sigma_{i}^{(k)}-\left[\pi_{i}^{(k)}\right]_{-}^{+}=0,\quad k=1,2,\dots,r
\end{equation}
which coincides with (\ref{eq:JC}). It should be remarked that if
we had also added bulk sources whose Lagrangians depend on $q_{[k]}$
for $k\ge(r-1)/2$, then the variations of these source Lagrangians
would have themselves contributed surface terms at $\Sigma$. The
total sources $\sigma_{i}^{(k)}$ are then constructed from the sum
of these contributions and those from the surface integral $\int_{\Sigma}d^{n}y\,K$.

In the calculus of variations with a single independent variable,
extremals that are not sufficiently differentiable (for the Euler-Lagrange
expressions to exist regularly) at a discrete set of isolated points
are called \emph{broken extremals} \cite{GHi}. Hence it is reasonable
to refer to fields that obey the junction conditions (including the
bulk field equations) as broken extremals as well.

We also note here that if we had used a Lagrangian with non-minimal
evolutionary order, we would have encountered a larger number of boundary
terms at $\Sigma$ whose evolutionary order is higher. Correspondingly,
we would had to have put higher regularity conditions on the field
to be able to repeat the process done above. By adding total derivatives
with arbitrarily high evolutionary orders, we can increase the number
of boundary terms and the required differentiability class of the
field without bounds. Hence, just as when we had used distributions
in Section \ref{subsec:Junction-conditions}, the minimal order Lagrangians
select the unique minimal allowed regularity class of the field in
which junction conditions can be interpreted regardless of the Lagrangian's
functional form (cf. Section \ref{subsec:On-the-minimal}).

Unlike the distributional case, where the junction conditions have
been deduced from the field equations directly, albeit with the Lagrangian
formulation used as a tool to arrive at a ``favorable'' symbolic
form of the field equations, the uniqueness of the junction conditions
(at least for a fixed splitting of the variables into evolutionary
and instantaneous ones) is not apparent, as the Lagrangian for the
field is not unique.

Once we have selected a minimal order Lagrangian $L[q_{[0]},\dots,q_{[r]}]$,
we still have the freedom to modify the Lagrangian by adding a total
derivative:
\begin{equation}
L^{\prime}[q_{[0]},\dots,q_{[r]}]=L[q_{[0]},\dots,q_{[r]}]+\frac{d}{dz}\left(F[q_{[0]},\dots,q_{[r-1]}]\right),
\end{equation}
as long as $F$ is at most evolutionary order $r-1$ (adding instantaneous
divergences eg. $d_{a}K^{a}$ would not contribute surface terms at
$\Sigma$). Adding this term causes the change
\begin{equation}
\pi_{i}^{\prime(k)}=\pi_{i}^{(k)}+\frac{\delta F}{\delta q_{[k-1]}^{i}}
\end{equation}
in the canonical momenta. However by assumption, the fields are $C^{r-1}$
at $\Sigma$ and as $F$ has evolutionary order $r-1$ at most, the
jump-discontinuity of this added term vanishes, hence
\begin{equation}
\left[\pi_{i}^{\prime(k)}\right]_{-}^{+}=\left[\pi_{i}^{(k)}\right]_{-}^{+},
\end{equation}
which explicitly shows that the choice of Lagrangian within the class
of minimal evolutionary order Lagrangians does not affect the junction
conditions.

\subsection{Variational counterterms}

The Einstein-Hilbert action for a metric tensor on an $n+1$ dimensional
space is
\begin{equation}
S_{\mathrm{EH}}[g]=\frac{1}{2\kappa}\int_{M}d^{n+1}x\,R\sqrt{\mathfrak{g}},
\end{equation}
which is second order in the metric due to the scalar curvature $R$.
On the other hand, its Euler-Lagrange equations (the Einstein field
equations) are second order as well. There are in fact first order
actions for the metric, but they suffer from a number of drawbacks
such as not being diffeomorphism-invariant. The variational problem
associated to this action is not well-posed (see \cite{DH} for a
detailed discussion on what this means) which essentially stems from
having ``too many boundary conditions'' due to the Lagrangian having
too high order compared to the field equations. This also prevents
the junction conditions to be directly calculated from this action,
which is not surprising, as the Lagrangian does not have minimal order.

It should be remarked that this particular argument on the well-posedness
of the variational principle is, at least in the general case, only
a useful heuristic that can be summarized as ``a system with evolutionary
order $2r$ requires $r+r$ boundary conditions (for each dynamical
variable) on the initial and final surfaces of constant evolutionary
parameter, mapped to $2r$ initial value data'', which is rigorously
justifiable for hyperbolic PDEs with noncharacteristic initial/final
surfaces or Cauchy--Kovalevskaya-type systems with analytic coefficients,
but remains a mere heuristic in other cases.

The typical solution to this problem is to add surface terms to the
action. If the boundary $\partial M$ consists of timelike and spacelike
pieces only\footnote{If there are multiple pieces, then additional corner terms at the
intersections might also be desirable. For simplicity, we will not
consider this possibility.}, then to each piece one adds the integral $\int_{\Sigma_{i}}\frac{\varepsilon}{\kappa}K\sqrt{\mathfrak{h}}d^{n}x$,
where $K$ is the extrinsic curvature, $\mathfrak{h}$ is the absolute
value of the determinant of the induced metric and $\varepsilon=\pm1$
depending on whether the piece $\Sigma_{i}$ is timelike or spacelike.
This is the so-called GHY boundary term \cite{Y,GH}. Generalizations
of this term also exist to boundaries of arbitrary signature \cite{Par1,Par2},
and there are also alternative boundary terms arising from the reduction
to the mentioned non-invariant first order Lagrangians \cite{LKB,Har,FC,Rac}.
It is possible to deduce \cite{CR,Muk,Rac} the junction conditions
in GR from the action extended by the GHY term via a variational procedure
analogous to Section \ref{subsec:Junction-conditions-are-broken},
provided one adds the GHY term to both sides of the interior boundary
surface.

It is easy to see that there is a close relationship between minimal
evolutionary order Lagrangians and GHY-like terms added to the initial
and final surfaces (with respect to a foliation). Suppose that $L[q_{[0]},\dots,q_{[r]}]$
is a Lagrangian of evolutionary order $r$ and the field equations
$E_{i}[q_{[0]},\dots,q_{[s]}]\approx0$ are evolutionary order $s$
with $s<2r$ and $s$ even. By varying the action functional determined
by $L$ we obtain $rm$ boundary conditions on the initial surface
$\Sigma(z_{0})$ and also $rm$ boundary conditions on the final surface
$\Sigma(z_{1})$. We however expect that the Cauchy problem for the
equation requires $sm<2rm$ initial conditions (see \cite{DH} for
the relation between initial and boundary data), hence the variational
boundary conditions overdetermine the system.

Writing the variation of the action as
\begin{equation}
\delta S=\int_{z_{0}}^{z_{1}}dz\int_{\Sigma(z)}d^{n}y\,E_{i}[q]\delta q^{i}+\left.\int_{\Sigma(z)}d^{n}y\sum_{k=0}^{r-1}\pi_{i}^{(k+1)}\delta q_{[k]}^{i}\right|_{z=z_{0}}^{z=z_{1}},
\end{equation}
we see that if $s=2p$, the terms $\pi_{i}^{(p+1)}\delta q_{[p]}^{i}$,
..., $\pi_{i}^{(r)}\delta q_{[r-1]}^{i}$ are in excess and should
be absent if the variational problem to be well-posed. A \emph{variational
counterterm} is an integral of the form
\begin{equation}
B=\int_{\Sigma(z)}d^{n}y\,\mathcal{B}[q_{[0]},\dots,q_{[r-1]}]
\end{equation}
satisfying
\begin{equation}
\frac{\delta\mathcal{B}}{\delta q_{[k]}^{i}}=-\pi_{i}^{(k+1)},\quad k=p-1,\dots,r
\end{equation}
which has the effect that the combined action $S+B$ behaves at the
initial and final slices as if its integrand were a minimal evolutionary
Lagrangian (of evolutionary order $p$). Now if
\begin{equation}
L_{\mathrm{min}}[q_{[0]},\dots,q_{[p]}]=L[q_{[0]},\dots,q_{[p]},\dots,q_{[r]}]+\frac{d}{dz}\left(F[q_{[0]},\dots,q_{[r-1]}]\right)+d_{a}K^{a},
\end{equation}
we see that $\int_{\Sigma(z)}dz\,F$ is a variational counterterm,
hence every process of reducing the evolutionary order of a Lagrangian
to a minimal order one also supplies a variational counterterm. The
converse is also clear, i.e. whenever $B=\int d^{n}y\,\mathcal{B}$
is a counterterm, the Lagrangian $L_{\mathrm{min}}=L+\frac{d}{dz}\mathcal{B}$
is a minimal evolutionary order Lagrangian \emph{possibly up to total
instantaneous divergences}.

Hence via Theorem \ref{Thm:Ev_Ord_Red}, we have proven the general
- albeit only local - existence of GHY-like counterterms at least
for surfaces transverse to the evolutionary direction. Our formulation
only reduces the order of the Lagrangian in the evolutionary direction,
and the minimal evolutionary order Lagrangian can have quite pathological
behaviour at the \emph{instantaneous} boundaries (i.e. the boundaries
$\partial\Sigma(z)$). It is then clear that if on the total manifold
$I\times\Sigma$ we can set up ``polar coordinates'' on each slice
$\Sigma(z)$ hence that $\partial\Sigma(z)$ corresponds to $z=\mathrm{const.}$
and $r=\mathrm{const}$ (with $r$ being the ``radial coordinate''),
then the behaviour of the action can also be regularized at the instantaneous
boundary by computing the relevant counterterm with now $r$ being
the evolutionary coordinate. This will not affect the Lagrangian itself
but might produce corner terms at the boundaries $\partial\Sigma(z_{1})$
and $\partial\Sigma(z_{0})$. It is beyond the scope of this text
whether these corner terms can be eliminated by further counterterms.

The line of thought presented above regarding the effective equivalence
of GHY-like variational counterterms and the use of minimal evolutionary
order Lagrangians then demonstrates why many authors (such as in \cite{PS})
are able to derive junction conditions in GR or more general modified
theories of gravity by first adding GHY-like counterterms to the action
at the junction surface $\Sigma$ \emph{from both sides} then taking
the variational process for broken extremals. This is essentially
the same as when the process in Section \ref{subsec:Junction-conditions-are-broken}
is carried out with the minimal evolutionary order Lagrangian.

\subsection{Systems of odd order\label{subsec:Systems-of-odd}}

Systems with odd evolutionary order are intrinsically degenerate and
can never have well-posed variational principles. Since they never
have Lagrangians whose evolutionary order is precisely half of that
of the field equations', the determination of the junction conditions
for these systems is much less clear. When the evolutionary order
of the field equations is $s=2r-1$, we can nonetheless show that
the Euler-Lagrange expressions exist distributionally when the field
variables have the same regularity at $\Sigma$ that they would have
for an equation of evolutionary order $s+1$, i.e. the fields must
be $C^{r-1}$ at $\Sigma$ with bounded left and right $r$th derivatives
there.

To verify this, we recall Eq. (\ref{eq:min_ord_lag_odd}), namely
that systems with odd evolutionary orders always have minimal evolutionary
order Lagrangians of the form
\begin{equation}
L[q_{[0]},\dots,q_{[r]}]=A_{j}[q_{[0]},\dots,q_{[r-1]}]q_{[r]}^{j}+B[q_{[0]},\dots,q_{[r-1]}].
\end{equation}
The Euler-Lagrange expressions can be written in the form
\begin{equation}
E_{i}=E_{i}^{(0)}-\frac{d}{dz}E_{i}^{(1)}+\dots+\left(-1\right)^{r-1}\frac{d^{r-1}}{dz^{r-1}}E_{i}^{(r-1)}+\left(-1\right)^{r}\frac{d^{r}}{dz^{r}}E_{i}^{(r)},
\end{equation}
where $E_{i}^{(k)}=\delta L/\delta q_{[k]}^{i}$ and we notice that
\begin{equation}
E_{i}^{(r)}=\pi_{i}^{(r)}=A_{i},
\end{equation}
where $\pi_{i}^{(r)}$ are the highest canonical momenta. On the other
hand, for $E_{i}^{(k)}$ with $0\le k\le r-1$, we have
\begin{align}
E_{i}^{(k)} & =\frac{\delta}{\delta q_{[k]}^{i}}\left(A_{j}q_{[r]}^{j}\right)+\frac{\delta B}{\delta q_{[k]}^{i}}\nonumber \\
 & =\sum_{l=0}^{p_{k}}\left(-1\right)^{l}\left[\sum_{h=0}^{p_{k}-l}\left(-1\right)^{h}\begin{pmatrix}l+h\\
l
\end{pmatrix}d_{b_{1}}\dots d_{b_{h}}\frac{\partial A_{j}}{\partial q_{[k],a_{1}...a_{l}b_{1}...b_{h}}^{i}}\right]q_{[r],a_{1}...a_{l}}^{j}+\frac{\delta B}{\delta q_{[k]}^{i}},
\end{align}
hence $E_{i}^{(k)}$ has evolutionary order $r$ but is linear in
$q_{[r]}^{i}$ and its instantaneous derivatives, while $E_{i}^{(r)}$
has evolutionary order $r-1$. Writing
\begin{equation}
E_{i}^{(k)}=\sum_{l=0}^{p_{k}}A_{ij}^{(k)|a_{1}...a_{l}}q_{[r],a_{1}...a_{l}}^{j}+B_{i}^{(k)},
\end{equation}
the coefficients $A_{ij}^{(k)|a_{1}...a_{l}}$ may in principle be
any complicated nonlinear function of the variables $q_{[r-1]}$,
hence in general such an expression is well-defined as a distribution
only if these coefficients are continuous functions, i.e. we once
again obtain the regularity condition that the $q^{i}$ must be $C^{r-1}$
at $\Sigma$ with bounded $r$th derivatives, which coincides with
the case when the field equations have evolutionary order $2r$. But
then a process completely analogous to the one presented in Section
\ref{subsec:Junction-conditions} gives us
\begin{equation}
E_{i}[q]=\overline{E_{i}[q]}+\sum_{k=1}^{r}(-1)^{k}\left[\pi_{i}^{(k)}\right]_{-}^{+}\delta_{[k-1]}^{\Sigma}.
\end{equation}
However we may then witness that as $\pi_{i}^{(r)}=A_{i}$ and this
is of evolutionary order $r-1$, we necessarily and always have $\left[\pi_{i}^{(r)}\right]_{-}^{+}=0$
as an identity, since the $r-1$th evolutionary derivatives are continuous.
Hence although the regularity conditions for a field theory with evolutionary
order $2r-1$ are the same as for on with evolutionary order $2r$,
we obtain \emph{less} junction conditions, namely now
\begin{equation}
E_{i}[q]=\overline{E_{i}[q]}+\sum_{k=1}^{r-1}(-1)^{k}\left[\pi_{i}^{(k)}\right]_{-}^{+}\delta_{[k-1]}^{\Sigma}
\end{equation}
and if singular sources are also introduced as $\rho_{i}=\overline{\rho_{i}}+\sum_{k=1}^{r}(-1)^{k}\sigma_{i}^{(k)}\delta_{[k-1]}^{\Sigma}$,
we get
\begin{equation}
\left[\pi_{i}^{(k)}\right]_{-}^{+}=\sigma_{i}^{(k)},\quad k=1,\dots,r-1
\end{equation}
and $\sigma_{i}^{(r)}=0$.

\subsection{Initial value constraints\label{subsec:Initial-value-constraints}}

A Lagrangian $L[q_{[0]},\dots,q_{[r]}]$ of evolutionary order $r$\footnote{If the evolutionary order of the Lagrangian is reducible, then one
should pass to a minimal evolutionary order Lagrangian. The degeneracy
of a nonminimal order Lagrangian is in a sense ``artifical'' and
carries no useful information.} is \emph{degenerate} if the matrix
\begin{equation}
W_{ij}:=\frac{\partial^{2}L}{\partial q_{[r]}^{i}\partial q_{[r]}^{j}}
\end{equation}
is not invertible (for simplicity, we assume that $L$ does not depend
on instantaneous derivatives of $q_{[r]}^{i}$, which is often the
case). As it is well-known \cite{HT}, this means that the Hamiltonian
formulation does not exist in a traditional manner, and there are
nontrivial initial conditions the initial data must satisfy. Although
the presence of initial value constraints do not necessarily imply
the existence of gauge symmetries (see \cite{HT} for a detailed analysis),
the converse is true, i.e. gauge symmetries always signal degeneracies
in the Lagrangian and there are always initial value constraints associated,
moreover in physically relevant situations, initial value constraints
do tend to appear in tandem with gauge symmetries. Hence in the sequel
we will suppose that any initial value constraints present in the
system are connected to gauge symmetries.

Initial value constraints appear in the junction formalism as constraints
on the singular sources. Recall \cite{I} that in GR along a timelike
or spacelike junction surface $\Sigma$, the junction conditions are
\begin{align}
0 & =\left[h_{ab}\right]_{-}^{+},\nonumber \\
8\pi\varepsilon S_{ab} & =\left[K_{ab}\right]_{-}^{+}-\left[K\right]_{-}^{+}h_{ab},
\end{align}
where $\varepsilon=\pm1$ depending on whether the surface is timelike
or spacelike and $S_{ab}$ is the surface energy-momentum tensor.
These junction conditions are however supplemented by the relations
\begin{equation}
\left[J^{a}\right]_{-}^{+}=-\varepsilon D_{b}S^{ab},\label{eq:I1}
\end{equation}
\begin{equation}
8\pi\left\langle J^{a}\right\rangle =D_{b}\left\langle K^{ab}\right\rangle -D_{a}\left\langle K\right\rangle ,\label{eq:I2}
\end{equation}
\begin{equation}
\left[\rho\right]_{-}^{+}=\varepsilon\left\langle K_{ab}\right\rangle S^{ab},\label{eq:I3}
\end{equation}
and
\begin{equation}
16\pi\left\langle \rho\right\rangle =\left\langle K\right\rangle ^{2}-\left\langle K_{ab}\right\rangle \left\langle K^{ab}\right\rangle -\varepsilon\hat{R}-\frac{1}{4}\kappa^{2}\left(S_{ab}S^{ab}-\frac{1}{2}S^{2}\right),\label{eq:I4}
\end{equation}
where $\left\langle -\right\rangle $ is the arithmetic mean (eg.
$\left\langle F\right\rangle =\frac{1}{2}\left(F_{+}+F_{-}\right)$)
of a discontinuous field quantity,
\begin{equation}
\rho^{\pm}=T_{\mu\nu}^{\pm}n^{\mu}n^{\nu},\quad J_{a}^{\pm}=T_{\mu\nu}^{\pm}e_{a}^{\mu}n^{\nu}
\end{equation}
$D$ and $\hat{R}$ are the Levi-Civita connection and scalar curvature
respectively, associated to the induced metric $h_{ab}$ on $\Sigma$.
These relations are consequences of the constraints \cite{PoiBook}
\begin{align}
16\pi\rho & =K^{2}-K^{ab}K_{ab}-\varepsilon\hat{R}\nonumber \\
8\pi J_{a} & =D_{b}K_{a}^{b}-D_{a}K
\end{align}
valid on any (timelike or spacelike) hypersurface and are obtained
by taking sums and differences of the constraints on each side of
the junction surface $\Sigma$.

Such relations are expected crop up when the junction conditions are
computed for any field theory that has gauge symmetries. Unfortunately,
a general analysis of such relations are very difficult, therefore
we provide a complete analysis only when the Lagrangian is completely
first order and the gauge symmetries are exact, vertical and have
differential index $1$. Let the Lagrangian be $L(q,q_{(1)},x)$ and
suppose that we have the infinitesimal gauge transformation
\begin{equation}
\delta q^{i}=Q_{A}^{i}\lambda^{A}+Q_{A}^{i,\mu}\lambda_{,\mu}^{A},
\end{equation}
where the coefficients $Q_{A}^{i}$ and $Q_{A}^{i,\mu}$ may in general
be field-dependent and the $\lambda^{A}$ are the gauge parameters.
Verticality means that the there are no corresponding transformation
$\delta x^{\mu}$ of the coordinates, exactness means that under the
above variations we have $\delta L=0$, and having differential index
$1$ means that the variation $\delta q^{i}$ contains at most first
derivatives of the gauge parameters. Let
\begin{equation}
\pi_{i}^{\mu}:=\frac{\partial L}{\partial q_{,\mu}^{i}}
\end{equation}
be the Lagrangian momenta. Then (essentially applying the second Noether
theorem) we have
\begin{align}
0 & =\delta L=E_{i}\delta q^{i}+d_{\mu}\left(\pi_{i}^{\mu}\delta q^{i}\right)\nonumber \\
 & =E_{i}Q_{A}^{i}\lambda^{A}+E_{i}Q_{A}^{i,\mu}\lambda_{,\mu}^{A}+d_{\mu}\left[\pi_{i}^{\mu}Q_{A}^{i}\lambda^{A}+\pi_{i}^{\mu}Q_{A}^{i,\nu}\lambda_{,\nu}^{A}\right],
\end{align}
and here it is possible to perform further ``integration by parts''
on the gauge parameter to get
\begin{equation}
0=\left[E_{i}Q_{A}^{i}-d_{\mu}\left(E_{i}Q_{A}^{i,\mu}\right)\right]\lambda^{A}+d_{\mu}\left[\left(E_{i}Q_{A}^{i,\mu}+\pi_{i}^{\mu}Q_{A}^{i}\right)\lambda^{A}+\pi_{i}^{\mu}Q_{A}^{i,\nu}\lambda_{,\nu}^{A}\right].
\end{equation}
Set
\begin{align}
\mathcal{N}_{A} & :=E_{i}Q_{A}^{i}-d_{\mu}\left(E_{i}Q_{A}^{i,\mu}\right)\nonumber \\
\mathcal{S}^{\mu} & :=\left(E_{i}Q_{A}^{i,\mu}+\pi_{i}^{\mu}Q_{A}^{i}\right)\lambda^{A}+\pi_{i}^{\mu}Q_{A}^{i,\nu}\lambda_{,\nu}^{A}.
\end{align}
Integration of the above relation on the manifold with $\lambda^{A}$
having compact support gives the \emph{off-shell} identities
\begin{equation}
\mathcal{N}_{A}=0\quad\text{and}\quad d_{\mu}\mathcal{S}^{\mu}=0.
\end{equation}
If the field equations are $E_{i}=\rho_{i}$ for some sources $\rho_{i}$,
then the former equation reduces to
\begin{equation}
0=\rho_{i}Q_{A}^{i}-d_{\mu}\left(\rho_{i}Q_{A}^{i,\mu}\right),
\end{equation}
which constrains the sources. For GR this is the well-known pseudo-conservation
law $\nabla_{\mu}T_{\ \nu}^{\mu}=0$.

Now take an arbitrary hypersurface $\Sigma$, introduce adapted coordinates
$(x^{\mu})=(y^{a},z)$, and choose the gauge parameters $\lambda^{A}$
such that they have compact support that intersects $\Sigma$ such
that $\mathrm{supp}\,\lambda\cap\Sigma$ is partition by $\Sigma$
into two disjoint regions. Integration of the conservation law $d_{\mu}\mathcal{S}^{\mu}=0$
over the manifold piece corresponding to $z\le0$ while the field
equations are enforced then produces the relation (recall that $\pi_{i}:=\pi_{i}^{z}$
is the canonical momentum associated to this splitting of the coordinates)
\begin{align}
0 & =\int d^{n}y\,\left[\left(\rho_{i}Q_{A}^{i,z}+\pi_{i}Q_{A}^{i}\right)\lambda^{A}+\pi_{i}Q_{A}^{i,\nu}\lambda_{,\nu}^{A}\right]\nonumber \\
 & =\int d^{n}y\,\left[\left(\rho_{i}Q_{A}^{i,z}+\pi_{i}Q_{A}^{i}\right)\lambda^{A}+\pi_{i}Q_{A}^{i,a}\lambda_{,a}^{A}+\pi_{i}Q_{A}^{i,z}\lambda_{,z}^{A}\right]\nonumber \\
 & =\int d^{n}y\,\left[\left(\rho_{i}Q_{A}^{i,z}+\pi_{i}Q_{A}^{i}-d_{a}\left(\pi_{i}Q_{A}^{i,a}\right)\right)\lambda^{A}+\pi_{i}Q_{A}^{i,z}\lambda_{,z}^{A}\right].
\end{align}
Since on the surface $\Sigma$ given by $z=0$ we can choose $\lambda^{A}$
and $\lambda_{,z}^{A}$ separately, we obtain
\begin{align}
0 & =\pi_{i}Q_{A}^{i,z}\nonumber \\
\rho_{i}Q_{A}^{i,z} & =d_{a}\left(\pi_{i}Q_{A}^{i,a}\right)-\pi_{i}Q_{A}^{i},
\end{align}
which are relations valid over any surface $\Sigma$.

Suppose now that we have a junction situation along $\Sigma$. Then
these relations are valid on $\Sigma$ as considered from either side
of the surface. The junction conditions are $\left[\pi_{i}\right]_{-}^{+}=\sigma_{i}$,
where $\sigma_{i}$ are the singular sources. Taking jumps and means
of the constraints, we get
\begin{align}
0 & =\sigma_{i}Q_{A}^{i,z},\quad0=\left\langle \pi_{i}\right\rangle Q_{A}^{i,z},\nonumber \\
\left[\rho_{i}\right]_{-}^{+}Q_{A}^{i,z} & =d_{a}\left(\sigma_{i}Q_{A}^{i,a}\right)-\sigma_{i}Q_{A}^{i},\nonumber \\
\left\langle \rho_{i}\right\rangle Q_{A}^{i,z} & =d_{a}\left(\left\langle \pi_{i}\right\rangle Q_{A}^{i,a}\right)-\left\langle \pi_{i}\right\rangle Q_{A}^{i},
\end{align}
where we have assumed that the coefficients $Q$ depend on the field
such that they are continuous at $\Sigma$, which is often the case.
If this condition is not met, then evidently the jumps and means must
involve them as well.

Similar relations are given when the Lagrangian is higher order and/or
the gauge symmetry has higher differential index. However the combinatorial
complexity of the resulting formulae and thus also the analysis involved
gets rapidly unmanageable. Moreover when the gauge symmetry is non-vertical
or it is a quasi-symmetry ($\delta L=d_{\mu}K^{\mu}$ for some non-closed
current $K^{\mu}$), the analysis again gets a lot more complicated.

In the very general case, what we can say is that in the presence
of gauge symmetries, there exist a number of constraint equations
\begin{equation}
\phi_{A}[q_{[0]},\dots,q_{[s]},\rho]=0
\end{equation}
which involve the sources explicitly and depend on the field variables
up to some unspecified evolutionary order $s$. These constraint equations
can be obtained directly from the degeneracy structure of the Lagrangian
(as detailed in eg. \cite{HT}) without having to use the second Noether
theorem. Then taking sums and differences at the junction surface
$\Sigma$, we obtain the sets
\begin{equation}
\left[\phi_{A}\right]_{-}^{+}=0,\quad\left\langle \phi_{A}\right\rangle =0,
\end{equation}
which provide a number of constraints on both the bulk and singular
sources at the junction surface.

\subsection{A cautionary example\label{subsec:A-cautionary-example}}

In this section we consider a scalar field theory in flat four dimensional
spacetime $M$ which illustrates that a naive usage of distributions
(eg. treating singular distributions as if they were functions) to
derive the junction conditions can produce erroneous results. This
is the main reason behind the perceived ambiguities in the junction
formalism that has been alluded to in the Introduction.

The scalar field theory is defined by the Lagrangian 
\begin{equation}
L=-\frac{1}{2}\phi(\partial\phi)^{2}\square\phi,\label{eq:Lag_scal}
\end{equation}
where $(\partial\phi)^{2}=\phi_{,\mu}\phi^{,\mu}=-\phi_{,t}^{2}+\phi_{,x}^{2}+\phi_{,y}^{2}+\phi_{,z}^{2}$
and $\square\phi=\phi_{,\mu}^{,\mu}=-\phi_{,tt}+\phi_{,xx}+\phi_{,yy}+\phi_{,zz}$.
Here the dynamical variable is denoted $\phi$, the independent variables
are $t,x,y,z$ and indices are raised and lowered by the canonical
flat metric $\eta_{\mu\nu}$. This Lagrangian belongs to the $G_{3}$-part
of Horndeski theory \cite{PS}.

The Euler-Lagrange expression is easily derived to be
\begin{equation}
E=\phi\left((\square\phi)^{2}-\phi^{,\mu\nu}\phi_{,\mu\nu}\right)-2\phi_{,\mu\nu}\phi^{,\mu}\phi^{,\nu},
\end{equation}
which is quadratic in the second derivatives of the field, hence there
is no Lagrangian for this field that is completely first order in
all variables. Suppose now that there is a defect at $z=0$ and the
latin indices $a,b,c\dots$ will take the values $t,x,y$. Splitting
the Euler-Lagrange expression gives
\begin{align}
E & =2\left(\phi\phi_{,a}^{,a}-\phi_{,z}^{2}\right)\phi_{,zz}-2\phi\phi_{,z}^{,a}\phi_{,za}-4\phi^{,a}\phi_{,za}\phi_{,z}\nonumber \\
 & +\phi\left(\phi_{,a}^{,a}\phi_{,b}^{,b}-\phi^{,ab}\phi_{,ab}\right)-2\phi_{,ab}\phi^{,a}\phi^{,b}.\label{eq:scal_eq_split}
\end{align}
The regularity assumptions on the field $\phi$ are that it is continuous
at $z=0$ with possibly discontinuous but bounded first $z$-derivatives
$\phi_{,z}$. Hence $\phi_{,a}$, $\phi_{,ab}$ etc. are also continuous
at $z=0$ and $\phi_{,z}$, $\phi_{,za}$, $\phi_{,zab}$ etc. have
well-defined, bounded left and right limits at $z=0$. Then distributionally
we have
\begin{equation}
\phi_{,zz}=\overline{\phi_{,zz}}+\left[\phi_{,z}\right]_{-}^{+}\delta_{\Sigma},
\end{equation}
where $\delta_{\Sigma}$ is the Dirac delta associated to the surface
$\Sigma$ determined by the equation $z=0$. If we now pretend that
$\delta_{\Sigma}$ is a function and insert this expression into $E$,
we find
\begin{equation}
E=\overline{E}+2\left(\phi\phi_{,a}^{,a}-\phi_{,z}^{2}\right)\left[\phi_{,z}\right]_{-}^{+}\delta_{\Sigma},
\end{equation}
where $\overline{E}$ is the collection of all the regular terms in
$E$. Note that $\phi_{,z}$, thus also $\phi_{,z}^{2}$ does not
have a well-defined value at $z=0$, but all coefficients of $\delta_{\Sigma}$
are evaluated at $z=0$. Define for any function $f$ continuous in
$M$ with bounded left and right limits at $z=0$ the mean value
\begin{equation}
\left\langle f\right\rangle (t,x,y)=\frac{1}{2}\left(f(t,x,y,0_{+})-f(t,x,y,0_{-})\right).
\end{equation}
Expressions such as $\phi_{,z}^{2}\delta_{\Sigma}$ are often interpreted
to mean $\left\langle \phi_{,z}^{2}\right\rangle \delta_{\Sigma}$,
hence we obtain
\begin{equation}
E_{\mathrm{naive}}=\overline{E}+2\left(\phi\phi_{,a}^{,a}-\left\langle \phi_{,z}^{2}\right\rangle \right)\left[\phi_{,z}\right]_{-}^{+}\delta_{\Sigma}.
\end{equation}
The Schwarz impossibility theorem \cite{Sch} guarantees that the
products of singular distributions like $\delta_{\Sigma}$ with discontinuous
functions cannot make sense if we wish to keep the meaning of the
product of regular distributions as the usual pointwise product. There
is thus no way to actually \emph{derive} this result, this interpretation
has to be supplied by hand.

The original Lagrangian (\ref{eq:Lag_scal}) in $3+1$ split form
is
\begin{equation}
L=-\frac{1}{2}\phi\left(\phi^{,a}\phi_{,a}+\phi_{,z}^{2}\right)\phi_{,zz}-\frac{1}{2}\phi\left(\phi^{,a}\phi_{,a}+\phi_{,z}^{2}\right)\phi_{,b}^{,b}.
\end{equation}
We may then reduce this to first order in $z$ by applying Theorem
\ref{Thm:Ev_Ord_Red} to it. In fact the reduction here is easy to
perform even without knowing the Theorem. Consider a function $F(\phi,\phi_{,a},\phi_{z})$
and its total $z$-derivative
\begin{equation}
\frac{dF}{dz}=\frac{\partial F}{\partial\phi}\phi_{,z}+\frac{\partial F}{\partial\phi_{,a}}\phi_{,za}+\frac{\partial F}{\partial\phi_{,z}}\phi_{,zz},
\end{equation}
and we want to solve the equation
\begin{equation}
\frac{\partial F}{\partial\phi_{,z}}=\frac{1}{2}\phi\left(\phi^{,a}\phi_{,a}+\phi_{,z}^{2}\right),
\end{equation}
which is integrable with eg.
\begin{equation}
F=\frac{1}{2}\phi\left(\phi^{,a}\phi_{,a}\phi_{,z}+\frac{1}{3}\phi_{,z}^{3}\right).
\end{equation}
Then the Lagrangian $L^{\prime}:=L+dF/dz$ has the same Euler-Lagrange
expression as $L$ but contains only first derivatives in the $z$
variable,
\begin{equation}
L^{\prime}=\frac{1}{6}\phi_{,z}^{4}+\frac{1}{2}\left(\phi^{,a}\phi_{,a}-\phi\phi_{,a}^{,a}\right)\phi_{,z}^{2}+\phi\phi^{,a}\phi_{,za}\phi_{,z}-\frac{1}{2}\phi\phi^{,a}\phi_{,a}\phi_{,b}^{,b}.
\end{equation}
Thus $L^{\prime}$ is a Lagrangian of minimal evolutionary order.
From the general theory we know that
\begin{equation}
E=\overline{E}-\left[\pi\right]_{-}^{+}\delta_{\Sigma},
\end{equation}
where
\begin{equation}
\pi=\frac{\delta L^{\prime}}{\delta\phi_{,z}}=\frac{\partial L^{\prime}}{\partial\phi_{,z}}-d_{a}\frac{\partial L^{\prime}}{\partial\phi_{,za}}.
\end{equation}
This is calculated to be
\begin{equation}
\pi=\frac{2}{3}\phi_{,z}^{3}-2\phi\phi_{,a}^{,a}\phi_{,z},
\end{equation}
hence
\begin{equation}
E=\overline{E}+2\left(\phi\phi_{,a}^{,a}\left[\phi_{,z}\right]_{-}^{+}-\frac{1}{3}\left[\phi_{,z}^{3}\right]_{-}^{+}\right)\delta_{\Sigma}.
\end{equation}
Let $\mathcal{E}:=2\left(\phi\phi_{,a}^{,a}\left[\phi_{,z}\right]_{-}^{+}-\frac{1}{3}\left[\phi_{,z}^{3}\right]_{-}^{+}\right)$
and $\mathcal{E}_{\mathrm{naive}}:=2\left(\phi\phi_{,a}^{,a}-\left\langle \phi_{,z}^{2}\right\rangle \right)\left[\phi_{,z}\right]_{-}^{+}$
be the coefficients of the Delta distribution in $E$ and $E_{\mathrm{naive}}$
respectively. In order to reduce notation clutter, let $a:=\phi_{,z}^{+}$
and $b:=\phi_{,z}^{-}$. Then
\begin{equation}
\left[\phi_{,z}^{3}\right]_{-}^{+}=a^{3}-b^{3},\quad\left\langle \phi_{,z}^{2}\right\rangle \left[\phi_{,z}\right]_{-}^{+}=\frac{1}{2}\left(a^{2}+b^{2}\right)\left(a-b\right).
\end{equation}
The second term here is
\begin{equation}
\left\langle \phi_{,z}^{2}\right\rangle \left[\phi_{,z}\right]_{-}^{+}=\frac{1}{2}\left(a^{3}-b^{3}+ab^{2}-a^{2}b\right),
\end{equation}
thus
\begin{equation}
\mathcal{E}_{\mathrm{naive}}-\mathcal{E}=\frac{1}{3}\left(b^{3}-a^{3}\right)+a^{2}b-ab^{2},
\end{equation}
which is a nonzero polynomial in $a,b$. Hence we must conclude that
$\mathcal{E}_{\mathrm{naive}}$ is not equivalent to $\mathcal{E}$.
However both the rigorous distributional calculation and the variational
framework establish that $\mathcal{E}$ is the correct singular part
for the Euler-Lagrange expression. The junction conditions for the
scalar field theory at the surface $z=0$ are then
\begin{equation}
\sigma=2\phi\phi_{,a}^{,a}\left[\phi_{,z}\right]_{-}^{+}-\frac{2}{3}\left[\phi_{,z}^{3}\right]_{-}^{+},
\end{equation}
where $\sigma$ is a singular source concentrated on $z=0$.

The conclusion for this section is that when one calculates the junction
conditions using distribution theory, all distributional relations
must be evaluated by smearing the distribution against a test function
and performing only ``valid'' manipulations. Starting with (\ref{eq:scal_eq_split}),
it would take a very lucky burst of intuition to be able to isolate
total $z$-derivatives from the coefficients of $\phi_{,zz}$ in a
way that the operator can be interpreted properly as a distribution.
Our whole formalism is based on the premise that if we find a minimal
evolutionary order Lagrangian, then the total derivatives in the Euler-Lagrange
expressions appear ``just right'' so that we are able to do that.

\subsection{Equations with evolutionary order $2$\label{subsec:Equations-with-evolutionary}}

It can be extremely laborous to derive the junction conditions even
for low order field equations, especially for modified theories of
gravity. Therefore it would be useful to invent a process which can
be applied to the equations of motion directly to obtain the junction
conditions, without having to go through the pains of reducing the
order of the Lagrangian and the calculating the canonical momenta
from the reduced Lagrangian.

The case with the most practical importance is when the field equations
have evolutionary order two and thus the minimal Lagrangian has evolutionary
order one. For this particular case, such a process exists. First
of all, the weak Euler-Lagrange expressions take the form
\begin{equation}
E_{i}[q]=\overline{E_{i}[q]}-\left[\pi_{i}\right]_{-}^{+}\delta^{\Sigma},
\end{equation}
hence there is only a single set of canonical momenta which need to
be computed, and if $L[q_{[0]},q_{[1]}]$ is a minimal order Lagrangian,
we have (for a regular field $q$)
\begin{align}
E_{i} & =\frac{\delta L}{\delta q_{[0]}^{i}}-\frac{d}{dz}\frac{\delta L}{\delta q_{[1]}^{i}}=\frac{\delta L}{\delta q_{[0]}^{i}}-\frac{d\pi_{i}}{dz},
\end{align}
where the terms with second evolutionary order are all concentrated
in $d\pi_{i}/dz$. In particular, if we expand the total derivative,
we get
\begin{equation}
\frac{d\pi_{i}}{dz}=\frac{\partial\pi_{i}}{\partial z}+\sum_{k=0}^{p_{0}}\frac{\partial\pi_{i}}{\partial q_{[0],a_{1}...a_{k}}^{j}}q_{[1],a_{1}...a_{k}}^{j}+\sum_{k=0}^{p_{1}}\frac{\partial\pi_{i}}{\partial q_{[1],a_{1}...a_{k}}^{j}}q_{[2],a_{1}...a_{k}}^{j},
\end{equation}
and hence the field equations take the form
\begin{equation}
E_{i}[q_{[0]},q_{[1]},q_{[2]}]=\sum_{k=0}^{p_{1}}W_{ij}^{a_{1}...a_{k}}q_{[2],a_{1}...a_{k}}^{j}+V[q_{[0]},q_{[1]}],\label{eq:asd}
\end{equation}
where we know that for each $0\le k\le p_{1}$,
\begin{equation}
W_{ij}^{a_{1}...a_{k}}=-\frac{\partial\pi_{i}}{\partial q_{[1],a_{1}...a_{k}}^{j}}.
\end{equation}
The coefficients $W_{ij}^{a_{1}...a_{k}}$ can be read off the expressions
$E_{i}$ and we know that the above equation is integrable for $\pi_{i}$.
We can then construct
\begin{equation}
\Pi_{i}[q_{[0]},q_{[1]}]:=-\int_{0}^{1}\sum_{k=0}^{p_{1}}W_{ij}^{a_{1}...a_{k}}[q_{[0]},tq_{[1]}]q_{[1],a_{1}...a_{k}}^{j},
\end{equation}
which will necessarily satisfy $W_{ij}^{a_{1}...a_{k}}=-\partial\Pi_{i}/\partial q_{[1],a_{1}...a_{k}}^{j}$.
Now, we cannot know for sure that $\Pi_{i}$ is equivalent to $\pi_{i}$,
we only know that $\pi_{i}-\Pi_{i}$ does not depend on $q_{[1]}$
in any way. But the pertinent regularity condition for a system with
evolutionary order $2$ is that the field $q$ is $C^{0}$ at the
junction surface with the first evolutionary derivatives $q_{[1]}^{i}$
having a jump discontinuity there. Hence $\left[\Pi_{i}\right]_{-}^{+}=\left[\pi_{i}\right]_{-}^{+}$
and thus the junction conditions can be computed from $\Pi_{i}$.

Another useful consequence of this approach is that for generally
covariant theories, it becomes possible to compute the junction conditions
in a more geometric manner than the process outlined in Section \ref{subsec:Junction-conditions}.
The second evolutionary derivatives of the dynamical variable take
the distributional form
\begin{equation}
q_{[2]}^{i}=\overline{q_{[2]}^{i}}+\left[q_{[1]}^{i}\right]_{-}^{+}\delta^{\Sigma},
\end{equation}
which when plugged back into (\ref{eq:asd}) gives
\begin{equation}
E_{i}[q]_{\mathrm{naive}}=\overline{E_{i}[q]}+\sum_{k=0}^{p_{1}}W_{ij}^{a_{1}...a_{k}}[q_{[0]},q_{[1]}]\left[q_{[1],a_{1}...a_{k}}^{j}\right]_{-}^{+}\delta^{\Sigma}.\label{eq:asd_naive}
\end{equation}
It should be emphasized that this is a \emph{naive} calculation (cf.
Section \ref{subsec:A-cautionary-example}), because the coefficients
$W_{ij}^{a_{1}...a_{k}}[q_{[0]},q_{[1]}]$ are not in general continuous
at $\Sigma$, hence they cannot multiply the Dirac delta. As explained
in Section \ref{subsec:A-cautionary-example}, such terms are usually
interpreted as average values, but this is a mathematically untenable
interpretation that can produce wrong results. The utility of this
is that the $W_{ij}^{a_{1}...a_{k}}$ can be read off as the coefficients
of the $\left[q_{[1],a_{1}...a_{k}}^{j}\right]_{-}^{+}\delta^{\Sigma}$
when the above is interpreted as a formal expression. One may then
wonder why not just simply determine the $W_{ij}^{a_{1}...a_{k}}$
from the field equations directly, as in (\ref{eq:asd})? For modified
theories of gravity, reading off the $W_{ij}^{a_{1}...a_{k}}$ from
the $E_{i}$ requires splitting the variables, which often leads to
very complicated and unwieldy coordinate expressions, but the ``formal
expression'' (\ref{eq:asd_naive}) can often be computed in a geometric
manner. Thus we are able to compute the junction conditions ``naively'',
which can give wrong results but is otherwise the practically simplest
approach, then we are able to infer from the wrong naive result the
actually correct junction conditions. Rather than attempting further
explanations here, we direct the reader to Section \ref{sec:Example-application}
for an example application of this method.

\section{Example application\label{sec:Example-application}}

We revisit the Horndeski-class scalar-tensor theory the author has
investigated in \cite{RG} whose Lagrangian is
\begin{equation}
L=\left(B(\phi)X-2\xi(\phi)\square\phi X+\frac{1}{2}F(\phi)R\right)\sqrt{\mathfrak{g}},
\end{equation}
in which the dynamical variables are a metric tensor $g_{\mu\nu}$
and a scalar field $\phi$ with $X=-\frac{1}{2}g^{\mu\nu}\nabla_{\mu}\phi\nabla_{\nu}\phi$
and $B,\xi,F$ are arbitrary unspecified smooth functions of the scalar
field. Suppose that $M$ is four dimensional (this is not a necessary
restriction however) and $\Sigma$ is a junction surface whose causal
type is completely arbitrary, including the case when it is signature-changing.
The Lagrangian is second order in all coordinates, but its field equations
are also second order. The field equations are quadratic in the scalar
fields's second derivative, hence there is no totally first order
Lagrangian for the field. To determine the junction conditions, we
employ the process outlined in Section \ref{subsec:Equations-with-evolutionary},
i.e. we work with the field equations directly, which are
\begin{align}
E_{2}^{\mu\nu} & =\frac{1}{2}BXg^{\mu\nu}+\frac{1}{2}B\phi^{;\mu}\phi^{;\nu},\nonumber \\
E_{3}^{\mu\nu} & =-2\xi^{\prime}X\left(\phi^{;\mu}\phi^{;\nu}+Xg^{\mu\nu}\right)-2\xi\left(\frac{1}{2}\phi^{;\mu}\phi^{;\nu}\square\phi+\frac{1}{2}\phi_{;\kappa\lambda}\phi^{;\kappa}\phi^{;\lambda}g^{\mu\nu}-\phi^{;\kappa}\phi_{;\kappa}^{(;\mu}\phi^{;\nu)}\right),\nonumber \\
E_{4}^{\mu\nu} & =-\frac{1}{2}FG^{\mu\nu}+\frac{1}{2}F^{\prime}\left(\phi^{;\mu\nu}-\square\phi g^{\mu\nu}\right)+\frac{1}{2}F^{\prime\prime}\left(\phi^{;\mu}\phi^{;\nu}+2Xg^{\mu\nu}\right),\nonumber \\
E_{2}^{\phi} & =B\square\phi-B^{\prime}X,\nonumber \\
E_{3}^{\phi} & =2\xi\left(R_{\mu\nu}\phi^{;\mu}\phi^{;\nu}+\phi_{;\mu\nu}\phi^{;\mu\nu}-\left(\square\phi\right)^{2}\right)+4\xi^{\prime\prime}X^{2}+4\xi^{\prime}\phi_{;\mu\nu}\phi^{;\mu}\phi^{;\nu},\nonumber \\
E_{4}^{\phi} & =\frac{1}{2}F^{\prime}R\label{eq:feq_H}
\end{align}
with
\begin{equation}
E^{\mu\nu}=\frac{1}{\sqrt{\mathfrak{g}}}\frac{\delta L}{\delta g_{\mu\nu}},\quad E^{\phi}=\frac{1}{\sqrt{\mathfrak{g}}}\frac{\delta L}{\delta\phi}
\end{equation}
with $E^{\mu\nu}=\sum_{i=2,3,4}E_{i}^{\mu\nu}$ and $E^{\phi}=\sum_{i=2,3,4}E_{i}^{\phi}$.
Per the discussion in Section \ref{subsec:Equations-with-evolutionary},
we work covariantly as much as it is possible. The regularity conditions
are that $g_{\mu\nu}$ and $\phi$ are $C^{0}$ at $\Sigma$ with
bounded but possibly discontinuous first derivatives and let $l^{\mu}$
be a vector field along $\Sigma$ transversal to $\Sigma$ and pointing
from $M_{-}$ to $M_{+}$. We take $n_{\mu}$ to be the unique covector
field along $\Sigma$ which is normal to $\Sigma$ and satisfies $n_{\mu}l^{\mu}=1$.
We will also use the notations and conventions of \emph{rigged surfaces},
which are reviewed in Appendix \ref{sec:Rigged-surfaces} for the
reader's convenience. If $\Theta_{\Sigma}$ is the Heaviside step
function, which is being $1$ in $M_{+}$ and $0$ in $M_{-}$, the
covector valued Dirac delta distribution is
\begin{equation}
\delta_{\mu}^{\Sigma}=\partial_{\mu}\Theta_{\Sigma}=\delta^{\Sigma}n_{\mu},
\end{equation}
where $\delta^{\Sigma}$ is the scalar Dirac delta distribution (see
eg. \cite{MS} or \cite{Rac}). We may then compute that
\begin{equation}
\left[g_{\mu\nu,\rho}\right]_{-}^{+}=\gamma_{\mu\nu}n_{\rho},
\end{equation}
where $\gamma_{\mu\nu}$ is a tensor field along $\Sigma$ calculated
as $\left[g_{\mu\nu,\rho}\right]_{-}^{+}l^{\rho}$. This is because
only the transversal derivatives of $g_{\mu\nu}$ suffer jumps at
$\Sigma$. Analogously, for the scalar field we may write
\begin{equation}
\left[\phi_{;\rho}\right]_{-}^{+}=\chi n_{\rho},
\end{equation}
where $\chi=\phi_{;\rho}l^{\rho}$. The second covariant derivatives
of the scalar field are computed as
\begin{equation}
\phi_{;\mu\nu}=\overline{\phi_{;\mu\nu}}+\left[\phi_{;\nu}\right]_{-}^{+}n_{\mu}\delta^{\Sigma}=\overline{\phi_{;\mu\nu}}+\chi n_{\mu}n_{\nu}\delta^{\Sigma},
\end{equation}
which also gives
\begin{equation}
\square\phi=\overline{\square\phi}+\varepsilon(n)\chi\delta^{\Sigma},\quad\varepsilon(n)=n_{\mu}n^{\mu}.
\end{equation}
For the metric sector, we then have
\begin{equation}
\left[\Gamma_{\ \mu\nu}^{\rho}\right]_{-}^{+}=\frac{1}{2}\left(n_{\mu}\gamma_{\nu}^{\rho}+n_{\nu}\gamma_{\mu}^{\rho}-n^{\rho}\gamma_{\mu\nu}\right),
\end{equation}
and the curvature tensor is
\begin{align}
R_{\ \sigma\mu\nu}^{\rho} & =\overline{R_{\ \sigma\mu\nu}^{\rho}}+\left(n_{\mu}\left[\Gamma_{\ \nu\sigma}^{\rho}\right]_{-}^{+}-n_{\nu}\left[\Gamma_{\ \mu\sigma}^{\rho}\right]_{-}^{+}\right)\delta^{\Sigma}\nonumber \\
 & =\overline{R_{\ \sigma\mu\nu}^{\rho}}+\frac{1}{2}\left(n_{\mu}n_{\sigma}\gamma_{\nu}^{\rho}-n_{\nu}n_{\sigma}\gamma_{\mu}^{\rho}+n^{\rho}n_{\nu}\gamma_{\mu\sigma}-n^{\rho}n_{\mu}\gamma_{\sigma\nu}\right)\delta^{\Sigma}.
\end{align}
Let us write
\begin{equation}
Q_{\rho\sigma\mu\nu}=\frac{1}{2}\left(n_{\mu}n_{\sigma}\gamma_{\rho\nu}-n_{\nu}n_{\sigma}\gamma_{\rho\mu}+n_{\rho}n_{\nu}\gamma_{\mu\sigma}-n_{\rho}n_{\mu}\gamma_{\sigma\nu}\right).
\end{equation}
From this we obtain the rest of the curvature tensors as
\begin{align}
R_{\mu\nu} & =\overline{R_{\mu\nu}}+Q_{\mu\nu}\delta^{\Sigma},\quad R=\overline{R}+Q\delta^{\Sigma},\nonumber \\
G_{\mu\nu} & =\overline{G_{\mu\nu}}+\mathcal{G}_{\mu\nu}\delta^{\Sigma},
\end{align}
where
\begin{align}
Q_{\mu\nu} & =\frac{1}{2}\left(n_{\mu}\gamma_{\nu}+\gamma_{\mu}n_{\nu}-n_{\mu}n_{\nu}\gamma-\varepsilon(n)\gamma_{\mu\nu}\right),\nonumber \\
Q & =\gamma^{+}-\varepsilon(n)\gamma
\end{align}
with $\gamma_{\mu}=\gamma_{\mu\nu}n^{\nu}$, $\gamma=\gamma_{\mu}^{\mu}$
and $\gamma^{+}=\gamma_{\mu\nu}n^{\mu}n^{\nu}$. We also then get
\begin{equation}
\mathcal{G}_{\mu\nu}=\frac{1}{2}\left(n_{\mu}\gamma_{\nu}+\gamma_{\mu}n_{\nu}-n_{\mu}n_{\nu}\gamma+\varepsilon(n)\left(\gamma g_{\mu\nu}-\gamma_{\mu\nu}\right)-\gamma^{+}g_{\mu\nu}\right).
\end{equation}
These calculations are standard \cite{BI,MS,Sen}We now focus on the
terms in the field equations (\ref{eq:feq_H}) which contain second
derivatives. There is a problematic term in $E_{3}^{\phi}$ involving
$\phi_{;\mu\nu}\phi^{;\mu\nu}-\left(\square\phi\right)^{2}$ which
would produce products of Dirac deltas. These products however cancel.
As argued in \cite{RSV}, even this cancellation cannot be interpreted
rigorously, since it assumes that we are able to embed singular distributions
into some generalized function algebra in which products behave in
a way we expect. However we understand that if we decompose this problematic
expression in adapted coordinates, it will be \emph{affine} in $\phi_{,00}$
where now $x^{0}$ is the evolutionary coordinate, and the Dirac delta
appears solely in this term linearly. It may be verified explicitly
in adapted coordinates that the formal cancellation of Dirac deltas
when the relevant distributional expressions are substituted gives
the correct result. We then get
\begin{equation}
\phi_{;\mu\nu}\phi^{;\mu\nu}-\left(\square\phi\right)^{2}=\overline{\phi_{;\mu\nu}\phi^{;\mu\nu}-\left(\square\phi\right)^{2}}+2\left(\overline{\phi_{;\mu\nu}}n^{\mu}n^{\nu}-\varepsilon(n)\overline{\square\phi}\right)\chi\delta^{\Sigma},
\end{equation}
and we look at the coefficient of $\chi\delta^{\Sigma}$. Despite
appearances, this expression does not contain second evolutionary
derivatives. Let $P_{\ \nu}^{\mu}=\delta_{\nu}^{\mu}-l^{\mu}n_{\nu}$
be the projection that projects tangentially to $\Sigma$, annihilating
the $l^{\mu}$-directed parts of vectors. Then
\begin{equation}
n^{\mu}=P_{\ \nu}^{\mu}n^{\nu}+\varepsilon(n)l^{\mu}=n_{\ast}^{\mu}+\varepsilon(n)l^{\mu},
\end{equation}
where we note that $n_{\ast}^{\mu}n_{\mu}=0$, hence $n_{\ast}^{\mu}$
is tangent to $\Sigma$. Then (dropping the overlines)
\begin{equation}
\phi_{;\mu\nu}n^{\mu}n^{\nu}=\phi_{;\mu\nu}n_{\ast}^{\mu}n_{\ast}^{\nu}+2\varepsilon(n)\phi_{;\mu\nu}l^{\mu}n_{\ast}^{\nu}+\varepsilon(n)^{2}\phi_{;\mu\nu}l^{\mu}l^{\nu},
\end{equation}
while $\varepsilon(n)\square\phi=\varepsilon(n)\phi_{;\mu\nu}g^{\mu\nu}$
and here we insert the completeness relation
\begin{equation}
g^{\mu\nu}=h_{\ast}^{\mu\nu}+l^{\mu}n_{\ast}^{\nu}+n_{\ast}^{\mu}l^{\nu}+\varepsilon(n)l^{\mu}l^{\nu}
\end{equation}
where $h_{\ast}^{\mu\nu}=g^{\rho\sigma}P_{\ \rho}^{\mu}P_{\ \sigma}^{\nu}$
is tangent to $\Sigma$, giving
\begin{equation}
\varepsilon(n)\square\phi=\varepsilon(n)\phi_{;\mu\nu}h_{\ast}^{\mu\nu}+2\varepsilon(n)\phi_{;\mu\nu}l^{\mu}n_{\ast}^{\nu}+\varepsilon(n)^{2}\phi_{;\mu\nu}l^{\mu}l^{\nu},
\end{equation}
hence
\begin{align}
\phi_{;\mu\nu}n^{\mu}n^{\nu}-\varepsilon(n)\square\phi & =\phi_{;\mu\nu}n_{\ast}^{\mu}n_{\ast}^{\nu}+2\varepsilon(n)\phi_{;\mu\nu}l^{\mu}n_{\ast}^{\nu}+\varepsilon(n)^{2}\phi_{;\mu\nu}l^{\mu}l^{\nu}\nonumber \\
 & -\varepsilon(n)\phi_{;\mu\nu}h_{\ast}^{\mu\nu}-2\varepsilon(n)\phi_{;\mu\nu}l^{\mu}n_{\ast}^{\nu}-\varepsilon(n)^{2}\phi_{;\mu\nu}l^{\mu}l^{\nu}\nonumber \\
 & =\phi_{;\mu\nu}\left(n_{\ast}^{\mu}n_{\ast}^{\nu}-\varepsilon(n)h_{\ast}^{\mu\nu}\right).
\end{align}
If we introduce coordinates adapted to $l^{\mu}$ we get (latin indices
take the values $1,2,3$ and see Appendix \ref{sec:Rigged-surfaces}
for the notation for various projected quantities)
\begin{align}
\phi_{;\mu\nu}n^{\mu}n^{\nu}-\varepsilon(n)\square\phi & =\phi_{;ab}\left(\nu^{a}\nu^{b}-\varepsilon(n)h_{\ast}^{ab}\right)\nonumber \\
 & =\left(\phi_{,ab}-\Gamma_{\ ab}^{\rho}\phi_{,\rho}\right)\left(\nu^{a}\nu^{b}-\varepsilon(n)h_{\ast}^{ab}\right)\nonumber \\
 & =\left(\phi_{,ab}-\Gamma_{\ ab}^{c}\phi_{,c}-\Gamma_{\ ab}^{0}\phi_{,0}\right)\left(\nu^{a}\nu^{b}-\varepsilon(n)h_{\ast}^{ab}\right)\nonumber \\
 & =\left(\phi_{||ab}+\mathcal{K}_{ab}\phi_{l}\right)\left(\nu^{a}\nu^{b}-\varepsilon(n)h_{\ast}^{ab}\right).
\end{align}
For simplicity, let $\Delta:=\phi_{;\mu\nu}n^{\mu}n^{\nu}-\varepsilon(n)\square\phi$.
Introducing $\omega^{ab}=\nu^{a}\nu^{b}-\varepsilon(n)h_{\ast}^{ab}$,
we get
\begin{equation}
\Delta=\phi_{||ab}\omega^{ab}+\mathcal{K}_{ab}\omega^{ab}\phi_{l}.
\end{equation}
 If we plug in the distributional expressions into the field equations
(\ref{eq:feq_H}), we get
\begin{align}
E_{\mathrm{n}}^{\mu\nu} & =\overline{E^{\mu\nu}}-2\xi\left(\frac{1}{2}\varepsilon(n)\phi^{;\mu}\phi^{;\nu}+\frac{1}{2}\phi_{n}^{2}g^{\mu\nu}-\phi_{n}n^{(\mu}\phi^{;\nu)}\right)\chi\delta^{\Sigma}\nonumber \\
 & -\frac{1}{2}F\mathcal{G}^{\mu\nu}\delta^{\Sigma}+\frac{1}{2}F^{\prime}\left(n^{\mu}n^{\nu}-\varepsilon(n)g^{\mu\nu}\right)\chi\delta^{\Sigma}
\end{align}
where $\phi_{n}=n^{\mu}\phi_{;\mu}$ and
\begin{align}
E_{\mathrm{n}}^{\phi} & =\overline{E^{\phi}}+\left(B\varepsilon(n)+4\xi\Delta+4\xi^{\prime}\phi_{n}^{2}\right)\chi\delta^{\Sigma}\nonumber \\
 & +\left(\frac{1}{2}F^{\prime}Q+2\xi Q_{\mu\nu}\phi^{;\mu}\phi^{;\nu}\right)\delta^{\Sigma},
\end{align}
where the subscripts ``$\mathrm{n}$'' indicates that these are
naive formal expressions. Now we want to factor $\gamma_{\rho\sigma}$
out of the $Q$ tensors:
\begin{align}
Q_{\mu\nu} & =\frac{1}{2}\left(n_{\mu}\delta_{\nu}^{(\rho}n^{\sigma)}+\delta_{\mu}^{(\rho}n_{\nu}n^{\sigma)}-n_{\mu}n_{\nu}g^{\rho\sigma}-\varepsilon(n)\delta_{\mu}^{(\rho}\delta_{\nu}^{\sigma)}\right)\gamma_{\rho\sigma}\nonumber \\
 & =A_{\mu\nu}{}^{|\rho\sigma}\gamma_{\rho\sigma},
\end{align}
with
\begin{equation}
A_{\mu\nu}{}^{|\rho\sigma}=\frac{1}{2}\left(n_{\mu}\delta_{\nu}^{(\rho}n^{\sigma)}+\delta_{\mu}^{(\rho}n_{\nu}n^{\sigma)}-n_{\mu}n_{\nu}g^{\rho\sigma}-\varepsilon(n)\delta_{\mu}^{(\rho}\delta_{\nu}^{\sigma)}\right)
\end{equation}
then
\begin{equation}
Q=A^{|\rho\sigma}\gamma_{\rho\sigma},\quad A^{|\rho\sigma}=g^{\mu\nu}A_{\mu\nu}{}^{|\rho\sigma},
\end{equation}
and
\begin{equation}
\mathcal{G}_{\mu\nu}=B_{\mu\nu}{}^{|\rho\sigma}\gamma_{\rho\sigma},\quad B_{\mu\nu}{}^{|\rho\sigma}=A_{\mu\nu}{}^{|\rho\sigma}-\frac{1}{2}A^{|\rho\sigma}g_{\mu\nu}.
\end{equation}
We then write
\begin{align}
E_{\mathrm{n}}^{\mu\nu} & =\overline{E^{\mu\nu}}+W^{\mu\nu|\rho\sigma}\gamma_{\rho\sigma}\delta^{\Sigma}+W^{\mu\nu}\chi\delta^{\Sigma},\nonumber \\
E_{\mathrm{n}}^{\phi} & =\overline{E^{\phi}}+W^{\rho\sigma}\gamma_{\rho\sigma}\delta^{\Sigma}+W\chi\delta^{\Sigma},
\end{align}
since as explained in Section \ref{subsec:Equations-with-evolutionary},
we are able to compute the (scalarized) canonical momenta of the system
as
\begin{align}
\Pi^{\mu\nu} & =-\int_{0}^{1}\left(W^{\mu\nu|\rho\sigma}\{t\}g_{\rho\sigma,l}+W^{\mu\nu}\{t\}\phi_{l}\right)dt,\nonumber \\
\Pi^{\phi} & =-\int_{0}^{1}\left(W^{|\rho\sigma}\{t\}g_{\rho\sigma,l}+W\{t\}\phi_{l}\right)dt,
\end{align}
where $W^{\dots}\{t\}$ means that $W^{\dots}$ is evaluated at $tg_{\mu\nu,l}$
and $t\phi_{l}$, with $g_{\mu\nu,l}:=l^{\rho}g_{\mu\nu,\rho}$ and
$\phi_{l}:=l^{\rho}\phi_{;\rho}$. We can read off the coefficients
as
\begin{align}
W^{\mu\nu|\rho\sigma} & =-\frac{1}{2}FB^{\mu\nu|\rho\sigma},\nonumber \\
W^{\mu\nu} & =\frac{1}{2}F^{\prime}\left(n^{\mu}n^{\nu}-\varepsilon(n)g^{\mu\nu}\right)\nonumber \\
 & -\xi\left(\varepsilon(n)\phi^{;\mu}\phi^{;\nu}+\phi_{n}^{2}g^{\mu\nu}-2\phi_{n}n^{(\mu}\phi^{;\nu)}\right),\nonumber \\
W^{|\rho\sigma} & =\frac{1}{2}F^{\prime}A^{|\rho\sigma}+2\xi A_{\mu\nu}{}^{|\rho\sigma}\phi^{;\mu}\phi^{;\nu},\nonumber \\
W & =B\varepsilon(n)+4\xi\Delta+4\xi^{\prime}\phi_{n}^{2}.
\end{align}
First we verify that $W^{\mu\nu|\rho\sigma}$ and $W^{\mu\nu}$ are
tangent to $\Sigma$ in that the contraction with $n_{\nu}$ vanishes.
For $W^{\mu\nu|\rho\sigma}$, this is equivalent to verifying that
$B_{\mu\nu}{}^{|\rho\sigma}n^{\nu}=0$, and $B_{\mu\nu}{}^{|\rho\sigma}n_{\sigma}=0$.
We have
\begin{equation}
B_{\mu\nu}{}^{|\rho\sigma}=\frac{1}{2}\left(n_{\mu}\delta_{\nu}^{(\rho}n^{\sigma)}+\delta_{\mu}^{(\rho}n_{\nu}n^{\sigma)}-n_{\mu}n_{\nu}g^{\rho\sigma}+\varepsilon(n)\left(g_{\mu\nu}g^{\rho\sigma}-\delta_{\mu}^{(\rho}\delta_{\nu}^{\sigma)}\right)-g_{\mu\nu}n^{\rho}n^{\sigma}\right),
\end{equation}
and
\begin{equation}
B_{\mu\nu}{}^{|\rho\sigma}n^{\nu}=B_{\mu\nu}{}^{|\rho\sigma}n_{\sigma}=0
\end{equation}
indeed hold. For $W^{\mu\nu}$ the analogous calculation is easily
performed to yield
\begin{equation}
W^{\mu\nu}n_{\nu}=0.
\end{equation}
Finally, for $W^{|\rho\sigma}$, we can check $A^{|\rho\sigma}n_{\sigma}=0$
and $A_{\mu\nu}{}^{|\rho\sigma}\phi^{;\mu}\phi^{;\nu}n_{\sigma}=0$
separately:
\[
A^{|\rho\sigma}n_{\sigma}=\left(n^{\rho}n^{\sigma}-\varepsilon(n)g^{\rho\sigma}\right)n_{\sigma}=0
\]
and
\begin{equation}
A_{\mu\nu}{}^{|\rho\sigma}\phi^{;\mu}\phi^{;\nu}=\frac{1}{2}\left(2\phi^{(;\rho}n^{\sigma)}\phi_{n}-g^{\rho\sigma}\phi_{n}^{2}-\varepsilon(n)\phi^{;\rho}\phi^{;\sigma}\right),
\end{equation}
hence
\begin{equation}
A_{\mu\nu}{}^{|\rho\sigma}\phi^{;\mu}\phi^{;\nu}n_{\sigma}=0.
\end{equation}
Also we notice that
\begin{equation}
W^{\mu\nu}=\frac{1}{2}F^{\prime}A^{|\mu\nu}+2\xi A_{\rho\sigma}{}^{|\mu\nu}\phi^{;\rho}\phi^{;\sigma}=W^{|\mu\nu},
\end{equation}
hence these two coefficients coincide.

We now use coordinates adapted to $l^{\mu}$ and compute the coefficients
$W^{\dots}$ in decomposed form. As these coefficients are all tangent,
only the $W^{ab|cd}$ and $W^{ab}=W^{|ab}$ parts will be nonzero:
\begin{equation}
W^{ab}=\frac{1}{2}F^{\prime}\omega^{ab}+\xi\left(2\phi_{n}\nu^{(a}\phi^{;b)}-\varepsilon(n)\phi^{;a}\phi^{;b}-\phi_{n}^{2}h_{\ast}^{ab}\right),
\end{equation}
and to further evaluate this expression we insert
\begin{align}
\phi_{n} & =n^{\mu}\phi_{;\mu}=\varepsilon(n)\phi_{l}+\nu^{a}\phi_{,a},\nonumber \\
\phi^{a} & =g^{a\mu}\phi_{;\mu}=\nu^{a}\phi_{l}+h_{\ast}^{ab}\phi_{,b},
\end{align}
to get
\begin{equation}
W^{ab}=\frac{1}{2}F^{\prime}\omega^{ab}+\xi\left[\varepsilon(n)\omega^{ab}\phi_{l}^{2}+2\omega^{ab}\nu^{c}\phi_{,c}\phi_{l}+\left(h_{\ast}^{bd}\omega^{ac}+h_{\ast}^{ad}\nu^{b}\nu^{c}-h_{\ast}^{ab}\nu^{d}\nu^{c}\right)\phi_{,c}\phi_{,d}\right].
\end{equation}
We then compute
\begin{equation}
W^{ab|cd}=-\frac{1}{4}F\left(\nu^{a}h_{\ast}^{b(c}\nu^{d)}+h_{\ast}^{a(c}\nu^{d)}\nu^{b}-\nu^{a}\nu^{b}h_{\ast}^{cd}+\varepsilon(n)\left(h_{\ast}^{ab}h_{\ast}^{cd}-h_{\ast}^{a(c}h_{\ast}^{d)b}\right)-h_{\ast}^{ab}\nu^{c}\nu^{d}\right),
\end{equation}
and finally
\begin{align}
W & =B\varepsilon(n)+4\xi\left(\phi_{||ab}\omega^{ab}+\mathcal{K}_{ab}\omega^{ab}\phi_{l}\right)\nonumber \\
 & +4\xi^{\prime}\left(\varepsilon(n)^{2}\phi_{l}^{2}+2\varepsilon(n)\nu^{a}\phi_{,a}\phi_{l}+\nu^{a}\nu^{b}\phi_{,a}\phi_{,b}\right).
\end{align}
First we note that $W^{ab|cd}\{t\}=W^{ab|cd}$ as these coefficients
do not depend on the first evolutionary derivatives of neither the
metric nor the scalar field. We then calculate the (scalarized) metric
canonical momenta by
\begin{align}
\Pi^{ab} & =-\int_{0}^{1}\left(W^{ab|cd}\{t\}h_{cd,l}+W^{ab}\{t\}\phi_{l}\right)dt\nonumber \\
 & =-W^{ab|cd}h_{cd,l}-\int_{0}^{1}W^{ab}\{t\}\phi_{l}\,dt,
\end{align}
where we note that
\begin{equation}
W^{ab}\{t\}=\frac{1}{2}F^{\prime}\omega^{ab}+\xi\left[t^{2}\varepsilon(n)\omega^{ab}\phi_{l}^{2}+2t\omega^{ab}\nu^{c}\phi_{,c}\phi_{l}+\left(h_{\ast}^{bd}\omega^{ac}+h_{\ast}^{ad}\nu^{b}\nu^{c}-h_{\ast}^{ab}\nu^{d}\nu^{c}\right)\phi_{,c}\phi_{,d}\right],
\end{equation}
hence
\begin{align}
\Pi^{ab} & =\frac{1}{4}F\left(\nu^{a}h_{\ast}^{b(c}\nu^{d)}+h_{\ast}^{a(c}\nu^{d)}\nu^{b}-\nu^{a}\nu^{b}h_{\ast}^{cd}+\varepsilon(n)\left(h_{\ast}^{ab}h_{\ast}^{cd}-h_{\ast}^{a(c}h_{\ast}^{d)b}\right)-h_{\ast}^{ab}\nu^{c}\nu^{d}\right)h_{cd,l}\nonumber \\
 & -\frac{1}{2}F^{\prime}\omega^{ab}\phi_{l}-\xi\left[\frac{1}{3}\varepsilon(n)\omega^{ab}\phi_{l}^{3}+\omega^{ab}\nu^{c}\phi_{,c}\phi_{l}^{2}+\left(h_{\ast}^{bd}\omega^{ac}+h_{\ast}^{ad}\nu^{b}\nu^{c}-h_{\ast}^{ab}\nu^{d}\nu^{c}\right)\phi_{,c}\phi_{,d}\phi_{l}\right].
\end{align}
The scalar field momentum is then
\begin{equation}
\Pi^{\phi}=-\int_{0}^{1}W^{ab}\{t\}h_{ab,l}\,dt-\int_{0}^{1}W\{t\}\phi_{l}\,dt,
\end{equation}
where the primary difficulty lies in computing $\Delta\{t\}$. Both
$\phi_{||ab}$ and $\mathcal{K}_{ab}$ depend nontrivially on $h_{ab,l}$.
We have
\begin{align}
\phi_{||ab} & =\phi_{,ab}-\Gamma_{\ ab}^{c}\phi_{,c}\nonumber \\
 & =\phi_{,ab}-\frac{1}{2}\nu^{c}\phi_{,c}\left(\partial_{a}\lambda_{b}+\partial_{b}\lambda_{a}-h_{ab,l}\right)\nonumber \\
 & -\frac{1}{2}h_{\ast}^{cd}\phi_{,c}\left(\partial_{a}h_{bd}+\partial_{b}h_{ad}-\partial_{d}h_{ab}\right),
\end{align}
thus
\begin{align}
\int_{0}^{1}\phi_{||ab}\{t\}\,dt & =\phi_{,ab}-\frac{1}{2}\nu^{c}\phi_{,c}\left(\partial_{a}\lambda_{b}+\partial_{b}\lambda_{a}-\frac{1}{2}h_{ab,l}\right)\nonumber \\
 & -\frac{1}{2}h_{\ast}^{cd}\phi_{,c}\left(\partial_{a}h_{bd}+\partial_{b}h_{ad}-\partial_{d}h_{ab}\right)\nonumber \\
 & =\phi_{||ab}-\frac{1}{4}\nu^{c}\phi_{,c}h_{ab,l}
\end{align}
and
\begin{align}
\mathcal{K}_{ab}\phi_{l} & =-\frac{1}{2}\varepsilon(n)\left(\partial_{a}\lambda_{b}+\partial_{b}\lambda_{a}-h_{ab,l}\right)\phi_{l}\nonumber \\
 & -\frac{1}{2}\nu^{c}\left(\partial_{a}h_{bc}+\partial_{b}h_{ac}-\partial_{c}h_{ab}\right)\phi_{l},
\end{align}
giving
\begin{equation}
\int_{0}^{1}\left(\mathcal{K}_{ab}\phi_{l}\right)\{t\}\,dt=\frac{1}{2}\mathcal{K}_{ab}\phi_{l}-\frac{1}{12}\varepsilon(n)h_{ab,l}\phi_{l}.
\end{equation}
We may now compute
\begin{align}
\int_{0}^{1}\Delta\{t\}\,dt & =\int_{0}^{1}\phi_{||ab}\{t\}\,dt\omega^{ab}+\int_{0}^{1}\left(\mathcal{K}_{ab}\phi_{l}\right)\{t\}\,dt\omega^{ab}\nonumber \\
 & =\phi_{||ab}\omega^{ab}-\frac{1}{4}\nu^{c}\phi_{,c}\omega^{ab}h_{ab,l}+\frac{1}{2}\omega^{ab}\mathcal{K}_{ab}\phi_{l}-\frac{1}{12}\varepsilon(n)\omega^{ab}h_{ab,l}\phi_{l},
\end{align}
thus
\begin{align}
\Pi^{\phi} & =-\frac{1}{2}F^{\prime}\omega^{ab}h_{ab,l}-B\varepsilon(n)\phi_{l}-\xi\left[\frac{1}{3}\varepsilon(n)\omega^{ab}\phi_{l}^{2}h_{ab,l}+\omega^{ab}\nu^{c}\phi_{,c}\phi_{l}h_{ab,l}\right.\nonumber \\
 & \left.+\left(h_{\ast}^{bd}\omega^{ac}+h_{\ast}^{ad}\nu^{b}\nu^{c}-h_{\ast}^{ab}\nu^{d}\nu^{c}\right)\phi_{,c}\phi_{,d}h_{ab,l}\right]\nonumber \\
 & -4\xi\left(\phi_{||ab}\omega^{ab}\phi_{l}-\frac{1}{4}\nu^{c}\phi_{,c}\omega^{ab}h_{ab,l}\phi_{l}+\frac{1}{2}\omega^{ab}\mathcal{K}_{ab}\phi_{l}^{2}-\frac{1}{12}\varepsilon(n)\omega^{ab}h_{ab,l}\phi_{l}^{2}\right)\nonumber \\
 & -4\xi^{\prime}\left(\frac{1}{3}\varepsilon(n)^{2}\phi_{l}^{3}+\varepsilon(n)\nu^{a}\phi_{,a}\phi_{l}^{2}+\nu^{a}\nu^{b}\phi_{,a}\phi_{,b}\phi_{l}\right).
\end{align}
Since the junction conditions involve the jumps of the momenta, we
compute the jumps:
\begin{align}
\left[\Pi^{ab}\right]_{-}^{+} & =\frac{1}{4}F\left(\nu^{a}h_{\ast}^{b(c}\nu^{d)}+h_{\ast}^{a(c}\nu^{d)}\nu^{b}-\nu^{a}\nu^{b}h_{\ast}^{cd}+\varepsilon(n)\left(h_{\ast}^{ab}h_{\ast}^{cd}-h_{\ast}^{a(c}h_{\ast}^{d)b}\right)-h_{\ast}^{ab}\nu^{c}\nu^{d}\right)\left[h_{cd,l}\right]_{-}^{+}\nonumber \\
 & -\frac{1}{2}F^{\prime}\omega^{ab}\left[\phi_{l}\right]_{-}^{+}-\xi\left[\frac{1}{3}\varepsilon(n)\omega^{ab}\left[\phi_{l}^{3}\right]_{-}^{+}+\omega^{ab}\nu^{c}\phi_{,c}\left[\phi_{l}^{2}\right]_{-}^{+}+\left(h_{\ast}^{bd}\omega^{ac}+h_{\ast}^{ad}\nu^{b}\nu^{c}-h_{\ast}^{ab}\nu^{d}\nu^{c}\right)\phi_{,c}\phi_{,d}\left[\phi_{l}\right]_{-}^{+}\right],
\end{align}
and
\begin{align}
\left[\Pi^{\phi}\right]_{-}^{+} & =-\frac{1}{2}F^{\prime}\omega^{ab}\left[h_{ab,l}\right]_{-}^{+}-B\varepsilon(n)\left[\phi_{l}\right]_{-}^{+}-\xi\left[\frac{1}{3}\varepsilon(n)\omega^{ab}\left[\phi_{l}^{2}h_{ab,l}\right]_{-}^{+}+\omega^{ab}\nu^{c}\phi_{,c}\left[\phi_{l}h_{ab,l}\right]_{-}^{+}\right.\nonumber \\
 & \left.+\left(h_{\ast}^{bd}\omega^{ac}+h_{\ast}^{ad}\nu^{b}\nu^{c}-h_{\ast}^{ab}\nu^{d}\nu^{c}\right)\phi_{,c}\phi_{,d}\left[h_{ab,l}\right]_{-}^{+}\right]\nonumber \\
 & -4\xi\left(\omega^{ab}\left[\phi_{||ab}\phi_{l}\right]_{-}^{+}-\frac{1}{4}\nu^{c}\phi_{,c}\omega^{ab}\left[\phi_{l}h_{ab,l}\right]_{-}^{+}+\frac{1}{2}\omega^{ab}\left[\mathcal{K}_{ab}\phi_{l}^{2}\right]_{-}^{+}-\frac{1}{12}\varepsilon(n)\omega^{ab}\left[\phi_{l}^{2}h_{ab,l}\right]_{-}^{+}\right)\nonumber \\
 & -4\xi^{\prime}\left(\frac{1}{3}\varepsilon(n)^{2}\left[\phi_{l}^{3}\right]_{-}^{+}+\varepsilon(n)\nu^{a}\phi_{,a}\left[\phi_{l}^{2}\right]_{-}^{+}+\nu^{a}\nu^{b}\phi_{,a}\phi_{,b}\left[\phi_{l}\right]_{-}^{+}\right).
\end{align}
Since we have defined $E^{\mu\nu}$ as the coefficients of $\delta g_{\mu\nu}$,
these expressions are negatives of the ones considered usually, hence
the regular field equations are
\begin{equation}
E^{\mu\nu}=-\frac{1}{2}T^{\mu\nu},\quad E^{\phi}=T^{\phi},
\end{equation}
where $T^{\mu\nu}$ is the energy-momentum tensor in the usual sense
and $T^{\phi}$ is a scalar source. The correct distributional form
of the Euler-Lagrange expressions are then
\begin{equation}
E^{\mu\nu}=\overline{E^{\mu\nu}}-\left[\Pi^{\mu\nu}\right]_{-}^{+}\delta^{\Sigma},\quad E^{\phi}=E^{\phi}-\left[\Pi^{\phi}\right]_{-}^{+}\delta^{\Sigma}
\end{equation}
with $\left[\Pi^{\mu\nu}\right]_{-}^{+}$ being the four dimensional
extension of $\left[\Pi^{ab}\right]_{-}^{+}$, and the junction conditions
are
\begin{equation}
\left[\Pi^{ab}\right]_{-}^{+}=-\frac{1}{2}S^{ab},\quad\left[\Pi^{\phi}\right]_{-}^{+}=S^{\phi},
\end{equation}
where the $S^{ab}$ and $S^{\phi}$ are the surface energy-momentum
tensor and surface scalar source respectively.

The junction conditions above have been computed under no assumptions
on the causal type of $\Sigma$. Let us now assume that $\Sigma$
is timelike or spacelike, choose $l^{\mu}$ to be the outward pointing
unit normal in which case $n^{\mu}=\varepsilon l^{\mu}$, where $\varepsilon=\pm1$
with the $+$ sign chosen when $\Sigma$ is timelike and the $-$
chosen when $\Sigma$ is spacelike. Then $\varepsilon(n)=\varepsilon$,
$\nu^{a}=0$, $h_{\ast}^{ab}=h^{ab}$, $h_{ab,l}=2K_{ab}$, $\mathcal{K}_{ab}=\varepsilon K_{ab}$,
$\omega^{ab}=-\varepsilon h^{ab}$ and $\left[\phi_{||ab}\right]_{-}^{+}=0$.
Inserting these relations give
\begin{align}
\left[\Pi^{ab}\right]_{-}^{+} & =\frac{1}{2}F\varepsilon\left(h^{ab}\left[K\right]_{-}^{+}-\left[K^{ab}\right]_{-}^{+}\right)+\frac{1}{2}\varepsilon F^{\prime}h^{ab}\left[\phi_{l}\right]_{-}^{+}\nonumber \\
 & +\xi\left(\frac{1}{3}h^{ab}\left[\phi_{l}^{3}\right]_{-}^{+}+\varepsilon\phi^{,a}\phi^{,b}\left[\phi_{l}\right]_{-}^{+}\right),
\end{align}
and
\begin{align}
\left[\Pi^{\phi}\right]_{-}^{+} & =F^{\prime}\varepsilon\left[K\right]_{-}^{+}+2\xi\left(\left[K\phi_{l}^{2}\right]_{-}^{+}+\varepsilon\phi^{,a}\phi^{,b}\left[K_{ab}\right]_{-}^{+}+2\varepsilon\phi_{||a}^{||a}\left[\phi_{l}\right]_{-}^{+}\right)\nonumber \\
 & -B\varepsilon\left[\phi_{l}\right]_{-}^{+}-\frac{4}{3}\xi^{\prime}\left[\phi_{l}^{3}\right]_{-}^{+}.
\end{align}
These junction conditions correct and greatly expand the ones that
have appeared in \cite{RG}.

\section{Conclusion}

We have devised a method to derive the junction conditions and thin
shell equations, applicable to an extremely wide class of field theories,
by reducing partially the order of the Lagrangian of the field theory.
We have proven that the junction conditions can be both realized as
singular extensions of the ordinary Euler-Lagrange equations as well
as broken extremals of the action functional and both of these representations
coincide. Both the reduction process and the subsequent computation
of the junction conditions are constructive. The former requires the
calculation of a sequence of quadratures while the latter the computation
of a number of variational derivatives. We gave an example illustrating
how the distributional formulation can produce erroneous results when
applied incorrectly and related the reduction process of the Lagrangian
to Gibbons--Hawking--York-like boundary terms, thereby also giving
a general - although local - existence proof for such boundary terms.

For the most common case, when the field theory has equations of order
$2$, we have also derived a formal trick for computing the correct
junction conditions from the incorrect ``naive'' ones by applying
a homotopy integral to the ``naive'' coefficients. The value of
this method lies in that it is usually possible to deduce the ``naive''
junction conditions through covariant geometric means, without splitting
the variables. This formal trick then has been used to compute the
junction conditions for a relatively complicated Horndeski class theory
valid for arbitrary signature junction surfaces.

Even though the various ambiguities and tensions of the existing methods
to compute junction conditions are well-known on a folklore level,
it appears that until now no systematic analysis of the junction conditions
from a general point of view has been performed. This work thus addresses
this particular gap in the literature, while also providing effective
methods for calculating the junction conditions.

\section*{Data availability statement}

No new data were created or analysed in this study.

\section*{Acknowledgments}

This research was supported by the Hungarian National Research Development
and Innovation Office (NKFIH) in the form of Grant 123996. The author
would like to thank Nicoleta Voicu for valuable discussions on order-reduction
and minimal-order theorems for Lagrangians.

\appendix

\section{Higher product formulae}

In this Appendix, the product rule (Leibniz rule) for higher order
partial derivatives and its inverse formula are discussed. Here let
$U\subseteq\mathbb{R}^{n}$ be an open set with standard coordinates
$x^{\mu}$ and with greek indices $\mu,\nu,\dots$ taking the values
$1,2,\dots,n$ and the summation convention in effect.
\begin{prop}
\label{Prop:higher_prod}Let $f$ and $g^{\mu_{1}...\mu_{k}}$ be
smooth functions on $U$ with the latter \emph{symmetric} in the upper
indices. Then
\begin{equation}
\partial_{\mu_{1}}\dots\partial_{\mu_{k}}\left(fg^{\mu_{1}...\mu_{k}}\right)=\sum_{p+q=k}\begin{pmatrix}p+q\\
p
\end{pmatrix}\partial_{\mu_{1}}\dots\partial_{\mu_{p}}f\partial_{\nu_{1}}\dots\partial_{\nu_{q}}g^{\mu_{1}...\mu_{p}\nu_{1}...\nu_{q}}.
\end{equation}
\end{prop}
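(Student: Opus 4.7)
The plan is to reduce the statement to the ordinary scalar multivariate Leibniz rule and then invoke the hypothesized symmetry of $g^{\mu_{1}\dots\mu_{k}}$ to collapse many equal terms into binomial multiplicities. Because the upper indices of $g$ are contracted against the derivatives on the left-hand side, the product $fg^{\mu_{1}\dots\mu_{k}}$ is, once the Einstein sum is interpreted, just a product of two scalar functions of $x$, and the only subtle point is keeping the indices straight.

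Concretely, I would first write down the scalar-valued multivariate Leibniz rule
\begin{equation*}
\partial_{\mu_{1}}\dots\partial_{\mu_{k}}(fh)=\sum_{S\subseteq\{1,\dots,k\}}\Bigl(\prod_{i\in S}\partial_{\mu_{i}}\Bigr)f\cdot\Bigl(\prod_{j\notin S}\partial_{\mu_{j}}\Bigr)h,
\end{equation*}
valid for any two smooth scalar functions $f,h$, which itself follows by a trivial induction on $k$ from the ordinary single-variable Leibniz rule. I would then apply this with $h:=g^{\mu_{1}\dots\mu_{k}}$, treating the contracted index string as an inert label. Each subset $S$ of cardinality $p$ then contributes a term of the form $\partial_{\mu_{i_{1}}}\dots\partial_{\mu_{i_{p}}}f\cdot\partial_{\mu_{j_{1}}}\dots\partial_{\mu_{j_{q}}}g^{\mu_{1}\dots\mu_{k}}$ with $p+q=k$.

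The key observation is then that because $g^{\mu_{1}\dots\mu_{k}}$ is totally symmetric in its upper indices and all indices $\mu_{1},\dots,\mu_{k}$ are summed over, any two subsets $S,S'$ of the same size $p$ yield the same numerical value after the sum is evaluated: one may relabel the dummy indices so that those in $S$ become $\mu_{1},\dots,\mu_{p}$ and those in its complement become $\nu_{1},\dots,\nu_{q}$. Since there are $\binom{k}{p}$ subsets of size $p$, grouping the terms by $|S|=p$ produces exactly the binomial factor in the claim.

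The only thing requiring care is the symmetry/relabeling argument in the last step, since the Einstein convention and the free-versus-summed index distinction can obscure the combinatorics; the role of the symmetry assumption on $g$ is exactly to make the $\binom{k}{p}$ contributions literally equal summand by summand after a permutation of summation variables. An entirely equivalent route would be induction on $k$: split off one derivative via the ordinary Leibniz rule, apply the inductive hypothesis to each of the two resulting terms, and recombine using Pascal's identity $\binom{k}{p-1}+\binom{k}{p}=\binom{k+1}{p}$; but the direct argument above seems cleaner and avoids any combinatorial reshuffling beyond the symmetry-based counting.
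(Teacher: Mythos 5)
Your proposal is correct. The paper itself only states that the result follows ``by induction on the number of derivatives'' and defers to the cited references, so its intended route is the inductive one you mention at the end: peel off one derivative with the ordinary Leibniz rule, apply the inductive hypothesis to each piece, and recombine the binomial coefficients via Pascal's identity. Your primary argument is genuinely different: you invoke the subset form of the multivariate Leibniz rule, $\partial_{\mu_{1}}\dots\partial_{\mu_{k}}(fh)=\sum_{S}\bigl(\prod_{i\in S}\partial_{\mu_{i}}\bigr)f\cdot\bigl(\prod_{j\notin S}\partial_{\mu_{j}}\bigr)h$, and then observe that, since all $k$ indices are contracted and $g^{\mu_{1}\dots\mu_{k}}$ is totally symmetric, every subset $S$ of a given cardinality $p$ contributes the same summand after a relabelling of dummy indices and a permutation of the slots of $g$, which produces the factor $\binom{k}{p}$ directly. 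What your route buys is transparency: it isolates exactly where the symmetry hypothesis enters (it is what makes the $\binom{k}{p}$ contributions literally equal term by term), whereas the inductive route buys a mechanical verification that never requires the dummy-index relabelling argument, at the cost of a small amount of combinatorial bookkeeping with Pascal's identity. Either argument establishes the proposition; yours is a legitimate and arguably more illuminating alternative to the proof sketched in the paper.
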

\begin{proof}
The proof is by induction on the number of derivatives. See \cite{Kr}
or \cite{Sau} for details.
\end{proof}
\begin{cor}
\label{Cor:higher_prod}Let $f$, $g$, $g^{\mu}$, $g^{\mu_{1}\mu_{2}}$,
..., $g^{\mu_{1}...\mu_{r}}$ be smooth functions on $U$ with the
functions $g^{\mu_{1}...\mu_{k}}$ ($2\le k\le r$) symmetric in the
upper indices. Then
\begin{equation}
\sum_{k=0}^{r}\partial_{\mu_{1}}\dots\partial_{\mu_{k}}\left(fg^{\mu_{1}...\mu_{k}}\right)=\sum_{k=0}^{r}\sum_{l=0}^{r-k}\begin{pmatrix}k+l\\
k
\end{pmatrix}\partial_{\mu_{1}}\dots\partial_{\mu_{k}}f\partial_{\nu_{1}}\dots\partial_{\nu_{l}}g^{\mu_{1}...\mu_{k}\nu_{1}...\nu_{l}}.
\end{equation}
\end{cor}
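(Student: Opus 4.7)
The plan is to apply Proposition \ref{Prop:higher_prod} termwise to each summand on the left-hand side and then reindex the resulting double sum so that the outer summation index refers to the number of derivatives falling on $f$ rather than to the original order $k$.

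Concretely, I would first apply the proposition to the $k$th term on the left, with the function $g^{\mu_1\dots\mu_k}$ playing the role of the symmetric tensor coefficient. This gives
\begin{equation}
\partial_{\mu_{1}}\dots\partial_{\mu_{k}}\bigl(fg^{\mu_{1}\dots\mu_{k}}\bigr)=\sum_{p+q=k}\binom{p+q}{p}\partial_{\mu_{1}}\dots\partial_{\mu_{p}}f\,\partial_{\nu_{1}}\dots\partial_{\nu_{q}}g^{\mu_{1}\dots\mu_{p}\nu_{1}\dots\nu_{q}}.
\end{equation}
Substituting $q=k-p$ and summing over $k$ from $0$ to $r$ produces a double sum indexed by the pair $(k,p)$ with $0\le p\le k\le r$.

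Next I would swap the order of summation. The region $\{(k,p):0\le p\le k\le r\}$ can equivalently be parametrised by $(p,l)$ with $l:=k-p$, where $0\le p\le r$ and $0\le l\le r-p$. Under this change of indices the binomial coefficient $\binom{k}{p}$ becomes $\binom{p+l}{p}$, the derivative of $f$ still has order $p$, and the derivative of $g$ has order $l$ with $p+l$ total upper indices. Renaming $p\to k$ in the final expression reproduces exactly the right-hand side of the corollary.

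There is essentially no obstacle here: the statement is a purely combinatorial consequence of Proposition \ref{Prop:higher_prod} together with the standard reindexing of a triangular double sum. The only thing to be careful about is keeping the index conventions consistent (in particular, that the symmetrisation of $g^{\mu_1\dots\mu_k}$ in its upper indices is preserved throughout, so that the coefficients coming out of Proposition \ref{Prop:higher_prod} are genuinely the ordinary binomial coefficients), but this is automatic from the hypotheses.
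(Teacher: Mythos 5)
Your proposal is correct and follows essentially the same route as the paper: apply Proposition \ref{Prop:higher_prod} to each summand and then reindex the triangular double sum $\sum_{k=0}^{r}\sum_{p+q=k}$ as $\sum_{p=0}^{r}\sum_{l=0}^{r-p}$ before renaming the indices. No gaps.
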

\begin{proof}
This is a straightforward consequence of Proposition \ref{Prop:higher_prod},
giving
\begin{equation}
\sum_{k=0}^{r}\partial_{\mu_{1}}\dots\partial_{\mu_{k}}\left(fg^{\mu_{1}...\mu_{k}}\right)=\sum_{k=0}^{r}\sum_{p+q=k}\begin{pmatrix}p+q\\
p
\end{pmatrix}\partial_{\mu_{1}}\dots\partial_{\mu_{p}}f\partial_{\nu_{1}}\dots\partial_{\nu_{q}}g^{\mu_{1}...\mu_{p}\nu_{1}...\nu_{q}}.
\end{equation}
All we need to do now is to notice that the summation can be rewritten
as
\begin{equation}
\sum_{k=0}^{r}\sum_{p+q=k}=\sum_{p+q=0}^{r}=\sum_{p=0}^{r}\sum_{q=0}^{r-p},
\end{equation}
then rename $p\rightarrow k$ and $q\rightarrow l$ to get the desired
result.
\end{proof}
\begin{prop}
\label{Prop:higher_prod_inv}Let $f,g^{\mu_{1}...\mu_{k}}$ be smooth
functions on $U$ with the latter symmetric in its indices. Then
\begin{equation}
\partial_{\mu_{1}}\dots\partial_{\mu_{k}}fg^{\mu_{1}...\mu_{k}}=\sum_{p+q=k}\begin{pmatrix}p+q\\
p
\end{pmatrix}\left(-1\right)^{q}\partial_{\mu_{1}}\dots\partial_{\mu_{p}}\left(f\partial_{\nu_{1}}\dots\partial_{\nu_{q}}g^{\mu_{1}...\mu_{p}\nu_{1}...\nu_{q}}\right).
\end{equation}
\end{prop}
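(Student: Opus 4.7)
My plan is to prove this by direct expansion of the right-hand side using Proposition \ref{Prop:higher_prod}, followed by a combinatorial simplification.

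First, for each term in the sum on the right, I would apply the higher Leibniz rule of Proposition \ref{Prop:higher_prod} to the expression $\partial_{\mu_{1}}\dots\partial_{\mu_{p}}\!\left(f\,h^{\mu_{1}...\mu_{p}}\right)$, where I set $h^{\mu_{1}...\mu_{p}}:=\partial_{\nu_{1}}\dots\partial_{\nu_{q}}g^{\mu_{1}...\mu_{p}\nu_{1}...\nu_{q}}$. This object is symmetric in its free indices because $g$ is symmetric in all of its indices, so Proposition \ref{Prop:higher_prod} applies and produces a sum in which $a$ derivatives land on $f$ and $p-a$ additional derivatives land on $g$, with a coefficient $\binom{p}{a}$. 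Thus every term on the right-hand side becomes a triple-indexed sum over $(p,q,a)$ with $p+q=k$ and $0\le a\le p$, whose typical summand is of the form $\partial_{\mu_{1}}\dots\partial_{\mu_{a}}f \cdot \partial_{\dots}\partial_{\dots}g^{\dots}$ with exactly $a$ derivatives on $f$ and exactly $k-a$ derivatives on $g$.

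Next, I would reorganize the resulting sum by the number $a$ of derivatives falling on $f$. For fixed $a$, the coefficient multiplying $\partial_{\mu_{1}}\dots\partial_{\mu_{a}}f \cdot \partial_{\mu_{a+1}}\dots\partial_{\mu_{k}}g^{\mu_{1}...\mu_{k}}$ is
\begin{equation}
c_{a}=\sum_{p=a}^{k}\binom{k}{p}\binom{p}{a}(-1)^{k-p}.
\end{equation}
Using the standard identity $\binom{k}{p}\binom{p}{a}=\binom{k}{a}\binom{k-a}{p-a}$ and substituting $j=p-a$, this becomes
\begin{equation}
c_{a}=\binom{k}{a}\sum_{j=0}^{k-a}\binom{k-a}{j}(-1)^{k-a-j}=\binom{k}{a}(1-1)^{k-a},
\end{equation}
which vanishes for $a<k$ and equals $1$ for $a=k$. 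Hence the only surviving contribution is the one with all $k$ derivatives on $f$, which is exactly $\partial_{\mu_{1}}\dots\partial_{\mu_{k}}f\cdot g^{\mu_{1}...\mu_{k}}$, matching the left-hand side.

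The only real obstacle I anticipate is notational: after applying Proposition \ref{Prop:higher_prod} inside the existing double sum, one must carefully rename the multi-indices so that the symmetry of $g$ can be invoked to justify collecting equal contributions. A cleaner alternative would be induction on $k$, peeling off one outer $\partial_{\mu_{k}}$ and using the ordinary product rule together with the $k-1$ case; either approach reduces to the same binomial identity $\sum_{p}\binom{k}{p}\binom{p}{a}(-1)^{k-p}=\delta_{a,k}$, which is the combinatorial heart of the statement. I would choose the direct expansion because it makes the inverse relation to Proposition \ref{Prop:higher_prod} manifest, and no induction hypothesis has to be carried along.
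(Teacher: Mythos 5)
Your argument is correct, but it is not the route the paper takes: the paper proves Proposition \ref{Prop:higher_prod_inv} by induction on the number of derivatives, in the same way as Proposition \ref{Prop:higher_prod}, whereas you derive it \emph{from} Proposition \ref{Prop:higher_prod} by expanding the right-hand side and collapsing the resulting double sum with the alternating binomial identity $\sum_{p}\binom{k}{p}\binom{p}{a}(-1)^{k-p}=\binom{k}{a}(1-1)^{k-a}=\delta_{a,k}$. This is legitimate since Proposition \ref{Prop:higher_prod} is established first and is not itself proved using the inverse formula, so there is no circularity; the one point you rightly flag as needing care is that the full symmetry of $g^{\mu_{1}\dots\mu_{k}}$ (and of the iterated partial derivatives) is what allows the terms with $a$ derivatives on $f$ and $k-a$ on $g$ arising from different values of $p$ to be identified before summing their coefficients. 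What your approach buys is that it exhibits the two propositions as genuine inverses of one another, with the cancellation $(1-1)^{k-a}$ made explicit, at the cost of a heavier bookkeeping of multi-indices; the paper's induction avoids the combinatorial identity entirely but leaves the inverse relationship implicit. Either proof is acceptable.
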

\begin{proof}
The proof is by induction on the number of derivatives, analogous
to Proposition \ref{Prop:higher_prod}.
\end{proof}
\begin{cor}
\label{Cor:higher_prod_inv}Let $f$, $g$, $g^{\mu}$, $g^{\mu_{1}\mu_{2}}$,
..., $g^{\mu_{1}...\mu_{r}}$ be smooth functions on $U$ with the
functions $g^{\mu_{1}...\mu_{k}}$ ($2\le k\le r$) symmetric in the
upper indices. Then
\begin{equation}
\sum_{k=0}^{r}\partial_{\mu_{1}}\dots\partial_{\mu_{k}}fg^{\mu_{1}...\mu_{k}}=\sum_{k=0}^{r}\sum_{l=0}^{r-k}\begin{pmatrix}k+l\\
k
\end{pmatrix}\left(-1\right)^{l}\partial_{\mu_{1}}\dots\partial_{\mu_{k}}\left(f\partial_{\nu_{1}}\dots\partial_{\nu_{l}}g^{\mu_{1}...\mu_{k}\nu_{1}...\nu_{l}}\right).
\end{equation}
\end{cor}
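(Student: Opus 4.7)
The plan is to mirror verbatim the proof strategy used for Corollary \ref{Cor:higher_prod}, simply replacing the forward Leibniz identity by its inverse counterpart. The key observation is that Proposition \ref{Prop:higher_prod_inv} expresses a product of the form $\partial_{\mu_{1}}\dots\partial_{\mu_{k}}f\cdot g^{\mu_{1}\dots\mu_{k}}$ (each factor undifferentiated in $g$) in terms of total derivatives of products $f\cdot\partial_{\nu_{1}}\dots\partial_{\nu_{q}}g^{\dots}$, which is exactly the structure present on the two sides of the claim.

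First, I would apply Proposition \ref{Prop:higher_prod_inv} to each summand on the left-hand side individually, yielding
\begin{equation}
\sum_{k=0}^{r}\partial_{\mu_{1}}\dots\partial_{\mu_{k}}f\cdot g^{\mu_{1}\dots\mu_{k}}=\sum_{k=0}^{r}\sum_{p+q=k}\binom{p+q}{p}(-1)^{q}\partial_{\mu_{1}}\dots\partial_{\mu_{p}}\!\left(f\,\partial_{\nu_{1}}\dots\partial_{\nu_{q}}g^{\mu_{1}\dots\mu_{p}\nu_{1}\dots\nu_{q}}\right).
\end{equation}
Second, I would reindex the double sum by noting that the index set $\{(k,p,q):0\le k\le r,\ p+q=k\}$ coincides with $\{(p,q):0\le p+q\le r\}$, so that
\begin{equation}
\sum_{k=0}^{r}\sum_{p+q=k}=\sum_{p=0}^{r}\sum_{q=0}^{r-p}.
\end{equation}
Third, a cosmetic relabeling $p\to k$ and $q\to l$ yields exactly the right-hand side of the claim, including the $(-1)^{l}$ sign and the binomial coefficient $\binom{k+l}{k}$.

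There is no genuine obstacle; the only thing to be careful about is bookkeeping of the sign $(-1)^{q}=(-1)^{l}$ and the binomial $\binom{p+q}{p}=\binom{k+l}{k}$ through the reindexing, both of which are invariant under the swap. Since Proposition \ref{Prop:higher_prod_inv} is assumed already established, the corollary follows immediately from the above two-line manipulation.
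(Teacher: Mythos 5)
Your proposal is correct and is exactly the argument the paper intends: the paper's proof simply states that it is analogous to Corollary \ref{Cor:higher_prod}, i.e.\ apply Proposition \ref{Prop:higher_prod_inv} termwise, reindex $\sum_{k=0}^{r}\sum_{p+q=k}=\sum_{p=0}^{r}\sum_{q=0}^{r-p}$, and relabel $p\to k$, $q\to l$. Nothing is missing.
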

\begin{proof}
The proof is analogous to that of Corollary \ref{Cor:higher_prod}.
\end{proof}

\section{Formal adjoints\label{sec:Formal-adjoints}}

The higher product formulae will now be used to provide an explicit
formula for the formal adjoints of linear differential operators.
Let $\phi:U\rightarrow\mathbb{R}^{m}$, $\phi=(\phi^{1},\dots,\phi^{m})=(\phi^{i})_{i=1}^{m}$
be an $m$-tuple of smooth functions. A \emph{linear differential
operator} of order $r$ acting on $\phi$ is of the form
\begin{equation}
D[\phi]=\sum_{k=0}^{r}D^{\mu_{1}...\mu_{k}}\cdot\phi_{,\mu_{1}...\mu_{k}},\label{eq:lin_diff_op}
\end{equation}
where $\phi_{,\mu}:=\partial_{\mu}\phi=\partial\phi/\partial x^{\mu}$,
and the $D^{\mu_{1}...\mu_{k}}$ for $0\le k\le r$ are $m\times m$
matrices of smooth functions symmetric in the indices and $\cdot$
denotes matrix multiplication. For a pair of smooth functions $\phi,\psi:U\rightarrow\mathbb{R}^{m}$,
define
\begin{equation}
\left\langle \phi,\psi\right\rangle :=\int_{U}\phi^{t}(x)\cdot\psi(x)\,d^{n}x,
\end{equation}
which does not necessarily converge (as $U$ is not compact). Here
$\phi^{t}$ is the transpose of the column $\phi$. The \emph{formal
adjoint} of $D$ is the linear differential operator $D^{\dagger}$
such that for all compactly supported (hence the following integrals
converge) smooth function $\psi:U\rightarrow\mathbb{R}^{m}$ it satisfies
\begin{equation}
\left\langle \psi,D^{\dagger}[\phi]\right\rangle =\left\langle D[\psi],\phi\right\rangle .
\end{equation}

\begin{prop}
Given the linear differential operator (\ref{eq:lin_diff_op}), its
formal adjoint is
\begin{equation}
D^{\dagger}[\phi]=\sum_{k=0}^{r}(D^{\dagger})^{\mu_{1}...\mu_{k}}\cdot\phi_{,\mu_{1}...\mu_{k}},
\end{equation}
with
\begin{equation}
(D^{\dagger})^{\mu_{1}...\mu_{k}}=\sum_{l=0}^{r-k}\left(-1\right)^{k+l}\begin{pmatrix}k+l\\
k
\end{pmatrix}\partial_{\nu_{1}}\dots\partial_{\nu_{l}}(D^{t})^{\mu_{1}...\mu_{k}\nu_{1}...\nu_{l}},\quad0\le k\le r,\label{eq:form_adj}
\end{equation}
where $(D^{t})^{\mu_{1}...\mu_{k}\nu_{1}...\nu_{l}}$ is the transpose
of the matrix $D^{\mu_{1}...\mu_{k}\nu_{1}...\nu_{l}}$.
\end{prop}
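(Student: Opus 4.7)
The plan is to derive the stated formula by two bookkeeping steps: first integrate by parts to transfer all derivatives from $\psi$ onto the quantity $(D^{\mu_1\dots\mu_k})^{t}\phi$, then redistribute those derivatives using the higher Leibniz rule (Proposition \ref{Prop:higher_prod}) so that they finally land back on $\phi$ alone.

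First I would rewrite $\langle D[\psi],\phi\rangle$ as the sum of integrals
$\int_{U}\psi_{,\mu_{1}\dots\mu_{k}}^{t}(D^{\mu_{1}\dots\mu_{k}})^{t}\phi\,d^{n}x$
for $k=0,\dots,r$, and integrate by parts $k$ times in the $k$th term. Because $\psi$ has compact support in $U$, no boundary terms appear, and each transfer of a derivative contributes a minus sign, producing an overall factor $(-1)^{k}$. Demanding that the resulting expression equal $\langle\psi,D^{\dagger}[\phi]\rangle$ for every compactly supported smooth $\psi$, the fundamental lemma of the calculus of variations yields pointwise
\[
D^{\dagger}[\phi]=\sum_{k=0}^{r}(-1)^{k}\partial_{\mu_{1}}\cdots\partial_{\mu_{k}}\bigl[(D^{\mu_{1}\dots\mu_{k}})^{t}\phi\bigr].
\]

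Next I would apply Proposition \ref{Prop:higher_prod} to each term, treating the symmetric matrix-valued coefficient $(D^{\mu_{1}\dots\mu_{k}})^{t}$ as the tensor factor and $\phi$ as the other factor. The inductive proof of Proposition \ref{Prop:higher_prod} carries over verbatim to matrix-acting-on-column products because the Leibniz identity $\partial_{\mu}(AB)=(\partial_{\mu}A)B+A(\partial_{\mu}B)$ preserves the order of the factors, so one obtains a double sum in which one index records how many derivatives remain on the coefficient and the other how many land on $\phi$. The final step is to swap the outer and inner summations (setting $k=p+q$ with $p=0,\dots,r$ and $q=0,\dots,r-p$), rename $p\mapsto k$ and $q\mapsto l$, and collect the coefficient multiplying each $\phi_{,\mu_{1}\dots\mu_{k}}$. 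The binomial $\binom{k+l}{k}$ then emerges from the product rule while the sign $(-1)^{k+l}$ combines the integration-by-parts sign with the total derivative count; the answer is precisely \eqref{eq:form_adj}.

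No genuine analytical obstacle appears: all boundary terms vanish by compact support, and the argument reduces to combinatorial reindexing. The care required is notational — one must keep $(D^{\mu_{1}\dots\mu_{k}})^{t}$ to the left of $\phi$ throughout because matrix-vector products do not commute, and one must confirm that the symmetry hypothesis of Proposition \ref{Prop:higher_prod} on the upper indices is in force (it is, since the $D^{\mu_{1}\dots\mu_{k}}$ were postulated symmetric and transposition does not affect the upper-index symmetry). The reindexing of the double sum is the step most prone to off-by-one slips, so I would write it out fully once the schematic above is in place.
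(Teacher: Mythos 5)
Your proposal is correct and follows essentially the same route as the paper: transfer all derivatives off $\psi$ (the paper packages this repeated integration by parts as Corollary \ref{Cor:higher_prod_inv}, discarding the divergence terms by compact support), then redistribute them with the higher Leibniz rule of Proposition \ref{Prop:higher_prod} and reindex the double sum to read off the coefficients in \eqref{eq:form_adj}. Your sign and binomial bookkeeping, and your remark about preserving the matrix-vector ordering, match the paper's argument.
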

\begin{proof}
From the definition of the formal adjoint, we have
\begin{align}
\left\langle D[\psi],\phi\right\rangle  & =\int_{U}\sum_{k=0}^{r}\psi_{,\mu_{1}...\mu_{k}}^{t}\cdot(D^{t})^{\mu_{1}...\mu_{k}}\cdot\phi\,d^{n}x\nonumber \\
 & =\int_{U}\psi^{t}\cdot\sum_{k=0}^{r}\left(-1\right)^{k}\partial_{\mu_{1}}\dots\partial_{\mu_{k}}\left((D^{t})^{\mu_{1}...\mu_{k}}\cdot\phi\right)d^{n}x+\int_{U}\partial_{\mu}\left(\cdots\right)^{\mu},
\end{align}
where we have used Corollary \ref{Cor:higher_prod_inv}, but hid all
except the first term under the divergence $\partial_{\mu}\left(\cdots\right)^{\mu}$.
As $\psi$ is compactly supported, this term vanishes. We then apply
the higher order product rule (Proposition \ref{Prop:higher_prod}
and Corollary \ref{Cor:higher_prod}) to the first integral here to
get
\begin{align}
\left\langle D[\psi],\phi\right\rangle  & =\int_{U}\psi^{t}\cdot\sum_{k=0}^{r}\sum_{l=0}^{r-k}\left(-1\right)^{k+l}\begin{pmatrix}k+l\\
k
\end{pmatrix}\partial_{\nu_{1}}\dots\partial_{\nu_{l}}(D^{t})^{\mu_{1}...\mu_{k}\nu_{1}...\nu_{l}}\cdot\phi_{,\mu_{1}...\mu_{k}}\,d^{n}x\nonumber \\
 & =\int_{U}\psi^{t}\cdot\sum_{k=0}^{r}(D^{\dagger})^{\mu_{1}...\mu_{k}}\cdot\phi_{,\mu_{1}...\mu_{k}}\,d^{n}x=\left\langle \psi,D^{\dagger}[\phi]\right\rangle ,
\end{align}
where
\begin{equation}
D^{\dagger}[\phi]=\sum_{k=0}^{r}(D^{\dagger})^{\mu_{1}...\mu_{k}}\cdot\phi_{,\mu_{1}...\mu_{k}}
\end{equation}
with
\begin{equation}
(D^{\dagger})^{\mu_{1}...\mu_{k}}=\sum_{l=0}^{r-k}\left(-1\right)^{k+l}\begin{pmatrix}k+l\\
k
\end{pmatrix}\partial_{\nu_{1}}\dots\partial_{\nu_{l}}(D^{t})^{\mu_{1}...\mu_{k}\nu_{1}...\nu_{l}},\quad0\le k\le r.
\end{equation}
\end{proof}
From the symmetry of the bracket $\left\langle \psi,\phi\right\rangle $
it is clear that the act of taking the formal adjoint is an involution,
i.e. $D^{\dagger\dagger}=D$ for any linear differential operator.

\section{Higher Euler operators and Helmholtz conditions\label{sec:Higher-Euler-operators}}

The notation we use here is similar to that of Section \ref{subsec:Setup-and-overview}
(the main exception is that we take $M\subseteq\mathbb{R}^{n}$ instead
of $\mathbb{R}^{n+1}$), but to ensure the independence of the Appendix,
we quickly summarize here. We consider a variation problem specified
by an order $r$ Lagrange function $L[q](x)=L(x,q(x),\dots,q_{(r)}(x))$,
where the independent variables are $x=(x^{1},\dots,x^{n})=(x^{\mu})_{\mu=1}^{n}$
(and they are taken from some open set $M\subseteq\mathbb{R}^{n}$),
the dynamical variables are $q(x)=(q^{1}(x),\dots,q^{m}(x))=(q^{i}(x))_{i=1}^{m}$
with formal derivatives $q_{(k)}=(q_{,\mu_{1}...\mu_{k}}^{i})$. As
the derivative variables $q_{,\mu_{1}...\mu_{k}}^{i}$ are symmetric
in the lower indices, the differentiation rules they obey are summarized
through
\begin{equation}
\frac{\partial q_{,\nu_{1}...\nu_{k}}^{i}}{\partial q_{,\mu_{1}...\mu_{l}}^{j}}=\delta_{k}^{l}\delta_{j}^{i}\delta_{\nu_{1}}^{(\mu_{1}}\dots\delta_{\nu_{k}}^{\mu_{k})}.
\end{equation}
As a rule, in $f[q]$, the ``functional argument'' $[q]$ stands
for a dependence on $x,q,q_{(1)},\dots,q_{(r)}$ up to some finite
order $r$ (the order may differ for different objects $f$) and such
functions are also called ``operators'', since they are essentially
differential operators on the dynamical variables $q^{i}(x)$. Total
differentiation is denoted
\begin{equation}
d_{\mu}f=\frac{df}{dx^{\mu}}:=\frac{\partial f}{\partial x^{\mu}}+\frac{\partial f}{\partial q^{i}}q_{,\mu}^{i}+\dots+\frac{\partial f}{\partial q_{,\mu_{1}...\mu_{r}}^{i}}q_{,\mu_{1}...\mu_{r}\mu}^{i},
\end{equation}
where $\partial/\partial x^{\mu}$ treats the $q^{i}$ and the derivative
variables as if they were constants.

The variation of the Lagrangian given above is then
\begin{equation}
\delta L=\sum_{k=0}^{r}\frac{\partial L}{\partial q_{,\mu_{1}...\mu_{k}}^{i}}\delta q_{,\mu_{1}...\mu_{k}}^{i},\label{eq:v}
\end{equation}
and this can be rewritten to separate a total divergence in at least
two ways.
\begin{defn}
$ $
\begin{enumerate}
\item The (\emph{higher}) \emph{Euler operators}, also called (\emph{higher})
\emph{variational} or \emph{Euler-Lagrange derivatives} are defined
as
\begin{equation}
\frac{\delta L}{\delta q_{,\mu_{1}...\mu_{k}}^{i}}:=\sum_{l=0}^{r-k}\begin{pmatrix}k+l\\
k
\end{pmatrix}\left(-1\right)^{l}d_{\nu_{1}}\dots d_{\nu_{l}}\frac{\partial L}{\partial q_{,\mu_{1}...\mu_{k}\nu_{1}...\nu_{l}}^{i}},\quad0\le k\le r.\label{eq:higher_eul}
\end{equation}
\item The \emph{Lagrangian momenta} are defined as
\begin{equation}
P_{i}^{\mu_{1}...\mu_{k}}:=\sum_{l=0}^{r-k}\left(-1\right)^{l}d_{\nu_{1}}\dots d_{\nu_{l}}\frac{\partial L}{\partial q_{,\mu_{1}...\mu_{k}\nu_{1}...\nu_{l}}^{i}},\quad0\le k\le r.
\end{equation}
\end{enumerate}
\end{defn}
The Euler operators are ``operators'' in the sense of associating
to each Lagrangian $L$ the expressions $\delta L/\delta q_{,\mu_{1}...\mu_{k}}^{i}$.
They were defined and extensively investigated by Aldersley \cite{Ald}
(see also \cite{An}). The term ``higher'' refers to the fact that
for $k=0$ we have
\begin{equation}
\frac{\delta L}{\delta q^{i}}=\sum_{k=0}^{r}\left(-1\right)^{k}d_{\mu_{1}}\dots d_{\mu_{k}}\frac{\partial L}{\partial q_{,\mu_{1}...\mu_{k}}^{i}},
\end{equation}
which are the usual Euler-Lagrange derivatives, thus for $k>0$ these
are ``higher order'' analogues of the Euler-Lagrange derivative.
For $k=0$ we also have $P_{i}=\delta L/\delta q^{i}$, hence we usually
call the $P_{i}^{\mu_{1}...\mu_{k}}$ Lagrangian momenta only for
$k>0$. The expressions for the Lagrangian momenta and the higher
variational derivatives are extremely similar with only the binomial
factor being a difference. But as this binomial factor is different
in each term of the sum, the difference between the momenta and the
higher variational derivatives is not trivial to quantify.
\begin{prop}
For any Lagrangian $L$ of order $r$, we have the following two expressions
for its variation:
\begin{equation}
\delta L=\sum_{k=0}^{r}d_{\mu_{1}}\dots d_{\mu_{k}}\left(\frac{\delta L}{\delta q_{,\mu_{1}...\mu_{k}}^{i}}\delta q^{i}\right)\label{eq:1st_var_he}
\end{equation}
and
\begin{equation}
\delta L=\frac{\delta L}{\delta q^{i}}\delta q^{i}+d_{\mu}\left(\sum_{k=0}^{r-1}P_{i}^{\mu\mu_{1}...\mu_{k}}\delta q_{,\mu_{1}...\mu_{k}}^{i}\right).\label{eq:1st_var_lm}
\end{equation}
\end{prop}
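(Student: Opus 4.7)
The plan is to prove the two formulas separately, both starting from the raw expansion (\ref{eq:v}) and using the fact that variations commute with total derivatives, so $\delta q^{i}_{,\mu_{1}\ldots\mu_{k}} = d_{\mu_{1}}\cdots d_{\mu_{k}}\delta q^{i}$. The first identity comes directly from the inverse higher product formula (Corollary C.4), while the second is most cleanly obtained from a one-step recursion for the Lagrangian momenta that already encodes a single integration by parts per derivative level.

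For (\ref{eq:1st_var_he}), I would substitute $\delta q^{i}_{,\mu_{1}\ldots\mu_{k}} = d_{\mu_{1}}\cdots d_{\mu_{k}}\delta q^{i}$ into (\ref{eq:v}) and then apply Corollary C.4 (with partial derivatives replaced by total derivatives, which is legitimate since the corollary is a purely algebraic Leibniz-type identity) to the pair $f = \delta q^{i}$ and $g^{\mu_{1}\ldots\mu_{k}} = \partial L/\partial q^{i}_{,\mu_{1}\ldots\mu_{k}}$ for each fixed $i$. The right-hand side of Corollary C.4 groups naturally as $\sum_{k=0}^{r} d_{\mu_{1}}\cdots d_{\mu_{k}}\bigl(\delta q^{i}\cdot C^{\mu_{1}\ldots\mu_{k}}_{i}\bigr)$ where
\[
C^{\mu_{1}\ldots\mu_{k}}_{i} = \sum_{l=0}^{r-k}\binom{k+l}{k}(-1)^{l}d_{\nu_{1}}\cdots d_{\nu_{l}}\frac{\partial L}{\partial q^{i}_{,\mu_{1}\ldots\mu_{k}\nu_{1}\ldots\nu_{l}}}
\]
is by definition $\delta L/\delta q^{i}_{,\mu_{1}\ldots\mu_{k}}$, producing exactly (\ref{eq:1st_var_he}).

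For (\ref{eq:1st_var_lm}), I would first derive the recursion
\[
P^{\mu_{1}\ldots\mu_{k}}_{i} = \frac{\partial L}{\partial q^{i}_{,\mu_{1}\ldots\mu_{k}}} - d_{\nu}P^{\mu_{1}\ldots\mu_{k}\nu}_{i}, \qquad 0\le k\le r-1,
\]
together with the boundary case $P^{\mu_{1}\ldots\mu_{r}}_{i}=\partial L/\partial q^{i}_{,\mu_{1}\ldots\mu_{r}}$, obtained by splitting off the $l=0$ term and relabelling the remaining sum in the definition of the Lagrangian momenta. Then I would expand the candidate divergence by the Leibniz rule:
\[
d_{\mu}\!\left(P^{\mu\mu_{1}\ldots\mu_{k}}_{i}\delta q^{i}_{,\mu_{1}\ldots\mu_{k}}\right) = P^{\mu\mu_{1}\ldots\mu_{k}}_{i}\delta q^{i}_{,\mu\mu_{1}\ldots\mu_{k}} + \bigl(d_{\mu}P^{\mu\mu_{1}\ldots\mu_{k}}_{i}\bigr)\delta q^{i}_{,\mu_{1}\ldots\mu_{k}},
\]
sum from $k=0$ to $r-1$, substitute $d_{\mu}P^{\mu\mu_{1}\ldots\mu_{k}}_{i} = \partial L/\partial q^{i}_{,\mu_{1}\ldots\mu_{k}} - P^{\mu_{1}\ldots\mu_{k}}_{i}$ from the recursion, and shift $k\mapsto k+1$ in the first group of terms. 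The $P^{\mu_{1}\ldots\mu_{k}}_{i}$ telescope, the top-order boundary case absorbs the missing $k=r$ piece into $\sum_{k=0}^{r}(\partial L/\partial q^{i}_{,\mu_{1}\ldots\mu_{k}})\delta q^{i}_{,\mu_{1}\ldots\mu_{k}} = \delta L$, and what survives from the $P$-telescope is $-P_{i}\delta q^{i} = -(\delta L/\delta q^{i})\delta q^{i}$, rearranging to (\ref{eq:1st_var_lm}).

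The only real obstacle is bookkeeping: keeping the binomial coefficients, index ranges, and symmetrisation conventions aligned in the Corollary C.4 step for (\ref{eq:1st_var_he}), and making sure the recursion closes correctly at both ends ($k=r$ and $k=0$) in the telescoping argument for (\ref{eq:1st_var_lm}). No new ideas are needed beyond the two higher product formulae proved in Appendix C and the definitions of $\delta L/\delta q^{i}_{,\mu_{1}\ldots\mu_{k}}$ and $P^{\mu_{1}\ldots\mu_{k}}_{i}$.
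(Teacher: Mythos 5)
Your proposal is correct and follows essentially the same route as the paper: the first identity is obtained exactly as in the paper by applying the inverse higher product formula to (\ref{eq:v}), and the second rests on the same recursion $P_{i}^{\mu_{1}\ldots\mu_{k}}=\partial L/\partial q_{,\mu_{1}\ldots\mu_{k}}^{i}-d_{\mu}P_{i}^{\mu\mu_{1}\ldots\mu_{k}}$ with the boundary case $P_{i}^{\mu_{1}\ldots\mu_{r}}=\partial L/\partial q_{,\mu_{1}\ldots\mu_{r}}^{i}$ that the paper uses. The only (harmless) difference is the direction of the momentum argument: the paper posits the divergence form with undetermined coefficients and solves the recursion by backwards induction to recover the defining formula, whereas you derive the recursion from the definition and verify the identity by direct telescoping — an equivalent and, if anything, slightly more self-contained computation.
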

\begin{proof}
Start with (\ref{eq:v}), then an application of Corollary \ref{Cor:higher_prod_inv}
gives
\begin{align}
\delta L & =\sum_{k=0}^{r}\sum_{l=0}^{r-k}\begin{pmatrix}k+l\\
k
\end{pmatrix}\left(-1\right)^{l}d_{\mu_{1}}\dots d_{\mu_{k}}\left(\delta q^{i}d_{\nu_{1}}\dots d_{\nu_{l}}\frac{\partial L}{\partial q_{,\mu_{1}...\mu_{k}\nu_{1}...\nu_{l}}^{i}}\right)\nonumber \\
 & =\sum_{k=0}^{r}d_{\mu_{1}}\dots d_{\mu_{k}}\left(\delta q^{i}\sum_{l=0}^{r-k}\begin{pmatrix}k+l\\
k
\end{pmatrix}\left(-1\right)^{l}d_{\nu_{1}}\dots d_{\nu_{l}}\frac{\partial L}{\partial q_{,\mu_{1}...\mu_{k}\nu_{1}...\nu_{l}}^{i}}\right),
\end{align}
which gives the first formula (\ref{eq:1st_var_he}).

Then split the $k=0$ term from the $k>0$ terms in (\ref{eq:1st_var_he})
as
\begin{equation}
\delta L=\frac{\delta L}{\delta q^{i}}\delta q^{i}+\sum_{k=0}^{r-1}d_{\mu}d_{\mu_{1}}\dots d_{\mu_{k}}\left(\frac{\delta L}{\delta q_{,\mu\mu_{1}...\mu_{k}}^{i}}\delta q^{i}\right),
\end{equation}
and without calculating anything explicitly, we see via the higher
product formula (Corollary \ref{Cor:higher_prod}) that this can be
rewritten as
\begin{equation}
\delta L=\frac{\delta L}{\delta q^{i}}\delta q^{i}+d_{\mu}\left(\sum_{k=0}^{r-1}P_{i}^{\mu\mu_{1}...\mu_{k}}\delta q_{,\mu_{1}...\mu_{k}}^{i}\right)
\end{equation}
for \emph{some} coefficients $P_{i}^{\mu_{1}...\mu_{k}}$ ($1\le k\le r$)
which can moreover be taken to be symmetric in the upper indices.
In the preceding formula, the total derivative is evaluated and we
also abbreviate $P_{i}=\delta L/\delta q^{i}$ to get
\begin{align}
\delta L & =P_{i}\delta q^{i}+\sum_{k=0}^{r-1}d_{\mu}P_{i}^{\mu\mu_{1}...\mu_{k}}\delta q_{,\mu_{1}...\mu_{k}}^{i}+\sum_{k=0}^{r-1}P_{i}^{\mu\mu_{1}...\mu_{k}}\delta q_{,\mu\mu_{1}...\mu_{k}}^{i}\nonumber \\
 & =\sum_{k=0}^{r-1}\left(d_{\mu}P_{i}^{\mu\mu_{1}...\mu_{k}}+P_{i}^{\mu_{1}...\mu_{k}}\right)\delta q_{,\mu_{1}...\mu_{k}}^{i}+P_{i}^{\mu_{1}...\mu_{r}}\delta q_{,\mu_{1}...\mu_{r}}^{i}.
\end{align}
Comparision with the coefficients of the $\delta q_{,\mu_{1}...\mu_{k}}^{i}$
with (\ref{eq:v}) gives the recursion formulae
\begin{align}
P_{i}^{\mu_{1}...\mu_{r}} & =\frac{\partial L}{\partial q_{,\mu_{1}...\mu_{r}}^{i}},\nonumber \\
P_{i}^{\mu_{1}...\mu_{k}} & =\frac{\partial L}{\partial q_{,\mu_{1}...\mu_{k}}^{i}}-d_{\mu}P_{i}^{\mu\mu_{1}...\mu_{k}},\quad0\le k\le r-1.
\end{align}
These can be solved by backwards induction to get
\begin{equation}
P_{i}^{\mu_{1}...\mu_{k}}=\sum_{l=0}^{r-k}\left(-1\right)^{l}d_{\nu_{1}}\dots d_{\nu_{l}}\frac{\partial L}{\partial q_{,\mu_{1}...\mu_{k}\nu_{1}...\nu_{l}}^{i}},\quad0\le k\le r
\end{equation}
as asserted.
\end{proof}
\begin{defn}
Let $\varepsilon_{i}[q](x)=\varepsilon_{i}(x,q(x),\dots,q_{(s)}(x))$
be an $m$-tuple of differential operators of order $s$.
\begin{enumerate}
\item The \emph{linearization} of this operator is the linear differential
operator
\begin{equation}
D_{i}(\varepsilon)[q,\delta q]=\delta\varepsilon_{i}[q]=\sum_{k=0}^{s}\frac{\partial\varepsilon_{i}}{\partial q_{,\mu_{1}...\mu_{k}}^{j}}[q]\delta q_{,\mu_{1}...\mu_{k}}^{j}.
\end{equation}
\item The \emph{Helmholtz operator} $H$ is the operator transforming an
$m$-component differential operator $\varepsilon=(\varepsilon_{i})$
into the linear differential operator
\begin{equation}
H_{i}(\varepsilon)=D_{i}(\varepsilon)-D_{i}^{\dagger}(\varepsilon),
\end{equation}
where $D^{\dagger}$ is the formal adjoint.
\item Applying the formula (\ref{eq:form_adj}) for the formal adjoint as
well as the definitions of the higher variational derivatives (\ref{eq:higher_eul}),
we get
\begin{equation}
H_{i}(\varepsilon)=\sum_{k=0}^{s}H_{ij}^{\mu_{1}...\mu_{k}}(\varepsilon)\delta q_{,\mu_{1}...\mu_{k}}^{j},
\end{equation}
where
\begin{equation}
H_{ij}^{\mu_{1}...\mu_{k}}(\varepsilon)=\frac{\partial\varepsilon_{i}}{\partial q_{,\mu_{1}...\mu_{k}}^{j}}-\left(-1\right)^{k}\frac{\delta\varepsilon_{j}}{\delta q_{,\mu_{1}...\mu_{k}}^{i}},\quad0\le k\le s\label{eq:HVC}
\end{equation}
which are called the \emph{Helmholtz expressions} associated to $\varepsilon_{i}$.
\item The equations
\begin{equation}
H_{ij}^{\mu_{1}...\mu_{k}}(\varepsilon)=0,\quad0\le k\le s
\end{equation}
are called the \emph{Helmholtz variationality conditions} (or just
\emph{Helmholtz conditions}) for $\varepsilon_{i}$.
\end{enumerate}
\end{defn}
For clarity, we write out the Helmholtz expressions for $s=2$:
\begin{align}
H_{ij}(\varepsilon) & =\frac{\partial\varepsilon_{i}}{\partial q^{j}}-\frac{\partial\varepsilon_{j}}{\partial q^{i}}+d_{\mu}\frac{\partial\varepsilon_{j}}{\partial q_{,\mu}^{i}}-d_{\mu}d_{\nu}\frac{\partial\varepsilon_{j}}{\partial q_{,\mu\nu}^{i}},\nonumber \\
H_{ij}^{\mu}(\varepsilon) & =\frac{\partial\varepsilon_{i}}{\partial q_{,\mu}^{j}}+\frac{\partial\varepsilon_{j}}{\partial q_{,\mu}^{i}}-2d_{\nu}\frac{\partial\varepsilon_{j}}{\partial q_{,\mu\nu}^{i}}\nonumber \\
H_{ij}^{\mu\nu}(\varepsilon) & =\frac{\partial\varepsilon_{i}}{\partial q_{,\mu\nu}^{j}}-\frac{\partial\varepsilon_{j}}{\partial q_{,\mu\nu}^{i}}.
\end{align}

\begin{prop}
Let $E_{i}[q]$ denote an $m$-component differential operator of
order $s\le2r$ such that there is a Lagrangian function $L[q]$ of
order $r$ with $E_{i}=\delta L/\delta q^{i}$ (then $E$ is called
\emph{variational}). Then the differential operator $E$ satisfies
the Helmholtz variationality conditions, i.e.
\begin{equation}
H_{i}(E)=0,\quad\text{or equivalently}\quad H_{ij}^{\mu_{1}...\mu_{k}}(\varepsilon)=0,\quad0\le k\le s.
\end{equation}
\end{prop}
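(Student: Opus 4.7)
The plan is to exploit the symmetry of mixed second variations of $L$. By the first variation formula (\ref{eq:1st_var_lm}),
\[
\delta L = E_i\,\delta q^i + d_\mu \Theta^\mu[q,\delta q], \qquad \Theta^\mu[q,\delta q] = \sum_{k=0}^{r-1} P_i^{\mu\mu_1\ldots\mu_k}\,\delta q^i_{,\mu_1\ldots\mu_k}.
\]
I would introduce an independent second variation $\delta'$ and apply it to this identity; since variations are formal derivations that commute ($\delta'\delta q^i = \delta\delta' q^i$), this produces
\[
\delta'\delta L = D_i(E)[\delta' q]\,\delta q^i + E_i\,\delta\delta' q^i + d_\mu(\delta'\Theta^\mu).
\]
Swapping the roles of $\delta$ and $\delta'$ and using $\delta\delta' L = \delta'\delta L$, the $E_i\,\delta\delta'q^i$ terms cancel and one obtains the antisymmetry identity
\[
D_i(E)[\delta' q]\,\delta q^i - D_i(E)[\delta q]\,\delta' q^i = d_\mu\bigl(\delta\Theta'{}^\mu - \delta'\Theta^\mu\bigr),
\]
where $\Theta'$ denotes $\Theta$ with $\delta q$ replaced by $\delta' q$.

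Next I would rewrite the second term on the left by invoking the formal adjoint. From the definition of $D_i^\dagger(E)$, and with the integration-by-parts identity supplied by Corollary \ref{Cor:higher_prod_inv}, there exists a current $K^\mu[q,\delta q,\delta' q]$ with
\[
D_i(E)[\delta q]\,\delta' q^i = D_i^\dagger(E)[\delta' q]\,\delta q^i + d_\mu K^\mu.
\]
Substituting into the antisymmetry identity yields
\[
H_i(E)[\delta' q]\,\delta q^i = d_\mu\bigl(\delta\Theta'{}^\mu - \delta'\Theta^\mu + K^\mu\bigr),
\]
so that the pairing $H_i(E)[\delta' q]\,\delta q^i$ is a total divergence for every pair of variations.

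Finally I would apply the fundamental lemma of the calculus of variations. Fix $\delta' q$ arbitrarily and let $\delta q^i$ range over smooth, compactly supported test functions on $M$; integration over $M$ annihilates the divergence, so $\int_M H_i(E)[\delta' q]\,\delta q^i\,d^n x = 0$ for all such $\delta q^i$, forcing $H_i(E)[\delta' q] \equiv 0$. Since $\delta' q$ was arbitrary and $H_i(E)[\delta' q] = \sum_k H_{ij}^{\mu_1\ldots\mu_k}(E)\,\delta' q^j_{,\mu_1\ldots\mu_k}$ is a linear operator in $\delta' q$ whose coefficients are independent of $\delta' q$, the vanishing of $H_i(E)[\delta' q]$ for every $\delta' q$ forces each coefficient $H^{\mu_1\ldots\mu_k}_{ij}(E)$ to vanish individually, which are the Helmholtz conditions (\ref{eq:HVC}).

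The principal source of friction is bookkeeping rather than conceptual difficulty: one has to verify carefully that the currents $\Theta$, $K$ and their variations depend polynomially on $\delta q$, $\delta' q$ and their derivatives so that compact support in $\delta q$ suffices to kill the boundary contributions when invoking the fundamental lemma. This is immediate from the explicit formulas for the Lagrangian momenta and for the formal adjoint, but it is the only place where the structure of $L$ (through $r$) enters the proof; everything else is purely formal manipulation of the symmetry $\delta\delta'L = \delta'\delta L$.
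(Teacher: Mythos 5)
Your proposal is correct and follows essentially the same route as the paper: both exploit the commutativity of two independent variations, use the formal adjoint (via the higher product formulae) to produce the Helmholtz operator, and finish with the fundamental lemma plus the observation that the jet values $\delta' q^{j}_{,\mu_1\ldots\mu_k}$ can be prescribed arbitrarily at a point. The only cosmetic difference is that you manipulate the Lagrangian density and collect total divergences, whereas the paper works directly with the integrated action $\delta_2\delta_1 S-\delta_1\delta_2 S=0$, which absorbs the divergence bookkeeping automatically.
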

\begin{proof}
We work with the action integral
\begin{equation}
S[q]=\int_{M}L[q](x)\,d^{n}x
\end{equation}
directly. Ordinarily, variations $\delta q^{i}$ are obtained by first
considering a one-parameter family $q_{s}^{i}(x)$ of fields and taking
the $s$-derivative at $s=0$, i.e. $\delta q^{i}=\left.\partial q_{s}^{i}/\partial s\right|_{s=0}$.
We now consider a \emph{two-parameter} family $q_{s,t}^{i}(x)$ of
fields with
\begin{align}
\delta_{1} & =\left.\frac{\partial}{\partial s}\right|_{s,t=0},\quad\delta_{2}=\left.\frac{\partial}{\partial t}\right|_{s,t=0}\nonumber \\
\delta_{1}\delta_{2} & =\delta_{2}\delta_{1}=\left.\frac{\partial^{2}}{\partial s\partial t}\right|_{s,t=0}.
\end{align}
Any of the two variations $\delta_{1}q^{i}$ and $\delta_{2}q^{i}$
may be chosen to have compact support.

Then
\begin{equation}
\delta_{1}S=\int_{M}\sum_{k=0}^{r}\frac{\partial L}{\partial q_{,\mu_{1}...\mu_{k}}^{i}}\delta_{1}q_{,\mu_{1}...\mu_{k}}^{i}\,d^{n}x=\int_{M}\varepsilon_{i}\delta_{1}q^{i}\,d^{n}x,
\end{equation}
where we have integrated by parts and chosen $\delta_{1}q$ to have
compact support, thus
\begin{equation}
\delta_{2}\delta_{1}S=\int_{M}\left(\sum_{k=0}^{2r}\frac{\partial\varepsilon_{i}}{\partial q_{,\mu_{1}...\mu_{k}}^{j}}\delta_{1}q^{i}\delta_{2}q_{,\mu_{1}...\mu_{k}}^{j}+\varepsilon_{i}\delta_{2}\delta_{1}q^{i}\right)d^{n}x.
\end{equation}
As the two variations commute, by substracting $\delta_{1}\delta_{2}S$
from this, we get zero:
\begin{align}
0 & =\delta_{2}\delta_{1}S-\delta_{1}\delta_{2}S=\int_{M}\sum_{k=0}^{2r}\left(\frac{\partial\varepsilon_{i}}{\partial q_{,\mu_{1}...\mu_{k}}^{j}}\delta_{1}q^{i}\delta_{2}q_{,\mu_{1}...\mu_{k}}^{j}-\frac{\partial\varepsilon_{j}}{\partial q_{,\mu_{1}...\mu_{k}}^{i}}\delta_{1}q_{,\mu_{1}...\mu_{k}}^{i}\delta_{2}q^{j}\right)d^{n}x\nonumber \\
 & =\int_{M}\sum_{k=0}^{2r}\left(\frac{\partial\varepsilon_{i}}{\partial q_{,\mu_{1}...\mu_{k}}^{j}}\delta_{1}q^{i}\delta_{2}q_{,\mu_{1}...\mu_{k}}^{j}-\left(-1\right)^{k}\frac{\delta\varepsilon_{j}}{\delta q_{,\mu_{1}...\mu_{k}}^{i}}\delta_{1}q^{i}\delta_{2}q_{,\mu_{1}...\mu_{k}}^{j}\right)d^{n}x,
\end{align}
where we have used the formula for the formal adjoint along with the
definition of higher variational derivatives. Thus
\begin{equation}
0=\int_{M}\sum_{k=0}^{2r}H_{ij}^{\mu_{1}...\mu_{k}}(\varepsilon)\delta_{1}q^{i}\delta_{2}q_{,\mu_{1}...\mu_{k}}^{j}\,d^{n}x.
\end{equation}
Since $\delta_{1}q^{i}$ is any compactly supported bump function,
the usual argument tells us that this is only possible if
\begin{equation}
0=\sum_{k=0}^{2r}H_{ij}^{\mu_{1}...\mu_{k}}(\varepsilon)\delta_{2}q_{,\mu_{1}...\mu_{k}}^{j}.
\end{equation}
Then at any one point $x\in M$ we may set $\delta_{2}q_{,\mu_{1}...\mu_{k}}^{j}(x)$
to have any fixed value as long as it is symmetric in the lower indices.
As $H_{ij}^{\mu_{1}...\mu_{k}}(\varepsilon)$ is also symmetric in
the greek indices, we obtain that
\begin{equation}
0=H_{ij}^{\mu_{1}...\mu_{k}}(\varepsilon),\quad0\le k\le2r.
\end{equation}
\end{proof}
Stated differently, every Euler-Lagrange differential equation is
such that its linearization is (formally) selfadjoint.
\begin{rem*}
It is difficult to pinpoint where exactly did the Helmholtz variationality
conditions appear to this degree of generality for the first time.
Formulations of the variational condition in special cases such as
second order ordinary differential equations date back to the 19th
century. The formally self-adjoint nature of the Euler-Lagrange equations
have also been recognized relatively early. We refer to \cite{AD,An,Kr}
for the Helmholtz variationality conditions and the citations therein.
\end{rem*}

\section{Homotopy operators and integrability conditions\label{sec:Homotopy-operators-and}}

In this section we assume that the zero field $q^{i}(x)=0$ is a valid
field configuration and if $q^{i}(x)$ is allowed, so is $tq^{i}(x)$
for $0\le t\le1$. In other words, the space of field variables is
star-shaped with respect to $0$. For the applications in this paper,
this will not be restrictive. If this assumption is made on the field
variables, we are able to derive a number of powerful integrability
relations for variational systems. A \emph{current} will be an $n$-component
differential operator $K^{\mu}[q]$.
\begin{defn}
$ $
\end{defn}
\begin{enumerate}
\item Let $L[q]$ be a Lagrangian of order $r$. To $L$ we associate a
current $\mathrm{K}(L)$ defined by
\begin{equation}
\mathrm{K}^{\mu}(L)[q]=\int_{0}^{1}\sum_{k=0}^{r-1}P_{i}^{\mu\mu_{1}...\mu_{k}}[tq]q_{,\mu_{1}...\mu_{k}}^{i}\,dt,\label{eq:H_cur}
\end{equation}
where $f[tq](x):=f(x,tq(x),\dots,tq_{(r)}(x))$.
\item Let $\varepsilon_{i}[q]$ be an $m$-component differential operator
of order $s$. To $\varepsilon$ we associate a Lagrangian $\Lambda(\varepsilon)$
as
\begin{equation}
\Lambda(\varepsilon)[q]=\int_{0}^{1}\varepsilon_{i}[tq]q^{i}\,dt,\label{eq:VT_Lag}
\end{equation}
called the \emph{Vainberg-Tonti Lagrangian} associated to $\varepsilon$.
\item Let $D_{i}[q,\delta q]=\sum_{k=0}^{p}D_{ij}^{\mu_{1}...\mu_{k}}[q]\delta q_{,\mu_{1}...\mu_{k}}^{j}$
be a linear differential operator on field variations whose coefficients
$D_{ij}^{\mu_{1}...\mu_{k}}[q]$ are themselves (nonlinear) differential
operators acting on the dynamical variables $q^{i}$. To every such
linear operator we associate an $m$-component (nonlinear) differential
operator
\begin{equation}
\mathrm{Q}(D)[q]=\int_{0}^{1}\sum_{k=0}^{p}D_{ij}^{\mu_{1}...\mu_{k}}(\varepsilon)[tq]tq_{,\mu_{1}...\mu_{k}}^{j}dt.
\end{equation}
\end{enumerate}
Let us symbolically write $\mathbf{Div}$ for the total divergence,
i.e. the operator which associates to each current $K^{\mu}$ the
Lagrangian $\mathbf{Div}(K):=d_{\mu}K^{\mu}$ and $\mathbf{E}$ for
the Euler-Lagrange operator which associates to each Lagrangian $L$
the Euler-Lagrange expressions $\mathbf{E}(L):=(\delta L/\delta q^{i})_{i=1}^{m}$.
Also write $\mathbf{H}(\varepsilon):=(H_{i}(\varepsilon))_{i=1}^{m}$
for the symbolic form of the Helmholtz operator. Finally, if $L$
is a Lagrangian let $L_{0}(x):=L[0](x)$ be the ordinary function
of the independent variables obtained by evaluating the Lagrangian
on the zero field.
\begin{prop}
\label{Prop:hom_formula}The operators $\mathbf{Div}$, $\mathbf{E}$,
$\mathbf{H}$ and $\mathrm{K}$, $\Lambda$, $\mathrm{Q}$ satisfy
the following relations:
\begin{enumerate}
\item for each Lagrangian $L$ we have
\begin{equation}
L=\Lambda(\mathbf{E}(L))+\mathbf{Div}(\mathrm{K}(L))+L_{0};\label{eq:1st_hom_abstract}
\end{equation}
\item for each $m$-component differential operator $\varepsilon=(\varepsilon_{i})_{i=1}^{m}$
we have
\begin{equation}
\varepsilon=\mathbf{E}(\Lambda(\varepsilon))+\mathrm{Q}(\mathbf{H}(\varepsilon)).\label{eq:2nd_hom_abstract}
\end{equation}
\end{enumerate}
\end{prop}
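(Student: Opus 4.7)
The plan is to prove both homotopy identities via the fundamental theorem of calculus applied to the path $t\mapsto tq$ joining the zero field to $q$, combined with the Helmholtz--type decomposition already available through the higher Euler operators.

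For the first identity, I would start from the observation that along the path $q_{t}(x)=tq(x)$, the chain rule yields
\begin{equation}
\frac{d}{dt}L[tq]=\sum_{k=0}^{r}\frac{\partial L}{\partial q_{,\mu_{1}...\mu_{k}}^{i}}[tq]\,q_{,\mu_{1}...\mu_{k}}^{i},
\end{equation}
which is exactly the expression (\ref{eq:v}) for $\delta L$ with the variation $\delta q^{i}$ replaced by $q^{i}$, evaluated at the field $tq$. Applying the Lagrangian-momentum form (\ref{eq:1st_var_lm}) of the first variation thus gives
\begin{equation}
\frac{d}{dt}L[tq]=\frac{\delta L}{\delta q^{i}}[tq]\,q^{i}+d_{\mu}\!\left(\sum_{k=0}^{r-1}P_{i}^{\mu\mu_{1}...\mu_{k}}[tq]\,q_{,\mu_{1}...\mu_{k}}^{i}\right).
\end{equation}
Integrating from $0$ to $1$ and using that the total divergence commutes with the $t$-integral yields $L[q]-L[0]=\Lambda(\mathbf{E}(L))[q]+\mathbf{Div}(\mathrm{K}(L))[q]$, which is (\ref{eq:1st_hom_abstract}). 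This step is essentially routine once the first variation formula is in hand; the only subtlety is the assumption that $tq$ belongs to the admissible field space for all $t\in[0,1]$, which is the star-shapedness hypothesis.

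For the second identity the strategy is to compute $\mathbf{E}(\Lambda(\varepsilon))$ explicitly. Since $\Lambda(\varepsilon)[q]=\int_{0}^{1}\varepsilon_{i}[tq]q^{i}\,dt$, differentiating under the integral gives
\begin{equation}
\frac{\partial\Lambda(\varepsilon)}{\partial q_{,\mu_{1}...\mu_{l}}^{j}}=\int_{0}^{1}t\,\frac{\partial\varepsilon_{i}}{\partial q_{,\mu_{1}...\mu_{l}}^{j}}[tq]\,q^{i}\,dt+\delta^{l,0}\!\int_{0}^{1}\varepsilon_{j}[tq]\,dt,
\end{equation}
whence $\delta\Lambda(\varepsilon)/\delta q^{j}=\int_{0}^{1}\varepsilon_{j}[tq]\,dt+\int_{0}^{1}t\sum_{l}(-1)^{l}d_{\mu_{1}}\!\dots d_{\mu_{l}}\!\bigl(\partial\varepsilon_{i}/\partial q_{,\mu_{1}...\mu_{l}}^{j}[tq]\,q^{i}\bigr)\,dt$. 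I would then distribute the iterated total derivative across the product using Proposition \ref{Prop:higher_prod} (with $q^{i}$ playing the role of $f$), regroup the inner binomial sums to recognize the higher variational derivatives via (\ref{eq:higher_eul}), and arrive at
\begin{equation}
\sum_{l}(-1)^{l}d_{\mu_{1}}\!\dots d_{\mu_{l}}\!\left(\tfrac{\partial\varepsilon_{i}}{\partial q_{,\mu_{1}...\mu_{l}}^{j}}[tq]\,q^{i}\right)=\sum_{k}(-1)^{k}\tfrac{\delta\varepsilon_{i}}{\delta q_{,\mu_{1}...\mu_{k}}^{j}}[tq]\,q_{,\mu_{1}...\mu_{k}}^{i}.
\end{equation}

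The final and most delicate step is the identification. Substituting the Helmholtz formula (\ref{eq:HVC}) in the form $(-1)^{k}\delta\varepsilon_{i}/\delta q_{,\mu_{1}...\mu_{k}}^{j}=\partial\varepsilon_{j}/\partial q_{,\mu_{1}...\mu_{k}}^{i}-H_{ji}^{\mu_{1}...\mu_{k}}(\varepsilon)$ splits the right-hand side into a ``total'' part $\sum_{k}\partial\varepsilon_{j}/\partial q_{,\mu_{1}...\mu_{k}}^{i}[tq]\,q_{,\mu_{1}...\mu_{k}}^{i}=\frac{d}{dt}\varepsilon_{j}[tq]$ and a Helmholtz remainder. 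Integrating $t\,\frac{d}{dt}\varepsilon_{j}[tq]$ by parts in $t$ against the factor $t$ produces a boundary term $\varepsilon_{j}[q]$ together with $-\int_{0}^{1}\varepsilon_{j}[tq]\,dt$, the latter cancelling exactly against the first piece of $\delta\Lambda(\varepsilon)/\delta q^{j}$; what remains is precisely $\mathrm{Q}(\mathbf{H}(\varepsilon))_{j}[q]$, yielding (\ref{eq:2nd_hom_abstract}). The main obstacle is the combinatorial bookkeeping in this step: one must carefully reindex the binomial sums from the product rule to recognize the higher variational derivative, and then set up the integration by parts in $t$ so that the boundary term at $t=1$ is clean. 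The higher product formulae of Appendix A are exactly what make this reindexing work.
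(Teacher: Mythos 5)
Your proposal is correct and follows essentially the same route as the paper's proof: part 1 via the first variation formula (\ref{eq:1st_var_lm}) evaluated along $t\mapsto tq$ and integrated in $t$, and part 2 by computing $\mathbf{E}(\Lambda(\varepsilon))$, distributing the total derivatives with the higher product formulae to expose the higher variational derivatives, and substituting the Helmholtz expressions. The only cosmetic difference is that in the last step the paper recognizes $\varepsilon_i[tq]+t\,\tfrac{d}{dt}\varepsilon_i[tq]$ as the exact derivative $\tfrac{d}{dt}\left(t\,\varepsilon_i[tq]\right)$, whereas you integrate $t\,\tfrac{d}{dt}\varepsilon_j[tq]$ by parts in $t$; these are the same manipulation.
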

\begin{proof}
$ $
\begin{enumerate}
\item For deriving concrete formulae, suppose that the order of $L$ is
$r$. Then we have
\begin{equation}
\frac{d}{dt}L[tq]=\sum_{k=0}^{r}\frac{\partial L}{\partial q_{,\mu_{1}...\mu_{k}}^{i}}[tq]q_{,\mu_{1}...\mu_{k}}^{i}=\frac{\delta L}{\delta q^{i}}[tq]q^{i}+d_{\mu}\left(\sum_{k=0}^{r-1}P_{i}^{\mu\mu_{1}...\mu_{k}}[tq]q_{,\mu_{1}...\mu_{k}}^{i}\right),
\end{equation}
where we have used the first variation formula (\ref{eq:1st_var_lm})
evaluated at $tq$ with $\delta q^{i}=q^{i}$. Note that for any operator
$f[q]$ we have $(d_{\mu}f)[tq]=d_{\mu}(f[tq])$ which is clear when
the total derivative is expanded. Integrating the above relation from
$0$ to $1$ gives
\begin{equation}
L[q]-L_{0}=\int_{0}^{1}\frac{\delta L}{\delta q^{i}}[tq]q^{i}\,dt+d_{\mu}\left(\sum_{k=0}^{r-1}P_{i}^{\mu\mu_{1}...\mu_{k}}[tq]q_{,\mu_{1}...\mu_{k}}^{i}\right),
\end{equation}
which is just (\ref{eq:1st_hom_abstract}) written out in more concrete
terms.
\item Suppose that $\varepsilon_{i}$ has order $s$. Let $L:=\Lambda(\varepsilon)$
denote the Vainberg-Tonti Lagrangian associated to $\varepsilon_{i}$
and $E_{i}:=\delta L/\delta q^{i}$. We then express $E_{i}$ in terms
of $\varepsilon_{i}$:
\begin{align}
E_{i}[q] & =\sum_{k=0}^{s}\left(-1\right)^{k}d_{\mu_{1}}\dots d_{\mu_{k}}\frac{\partial}{\partial q_{,\mu_{1}...\mu_{k}}^{i}}\int_{0}^{1}\varepsilon_{j}[tq]q^{j}\,dt\nonumber \\
 & =\int_{0}^{1}\varepsilon_{i}[tq]\,dt+\int_{0}^{1}\sum_{k=0}^{s}\left(-1\right)^{k}d_{\mu_{1}}\dots d_{\mu_{k}}\left(\frac{\partial\varepsilon_{j}}{\partial q_{,\mu_{1}...\mu_{k}}^{i}}[tq]tq^{j}\right)dt\nonumber \\
 & =\int_{0}^{1}\varepsilon_{i}[tq]\,dt+\int_{0}^{1}\sum_{k=0}^{s}\sum_{l=0}^{s-k}\left(-1\right)^{k+l}d_{\nu_{1}}\dots d_{\nu_{l}}\frac{\partial\varepsilon_{j}}{\partial q_{,\mu_{1}...\mu_{k}\nu_{1}...\nu_{l}}^{i}}[tq]tq_{,\mu_{1}...\mu_{k}}^{j}dt\nonumber \\
 & =\int_{0}^{1}\varepsilon_{i}[tq]\,dt+\int_{0}^{1}\sum_{k=0}^{s}\left(-1\right)^{k}\frac{\delta\varepsilon_{j}}{\delta q_{,\mu_{1}...\mu_{k}}^{i}}[tq]tq_{,\mu_{1}...\mu_{k}}^{j}dt
\end{align}
where we have used the higher order product formula. Inserting here
\begin{equation}
\left(-1\right)^{k}\frac{\delta\varepsilon_{j}}{\delta q_{,\mu_{1}...\mu_{k}}^{i}}[tq]=\frac{\partial\varepsilon_{i}}{\partial q_{,\mu_{1}...\mu_{k}}^{j}}[tq]-H_{ij}^{\mu_{1}...\mu_{k}}(\varepsilon)[tq],
\end{equation}
we get
\begin{equation}
E_{i}[q]=\int_{0}^{1}\left(\varepsilon_{i}[tq]+\sum_{k=0}^{s}\frac{\partial\varepsilon_{i}}{\partial q_{,\mu_{1}...\mu_{k}}^{j}}[tq]tq_{,\mu_{1}...\mu_{k}}^{j}\right)dt-\int_{0}^{1}\sum_{k=0}^{s}H_{ij}^{\mu_{1}...\mu_{k}}(\varepsilon)[tq]tq_{,\mu_{1}...\mu_{k}}^{j}dt.
\end{equation}
Finally,
\begin{equation}
\frac{d}{dt}\left(t\varepsilon_{i}[tq]\right)=\varepsilon_{i}[tq]+\sum_{k=0}^{s}t\frac{\partial\varepsilon_{i}}{\partial q_{,\mu_{1}...\mu_{k}}^{j}}[tq]q_{,\mu_{1}...\mu_{k}}^{j},
\end{equation}
and thus
\begin{equation}
\varepsilon_{i}[q]=E_{i}[q]+\int_{0}^{1}\sum_{k=0}^{s}H_{ij}^{\mu_{1}...\mu_{k}}(\varepsilon)[tq]tq_{,\mu_{1}...\mu_{k}}^{j}dt.
\end{equation}
The latter is just (\ref{eq:2nd_hom_abstract}) when expanded in coordinates.
\end{enumerate}
\end{proof}
\begin{cor}
\label{Cor:inv_problems}$ $
\begin{enumerate}
\item A Lagrangian $L$ is a null Lagrangian (has vanishing Euler-Lagrange
expressions) if and only if it is the sum of a total divergence and
a functionally constant term.
\item An $m$-component differential operator $\varepsilon_{i}$ is variational
if and only if it satisfies the Helmholtz conditions, and then the
Vainberg-Tonti formula gives a Lagrangian for the operator.
\end{enumerate}
\end{cor}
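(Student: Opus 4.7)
The plan is to derive both biconditionals as essentially immediate consequences of the two homotopy formulae of Proposition \ref{Prop:hom_formula}, so the work consists mostly of checking that the hypotheses make the appropriate terms vanish, plus verifying the easy converse of part~1.

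For part 1, I first dispose of the ``if'' direction. A function $c(x)$ of the independent variables alone has trivially vanishing Euler--Lagrange expressions, since every partial derivative $\partial c/\partial q^{i}_{,\mu_{1}\dots\mu_{k}}$ is zero. That total divergences $d_{\mu}K^{\mu}$ are null Lagrangians is a standard fact: the quickest way is to observe that for any compactly supported variation $\delta q^{i}$ one has $\int_{M} \delta(d_{\mu}K^{\mu})\,d^{n}x = \int_{M} d_{\mu}(\delta K^{\mu})\,d^{n}x = 0$ by the divergence theorem, and then invoke the fundamental lemma of the calculus of variations against the form (\ref{eq:1st_var_lm}) of the first variation to conclude $\delta(d_{\mu}K^{\mu})/\delta q^{i}=0$. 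For the ``only if'' direction, I apply the homotopy identity (\ref{eq:1st_hom_abstract}) to an arbitrary Lagrangian $L$. Under the hypothesis $\mathbf{E}(L)=0$, the first term $\Lambda(\mathbf{E}(L))$ vanishes by its integral definition (\ref{eq:VT_Lag}), yielding the decomposition $L = \mathbf{Div}(\mathrm{K}(L)) + L_{0}$, where $L_{0}(x) = L[0](x)$ is functionally constant and $\mathrm{K}(L)$ is the current (\ref{eq:H_cur}).

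For part 2, the ``only if'' direction has been established in the preceding proposition (variationality implies the Helmholtz conditions via self-adjointness of the linearization). For the ``if'' direction, I invoke the second homotopy identity (\ref{eq:2nd_hom_abstract}): for any operator $\varepsilon_{i}$ we have $\varepsilon = \mathbf{E}(\Lambda(\varepsilon)) + \mathrm{Q}(\mathbf{H}(\varepsilon))$. When the Helmholtz conditions $H_{ij}^{\mu_{1}\dots\mu_{k}}(\varepsilon)=0$ hold, the integrand in the definition of $\mathrm{Q}(\mathbf{H}(\varepsilon))$ vanishes identically, so $\varepsilon_{i} = \delta \Lambda(\varepsilon)/\delta q^{i}$ and the Vainberg--Tonti formula (\ref{eq:VT_Lag}) provides the desired explicit Lagrangian.

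There is no real obstacle, since Proposition \ref{Prop:hom_formula} has already done the analytic work; the corollary is essentially a transcription. The only points requiring a word of care are the brief verification that total divergences are null Lagrangians (handled above) and the standing star-shapedness assumption on the space of field configurations made at the top of Appendix \ref{sec:Homotopy-operators-and}, which is what legitimises evaluating all operators on $tq$ for $t\in[0,1]$; this assumption is inherited from the proposition and need only be noted, not re-established.
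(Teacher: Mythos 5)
Your proposal is correct and follows essentially the same route as the paper: both directions are read off from the two homotopy identities of Proposition \ref{Prop:hom_formula} by setting $\mathbf{E}(L)=0$ and $\mathbf{H}(\varepsilon)=0$ respectively. The only difference is that you additionally spell out the easy converse of part 1 (that divergences and functionally constant terms are null Lagrangians), which the paper leaves implicit.
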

\begin{proof}
$ $
\begin{enumerate}
\item Take (\ref{eq:1st_hom_abstract}) and substitute $\mathbf{E}(L)=0$
which then gives
\begin{equation}
L=\mathbf{Div}(\mathrm{K}(L))+L_{0}
\end{equation}
as asserted.
\item We have already seen that Euler-Lagrange expressions satisfy the Helmholtz
variationality conditions, hence it is sufficient to show that if
$\mathbf{H}(\varepsilon)=0$, then the operator is variational. Substituting
this into (\ref{eq:2nd_hom_abstract}), we get
\begin{equation}
\varepsilon=\mathbf{E}(\Lambda(\varepsilon)),
\end{equation}
which shows that $\varepsilon_{i}$ is indeed variational and $\Lambda(\varepsilon)$
is a Lagrangian for it.
\end{enumerate}
\end{proof}
For completeness and clarity, we also write down again the concrete
forms of (\ref{eq:1st_hom_abstract}) and (\ref{eq:2nd_hom_abstract}):
\begin{equation}
L[q]=\int_{0}^{1}\frac{\delta L}{\delta q^{i}}[tq]q^{i}\,dt+d_{\mu}\left(\sum_{k=0}^{r-1}P_{i}^{\mu\mu_{1}...\mu_{k}}[tq]q_{,\mu_{1}...\mu_{k}}^{i}\right)+L_{0}\label{eq:1st_hom_conc}
\end{equation}
and
\begin{equation}
\varepsilon_{i}[q]=\frac{\delta}{\delta q^{i}}\int_{0}^{1}\varepsilon_{i}[tq]q^{i}\,dt+\int_{0}^{1}\sum_{k=0}^{s}H_{ij}^{\mu_{1}...\mu_{k}}(\varepsilon)[tq]tq_{,\mu_{1}...\mu_{k}}^{j}dt.\label{eq:2nd_hom_conc}
\end{equation}
Note that the topological condition stated in the beginning of this
section (the space of field variables is star-shaped) is paramount
for the existence of the homotopy operators $\mathrm{K}$, $\Lambda$
and $\mathrm{Q}$ and therefore Proposition \ref{Prop:hom_formula}
and Corollary \ref{Cor:inv_problems} are valid only under these circumstances.

In the main part of the paper, we need formulae (\ref{eq:1st_hom_conc})
and (\ref{eq:2nd_hom_conc}) when the objects involved depend on multiple
sets of fields, eg. $L[q,p]$ where $q=(q^{i})_{i=1}^{m}$ and $p=(p^{\alpha})_{\alpha=1}^{m^{\prime}}$.
Clearly the operators $\mathbf{Div},\mathbf{E},\mathbf{H},\mathrm{K},\Lambda,\mathrm{Q}$
can be defined for each field variable separately such that the other
field variables are treated as part of the dependence on the independent
variables. The only noteworthy difference is in (\ref{eq:1st_hom_conc})
which then becomes
\begin{equation}
L[q,p]=\int_{0}^{1}\frac{\delta L}{\delta q^{i}}[tq,p]q^{i}\,dt+d_{\mu}\left(\sum_{k=0}^{r-1}P_{i}^{\mu\mu_{1}...\mu_{k}}[tq,p]q_{,\mu_{1}...\mu_{k}}^{i}\right)+L[0,p],\label{eq:1st_hom_multi}
\end{equation}
where the Lagrangian momenta
\begin{equation}
P_{i}^{\mu_{1}...\mu_{k}}[q,p]=\sum_{l=0}^{r-k}\left(-1\right)^{l}d_{\nu_{1}}\dots d_{\nu_{l}}\frac{\partial L}{\partial q_{,\mu_{1}...\mu_{k}\nu_{1}...\nu_{l}}^{i}}[q,p]
\end{equation}
are taken with respect to the variables $q^{i}$ only. Note that the
term $L[0,p]$ at the end is no longer functionally constant and still
depends on the other field variables $p^{\alpha}$. Actually if in
addition to the condition on the space of the field variables we had
also assumed the space of independent variables to be star-shaped
then we would be able to write in (\ref{eq:1st_hom_conc}) $L_{0}(x)=d_{\mu}\int_{0}^{1}t^{n-1}L_{0}(tx)x^{\mu}\,dt$
and hence find that under these more restrictive conditions every
null Lagrangian is a total divergence without the functionally constant
term. But the same result does not extend to the multi-field case
(\ref{eq:1st_hom_multi}) as $L[0,p]$ still depends functionally
on $p$.
\begin{rem*}
The original solution of the total divergence problem (i.e. the current
(\ref{eq:H_cur})) is due to Horndeski \cite{Horn_math}. The explicit
solution (\ref{eq:VT_Lag}) of the inverse variational problem has
been derived by Vainberg \cite{Vain} and Tonti \cite{Ton}. In the
modern context, the contents of this Appendix are considered as part
of the differential calculus on jet spaces \cite{An,Vin2,Kr}.
\end{rem*}

\section{Rigged surfaces\label{sec:Rigged-surfaces}}

\subsection{Elementary definitions}

Here we review the notion of a \emph{rigged }(\emph{hyper}-)\emph{surface}
in a pseudo-Riemannian space. The concept has been originally introduced
by Schouten \cite{Sch-1} and have been employed in GR to great effect
by Mars \cite{Mar} and Mars and Senovilla \cite{MS}. For detailed
discussion and proofs we refer to the latter. Previously the author
has also used rigged surfaces to derive a general set of junction
conditions in GR through a variational principle \cite{Rac}, the
notations introduced here are similar.

Let $M$ be an $n+1$ dimensional smooth manifold with a pseudo-Riemannian
metric $g$. Let $\Sigma\subseteq M$ be a surface with embedding
$i:\Sigma\rightarrow M$. The induced metric is $h:=i^{\ast}g$, the
pullback of $g$ to $\Sigma$. It may happen that $h$ fails to be
a pseudo-Riemannian metric. A \emph{null point} of $\Sigma$ is a
point where $h$ is degenerate. Recall that at any $x\in\Sigma$,
the $T_{x}\Sigma$ is canonically a subspace of $T_{x}M$, hence for
any vector, and more generally, a contravariant tensor, we may ask
whether it is tangent to $\Sigma$ or not. However for the dual tangent
spaces we have $T_{x}^{\ast}\Sigma=T_{x}^{\ast}M/T_{x}^{0}\Sigma$,
where $T_{x}^{0}\Sigma$ is the annihilator of $T_{x}\Sigma$, i.e.
the set of all covectors which vanish on vectors tangent to $\Sigma$.
Elements of $T_{x}^{0}\Sigma$ are called \emph{normal} and this vector
space is one dimensional. It is thus meaningful to ask whether a covector,
or more generally, a covariant tensor, is normal to $\Sigma$.

It should be remarked that contrary to common intuition, it is not
meaningful to talk about a contravariant vector being normal or a
covector being tangent to $\Sigma$, at least not a priori. We may
call a vector $N\in T_{x}\Sigma$ to be normal if $g(N,-)$ is a normal
covector, i.e. $N$ is the metric dual of a normal covector. Let $N_{x}\Sigma\le T_{x}M$
denote the set of all normal vectors at $x\in\Sigma$, clearly $\dim N_{x}\Sigma=1$.
Then $N_{x}\Sigma$ is not necessarily disjoint from $T_{x}\Sigma$.
One may show that \emph{$T_{x}M=N_{x}\Sigma\oplus T_{x}\Sigma$ if
and only if $x$ is not a null point}. Whenever $x$ is a null point,
we have $N_{x}\Sigma\le T_{x}\Sigma$. Thus if $\Sigma$ has null
points, then there is no natural complement to $T\Sigma$ in $\left.TM\right|_{\Sigma}$.

A \emph{rigged surface} $\left(\Sigma,l\right)$ is a surface $\Sigma$
together with a vector field $l\in\Gamma(\left.TM\right|_{\Sigma})$
along $\Sigma$, which is transversal everywhere, i.e. $l_{x}\notin T_{x}\Sigma$
for any $x\in\Sigma$. Let $\left\langle l\right\rangle _{x}\le T_{x}M$
be the subspace spanned by $l$ and $\left\langle l\right\rangle =\bigsqcup_{x\in\Sigma}\left\langle l\right\rangle _{x}$,
in which case we have
\begin{equation}
\left.TM\right|_{\Sigma}=\left\langle l\right\rangle \oplus_{\Sigma}T\Sigma.
\end{equation}
The vector field $l$ is called the \emph{rigging vector field}. Note
that if $M$ and $\Sigma$ are both orientable (which we suppose),
then there is always a global transversal vector field along $\Sigma$,
hence a global rigging. For any fixed choice of rigging, there is
a unique normal covector field $n\in\Gamma(T^{0}\Sigma)$ satisfying
$n(l)=1$.

The rigging is of course not unique, given any rigging $l$, we may
subject it to the transformations
\begin{equation}
l^{\ast}=\alpha l,\quad\bar{l}=l+T,
\end{equation}
where $\alpha$ is a smooth, nowhere vanishing function on $\Sigma$
while $T\in\Gamma(T\Sigma)$ is a vector field along $\Sigma$ tangent
to it. The first kind of transformation preserves the subbundle $\left\langle l\right\rangle $
and only changes the global section representing it, while the second
class of transformations changes the rigging subbundle. However the
latter transformation preserves the normalization of $n$, i.e. $n(\bar{l})=n(l)$.

Corresponding to any choice of rigging is the projection operator
$P:\left.TM\right|_{\Sigma}\rightarrow T\Sigma$ acting as $PX=X-n(X)l$
and its linear adjoint $P^{\ast}:T^{\ast}\Sigma\rightarrow\left.T^{\ast}M\right|_{\Sigma}$,
$(P^{\ast}\omega)(X)=\omega(PX)$ for any $\omega\in\Gamma(T^{\ast}\Sigma)$
and $X\in\Gamma(\left.TM\right|_{\Sigma})$. In addition to the induced
metric $h=i^{\ast}g$ we also define $\lambda:=i^{\ast}l_{\flat}$,
where $l_{\flat}$ is the metric dual of $l$, $\nu:=Pn^{\sharp}$,
where $n^{\sharp}$ is the metric dual of $n$, $\varepsilon(n)=g(n,n)$
and $\varepsilon(l)=g(l,l)$.

Let $X,Y\in\Gamma(T\Sigma)$ be vector fields along $\Sigma$ tangent
to it and $\omega\in\Gamma(T^{\ast}\Sigma)$ a covector field on $\Sigma$.
We define:
\begin{align}
\mathcal{K}(X,Y) & :=\left(\nabla_{X}n\right)(Y),\nonumber \\
\varphi(X) & :=n(\nabla_{X}l),\nonumber \\
\psi(\omega,X) & :=\omega(P\nabla_{X}l),\nonumber \\
D_{X}Y & :=P\nabla_{X}Y.
\end{align}
Then $\mathcal{K}$ is a symmetric type $(0,2)$, $\varphi$ a type
$(0,1)$ and $\psi$ a type $(1,1)$ tensor field on $\Sigma$, while
$D$ is a linear connection. The tensor fields $\mathcal{K},\varphi,\psi$
are the analogues of the extrinsic curvature for a pseudo-Riemannian
surface.

\subsection{Coordinate formulae}

Let $y^{a}$ be a coordinate system on $\Sigma$, $x^{\mu}$ a coordiate
system on $M$ with the inclusion $i:\Sigma\rightarrow M$ described
by the parametric relations $x^{\mu}=x^{\mu}(y)$. Set
\begin{equation}
e_{a}^{\mu}:=\frac{\partial x^{\mu}}{\partial y^{a}}.
\end{equation}
The rigging and the associated normal have components $l^{\mu}$ and
$n_{\mu}$. The set of vectors $\left(l^{\mu},e_{1}^{\mu},\dots,e_{n}^{\mu}\right)$
form a direct frame along $\Sigma$. The corresponding dual frame
is $\left(n_{\mu},\vartheta_{\mu}^{1},\dots,\vartheta_{\mu}^{n}\right)$,
where the $\vartheta_{\mu}^{a}$ are uniquely determined by the duality
relations
\begin{equation}
\vartheta_{\mu}^{a}e_{b}^{\mu}=\delta_{b}^{a},\quad\vartheta_{\mu}^{a}l^{\mu}=0.
\end{equation}
The projection operator is $P_{\ \nu}^{\mu}=\delta_{\nu}^{\mu}-l^{\mu}n_{\nu}$.
A vector field $X$ along $\Sigma$ is tangent if and only if it can
be written in the form $X^{\mu}=X^{a}e_{a}^{\mu}$, hence in mixed
bulk-surface coordinates, the projection operator is described by
the coefficients $\vartheta_{\mu}^{a}$. The induced metric is
\begin{equation}
h_{ab}=e_{a}^{\mu}e_{b}^{\nu}g_{\mu\nu},
\end{equation}
and we have
\begin{equation}
\lambda_{a}=l_{\mu}e_{a}^{\mu},\quad\nu^{a}=\vartheta_{\mu}^{a}n^{\mu}
\end{equation}
with the completeness relations
\begin{align}
g^{\mu\nu} & =\varepsilon(n)l^{\mu}l^{\nu}+\nu^{a}\left(l^{\mu}e_{a}^{\nu}+e_{a}^{\mu}l^{\nu}\right)+h_{\ast}^{ab}e_{a}^{\mu}e_{b}^{\nu},\nonumber \\
g_{\mu\nu} & =\varepsilon(l)n_{\mu}n_{\nu}+\lambda_{a}\left(n_{\mu}\vartheta_{\nu}^{a}+\vartheta_{\mu}^{a}n_{\nu}\right)+h_{ab}\vartheta_{\mu}^{a}\vartheta_{\nu}^{b},
\end{align}
where
\begin{equation}
h_{\ast}^{ab}=g^{\mu\nu}\vartheta_{\mu}^{a}\vartheta_{\nu}^{b},
\end{equation}
and we sometimes write $n_{\ast}^{\mu}=\nu^{a}e_{a}^{\mu}$ and $h_{\ast}^{\mu\nu}=h_{\ast}^{ab}e_{a}^{\mu}e_{b}^{\nu}$.

The extrinsic curvature quantities have the coordinate expressions
\begin{align}
\mathcal{K}_{ab} & =e_{a}^{\mu}e_{b}^{\nu}\nabla_{\mu}n_{\nu},\nonumber \\
\varphi_{a} & =e_{a}^{\mu}n_{\nu}\nabla_{\mu}l^{\nu},\nonumber \\
\psi_{b}^{a} & =e_{b}^{\mu}\vartheta_{\nu}^{a}\nabla_{\mu}l^{\nu},
\end{align}
while for the covariant derivative we have
\begin{equation}
D_{a}X^{c}\equiv X_{||a}^{c}=\partial_{a}X^{c}+\gamma_{\ ab}^{c}X^{b},
\end{equation}
where
\begin{equation}
\gamma_{\ ab}^{c}=e_{a}^{\mu}\vartheta_{\nu}^{c}\nabla_{\mu}e_{b}^{\nu}.
\end{equation}
Let us now smoothly extend $l^{\mu}$ off $\Sigma$ and introduce
coordinates $\left(x^{0},y^{a}\right)$ adapted to $l^{\mu}$ in the
sense that the surface coordinates $y^{a}$ are Lie transported off
$\Sigma$ along $l^{\mu}$. We also extend $n_{\mu}$ such that the
relations $l^{\mu}n_{\mu}=1$ are preserved. Equality in adapted coordinates
is denoted $\overset{\ast}{=}$. Then we have
\begin{align}
l^{\mu} & \overset{\ast}{=}\delta_{0}^{\mu},\quad n_{\mu}\overset{\ast}{=}\delta_{\mu}^{0},\quad n^{0}\overset{\ast}{=}g^{00}\overset{\ast}{=}\varepsilon(n),\quad l_{0}\overset{\ast}{=}g_{00}\overset{\ast}{=}\varepsilon(l),\nonumber \\
n^{a} & \overset{\ast}{=}\nu^{a},\quad l_{a}\overset{\ast}{=}\lambda_{a},\quad g_{ab}\overset{\ast}{=}h_{ab},\quad g^{ab}\overset{\ast}{=}h_{\ast}^{ab},
\end{align}
while for the derivative quantities we have
\begin{equation}
\mathcal{K}_{ab}\overset{\ast}{=}-\Gamma_{\ ab}^{0},\quad\varphi_{a}\overset{\ast}{=}\Gamma_{\ 0a}^{0},\quad\psi_{b}^{a}\overset{\ast}{=}\Gamma_{\ b0}^{a}
\end{equation}
and
\begin{equation}
\gamma_{\ ab}^{c}\overset{\ast}{=}\Gamma_{\ ab}^{c}.
\end{equation}

\subsection{Pseudo-Riemannian limit}

Suppose now that $\Sigma$ is timelike or spacelike. In this case
we can choose $l^{\mu}$ to be the (unique up to sign) unit normal
of $\Sigma$. Then
\begin{equation}
\varepsilon:=\varepsilon(l)=\pm1=\begin{cases}
+1 & \Sigma\text{ is timelike}\\
-1 & \Sigma\text{ is spacelike}
\end{cases},
\end{equation}
and we have
\begin{equation}
n_{\mu}=\varepsilon l_{\mu},\quad\varepsilon(n)=\varepsilon.
\end{equation}
Recall that
\begin{equation}
K_{ab}:=e_{a}^{\mu}e_{b}^{\nu}\nabla_{\mu}l_{\nu}
\end{equation}
is the extrinsic curvature of the surface. With the above made choice
of $l^{\mu}$,
\begin{equation}
h_{\ast}^{ab}=h^{ab},\quad\nu^{a}=\lambda_{a}=0,
\end{equation}
where $h^{ab}$ is the inverse of $h_{ab}$ and furthermore
\begin{equation}
\mathcal{K}_{ab}=\varepsilon K_{ab},\quad\varphi_{a}=0,\quad\psi_{b}^{a}=K_{b}^{a},
\end{equation}
while $D$ reduces to the Levi-Civita connection of $h_{ab}$.

\end{document}